\documentclass[11pt]{article}

\usepackage[margin=1in]{geometry}
\usepackage[utf8]{inputenc} 
\usepackage[T1]{fontenc}    
\usepackage{url}            
\usepackage{booktabs}       
\usepackage{amsfonts}       
\usepackage{nicefrac}       
\usepackage{microtype}      
\usepackage{xcolor}         
\usepackage{natbib}

\usepackage{amsmath, amssymb, amsthm}
\usepackage{thmtools}
\usepackage{caption}

\usepackage{wrapfig}
\usepackage{color}
\usepackage[shortlabels]{enumitem}
\usepackage{algorithm}
\usepackage{xspace}
\usepackage[noend]{algorithmic}

\usepackage{svg}

\usepackage{hyperref}
\hypersetup{
	colorlinks=true,
	linkcolor=blue,
	citecolor=blue,
	filecolor=blue,
	urlcolor=blue
}
\usepackage[nameinlink]{cleveref}
\crefname{section}{\S\!}{\S\!}
\Crefname{section}{\S\!}{\S\!}

\newtheorem{thm}{Theorem}

\newtheorem{lemma}[thm]{Lemma}
\newtheorem{theorem}[thm]{Theorem}
\newtheorem{corollary}[thm]{Corollary}
\newtheorem{assumption}[thm]{Assumption}
\newtheorem{definition}[thm]{Definition}

\crefname{definition}{definition}{definitions}
\Crefname{definition}{Definition}{Definitions}

\newcommand{\algcomment}[1]{\COMMENT{{\color{black!60} #1}}}

\newcommand{\eps}{\varepsilon}

\newcommand{\hX}{\hat{X}}
\newcommand{\tx}{\tilde{x}}
\newcommand{\tX}{\tilde{X}}
\newcommand{\cX}{\mathcal{X}}

\newcommand{\cM}{\mathcal{M}}
\newcommand{\R}{\mathbb{R}}
\newcommand{\Z}{\mathbb{Z}}
\newcommand{\N}{\mathbb{N}}

\newcommand{\cB}{\mathcal{B}}

\newcommand{\cost}{\mathrm{cost}}
\newcommand{\OPT}{\mathrm{opt}}
\newcommand{\opt}{\mathrm{opt}}
\newcommand{\cA}{\mathcal{A}}

\newcommand{\estopt}{\OPT_{\mathrm{est}}}
\newcommand{\Bern}{\mathrm{Ber}}

\newcommand{\Paren}[1]{\left(#1\right)}
\newcommand{\Brac}[1]{\left[#1\right]}
\newcommand{\E}[1]{\mathbb{E}\Brac{#1}}
\newcommand{\rmin}{r_{\min}}

\newcommand{\tZ}{\widetilde{Z}}
\newcommand{\tlambda}{\widetilde{\lambda}}
\newcommand{\bone}{\mathbf{1}}

\newcommand{\atime}{\mathcal{T}}
\newcommand{\aspace}{\mathcal{S}}
\newcommand{\teps}{\tilde{\eps}}
\newcommand{\tdelta}{\tilde{\delta}}

\newcommand{\densealg}{\cA_{\mathrm{dense}}}
\newcommand{\bicrialg}{\cA_{\mathrm{pref}}}
\newcommand{\countingalg}{\cA_{\mathrm{count}}}
\newcommand{\meyersonalg}{\cA_{\mathrm{Meyerson}}}
\newcommand{\dpcoreset}{\cA_{\mathrm{dpcoreset}}}

\newcommand{\polylog}[1]{\mathrm{polylog}\Paren{#1}}
\newcommand{\tO}[1]{\tilde{O}\Paren{#1}}
\newcommand{\dist}{\mathrm{dist}}
\newcommand{\mt}{\mathrm{mt}}

\newcommand{\Expt}{\mathsf{Game}}

\allowdisplaybreaks

\renewcommand{\setminus}{\smallsetminus}

\newcommand{\variantalgo}{\Cref{alg:meyerson-main}$'$\xspace}

\title{Online Differentially Private Consistent Clustering}

\author{%
Edith Cohen\thanks{Google Research and Tel Aviv University. \texttt{edith@cohenwang.com}}
  \and
Vadym Doroshenko\thanks{Google. \texttt{dvadym@google.com}}
\and
Badih Ghazi\thanks{Google Deepmind. \texttt{badihghazi@gmail.com}}
\and
Pritish Kamath\thanks{Google Research. \texttt{pritish@alum.mit.edu}, \texttt{alexanderknop@google.com}, \texttt{ravi.k53@gmail.com}, \texttt{ethanleeman@google.com}, \texttt{pasin@google.com}, \texttt{adamsealfon@google.com}, \texttt{marikaswanberg@google.com}}
\and
Alexander Knop\footnotemark[4]
\and
Ravi Kumar\footnotemark[4]
\and
Ethan Leeman\footnotemark[4]
\and
Pasin Manurangsi\footnotemark[4]
\and
Adam Sealfon\footnotemark[4]
\and
Marika Swanberg\footnotemark[4]
}

\date{\today}

\begin{document}

\maketitle

\begin{abstract}
    We study differentially private (DP) $k$-means 
    and $k$-median clustering in the online streaming setting.  In this model, points arrive sequentially, 
    and at each time step, we need to output a set of $k$ centers that optimizes the clustering 
    objective for all points seen so far. 
    We give a generic reduction that transforms the (sensitive) input stream into a private stream, which is a {\em semi-coreset} of the input stream.
    This implies that any (non-private) online clustering algorithm, run as a post-processing step, can achieve good utility for the original clustering objective.
    Our algorithm matches or improves upon the approximation ratio, 
    space usage, and running time of existing algorithms \citep{EMZ23,TourHS24}. 
    A key aspect of our reduction is that it inherits desirable properties of the underlying 
    non-private clustering algorithm, such as {\em consistency}~\citep{LattanziV17}---a property
    not satisfied by previous DP algorithms.
\end{abstract}

\section{Introduction}

Differential Privacy (DP)~\citep{DworkKMMN06,dwork2006calibrating} 
has emerged as the mathematically rigorous, widely used gold standard for privacy-preserving large-scale data analysis. 
By injecting carefully calibrated statistical noise into computations, DP provides a formal, quantifiable guarantee that limits the information an adversary can infer about any single data point, making it a foundational tool for government agencies and technology companies alike (e.g.,  \citet{abowd2018us,erlingsson2014rappor}).

Within the domain of large-scale data analysis, clustering remains one of the most basic unsupervised learning tasks. 
In this work, we focus on the \emph{$(k, p)$-clustering objective}: 
given a dataset $X = (x_1, \dots, x_n) \in (\R^d)^n$, the goal is to output a set $C$ of $k$ centers that minimizes $\cost^p_X(C) := \sum_{i \in [n]} \dist(x_i, C)^p$, where $\dist(x_i, C)$ denotes the (Euclidean) distance from $x_i$ to its closest point in $C$. 
When $p = 1$ (resp., $2$), this corresponds to the \emph{$k$-median} (resp., \emph{$k$-means}) clustering, which are among the most well-studied objectives. 
Since finding the exact minimizer is NP-hard, clustering algorithms focus on providing a small multiplicative 
\emph{approximation ratio}~\citep{Matousek00,KanungoMNPSW04,HarPeledM04,AwasthiCKS15,LeeSW17,CohenAddadS19,FriggstadRS19,AhmadianNSW20}. We assume throughout that $p \geq 1$ is a fixed constant and we will omit the dependency on $p$ in notations and bounds for brevity.

Traditionally, clustering algorithms were designed for static datasets, where $x_1, \dots, x_n$ are processed
simultaneously. 
However, in modern applications, continuous data generation and collection are prevalent. 
Online clustering is a natural task that arises in analyzing dynamically changing data streams. 
In the streaming (a.k.a. continuous release) model, the data points $x_1, \dots, x_n$ arrive sequentially and after the arrival of each $x_t$, the algorithm must return a set $C_t$ of $k$ centers for all the data points seen so far, i.e., $x_1, \dots, x_t$. 
Non-private algorithms with small approximation ratios and low space complexity are known in this model, e.g.,~\citep{GuhaMMO00,CharikarOP03,HarPeledM04}.

\paragraph{DP Clustering.}
Given the significance of clustering, a vast body of literature has explored DP clustering algorithms~\citep{blum2005practical,GuptaLMRT10,BalcanDLMZ17,feldman2017coresets,NissimSV16,NissimS18,huang2018optimal,Stemmer20,GhaziKM20,ChangGKM21,GhaziHK0MNV23,TourHS24}. 
In particular, significant progress has been made in the static setting, where efficient DP clustering algorithms with constant approximation ratios and nearly-optimal additive errors are known~\citep{GhaziKM20,TourHS24}.
Unfortunately, the same cannot be said in the streaming model. 
Specifically, while \citet{TourHS24} proposed a DP 
online $(k, p)$-clustering algorithm with nearly optimal approximation ratios and additive errors, their algorithm is inefficient, requiring super-polynomial time and space. 
Meanwhile, \citet{EMZ23} proposed an efficient algorithm---based on the merge-and-reduce framework for coresets~\citep{AgarwalHV04,FeldmanSS20}---but with suboptimal additive errors in terms of the dimension $d$. 
Furthermore, neither algorithm is designed to preserve desirable properties (such as consistency) of their underlying 
non-private counterpart, as explained next.

\paragraph{Consistency in Online Clustering.}
While online clustering algorithms must adapt to changing data, this can cause a volatile output sequence. 
For example, if an algorithm blindly recalculates cluster centers based on $x_1, \dots, x_t$ for every $t \in [n]$, the cluster centers can vary wildly even
across consecutive time steps. 
The need for a temporally stable output sequence motivated
the definition of consistency~\citep{LattanziV17}: 
An output sequence $(C_1, \dots, C_n)$ is said to be \emph{$m$-consistent} if $\sum_{t \in [n]} |C_t \setminus C_{t - 1}| \leq m$ where we use the convention $C_0 = \emptyset$, i.e., the total number of centers removed/added throughout the entire sequence is at most $m = m(n)$.  Consistency ensures that the clusters do not change unpredictably at every time step. 
Maintaining consistency is preferred because temporal  instability can be undesirable in many real-world scenarios. 

Since the notion was formalized, many consistent clustering algorithms have been proposed in the literature~\citep{LattanziV17,GuoKLX21,FichtenbergerLN21,ChanJWZ25}. In particular, \citet{FichtenbergerLN21} devised a non-private algorithm for clustering in arbitrary metric space that achieves $O(1)$-approximation 
and $\tilde{O}(k)$-consistency (which is known to be nearly optimal~\citep{LattanziV17}).
Unfortunately, as mentioned earlier, neither of the previous DP online clustering algorithms~\citep{TourHS24,EMZ23} provides non-trivial guarantees on the consistency of their output sequence.


\subsection{Our Contributions}

Our primary contribution is a novel generic reduction
that transforms a sensitive stream to a privatized stream that approximately preserves the clustering objective value on any input prefix.  As a key corollary, we obtain the first DP algorithm that has a nearly optimal consistency guarantee. Furthermore, our algorithm obtains a constant approximation ratio, nearly optimal additive error in terms of the dimension $d$, and almost linear time and sublinear space for $d, k = n^{o(1)}$, as stated below.

\begin{theorem}[Informal version of \Cref{thm:dp-consistent-main}]
There is an $(\eps, \delta)$-DP online $(k, p)$-clustering algorithm that, given a stream $X = (x_1, \dots, x_n)$ of points, outputs a stream $C = (C_1, \dots, C_n)$ of center multisets such that for each $t \in [n]$, $|C_t| = k$ and with high probability, $C_t$ is an approximate solution for $x_1, \ldots, x_t$ with constant approximation ratio and $\frac{\sqrt{d}}{\eps} \cdot \Paren{k \log(nd/\delta)}^{O(1)}$-additive error. Moreover, the output is $\Paren{k \cdot \polylog{nd}}$-consistent.

Furthermore, when $k, d, \log(1/\delta) \leq n^{o(1)}$, the algorithm runs in $n^{1 + o(1)}$ time over the entire stream and uses $n^{o(1)}$ space.
\end{theorem}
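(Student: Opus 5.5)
The plan is to prove the theorem as a composition of a private stream transformation with a non-private consistent clustering algorithm applied as post-processing. The first and main component is a DP mechanism that reads $x_1,\dots,x_n$ and emits a weighted stream $\tX$ that is a \emph{semi-coreset} for every prefix. Concretely, with high probability, simultaneously for all $t$ and all center sets $C$ of size $k$,
\[
\cost_{X_{\le t}}(C) \le O(1)\cdot \cost_{\tX_{\le t}}(C) + \mathrm{add}
\quad\text{and}\quad
\cost_{\tX_{\le t}}(C) \le O(1)\cdot \cost_{X_{\le t}}(C) + \mathrm{add},
\]
where $\mathrm{add} = \frac{\sqrt d}{\eps}(k\log(nd/\delta))^{O(1)}$.

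Given this, the reduction is immediate. We run the algorithm of \citet{FichtenbergerLN21}, which is $O(1)$-approximate and $\tilde O(k)$-consistent, on $\tX$. Since it only ever sees $\tX$, the DP guarantee follows by post-processing. The approximation guarantee follows by chaining: the cost of $C_t$ on $X_{\le t}$ is at most $O(1)$ times its cost on $\tX_{\le t}$ plus $\mathrm{add}$. That is at most $O(1)\cdot\OPT(\tX_{\le t})$, and by the other direction this is at most $O(1)\cdot\OPT(X_{\le t})$ plus $O(\mathrm{add})$. For consistency to carry over, $\tX$ itself must have length $\tilde O(k)$ up to polylog factors in a suitable sense, or at least be presented so that the consistency bound of the non-private algorithm (which depends on $\log$ of the length and on the aspect ratio) stays $k\,\polylog{nd}$. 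I would ensure this by discretizing to a grid of resolution $1/\mathrm{poly}(nd)$, which bounds the aspect ratio.

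To build $\tX$, I would follow an online, Meyerson-style facility-location sketch combined with private counting. First, I would randomly project (Johnson--Lindenstrauss) to $O(\log(nk))$ dimensions, or alternatively use a hierarchical grid/quadtree with random shifts. The goal is that, at each level $\ell$ (cell side $2^{\ell}$), the points fall into cells that serve as candidate "facilities". Next, for each cell, I maintain a continual-release counter via the binary tree mechanism, which gives additive error $\polylog{n}/\eps$ per count. I then release a cell's center with weight equal to its noisy count once that count crosses a threshold of about $\polylog{n}/\eps$ times $k$-dependent factors. Only cells that are heavy are released; the points in light cells are charged to the additive error, since there are only $\tilde O(k)$ relevant heavy regions per level at the optimal scale. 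To control the number of cells touched, I would restrict to sparse-vector-style above-threshold tests. The dependence on $\sqrt d$ arises when the low-dimensional heavy cells are mapped back to $\R^d$: each released representative is a privately averaged point, obtained as a noisy sum with Gaussian noise of scale $\sqrt d/\eps$ divided by a count of at least the threshold size.

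The main obstacle is proving the two-sided prefix guarantee uniformly over all $t$ while keeping the stream $\tX$ short and the privacy budget bounded, since a point affects counters at all $O(\log(nd))$ levels and all future times. I would handle the privacy side by composition over levels plus the tree mechanism for each counter. For utility, the first step is to fix $t$ and an optimal solution $C^*_t$. I would then bound the mass of points in cells that are far from $C^*_t$ relative to their scale, using the standard quadtree distortion argument (expected $O(\log)$ stretch, improved to $O(1)$ via the Meyerson-style opening probability proportional to distance). Finally, a union bound over $t\in[n]$ completes the argument. Time and space then follow because only $\tilde O(k)$ cells per level are ever heavy, and the number of light cells stored can be capped by hashing into $n^{o(1)}$ buckets.
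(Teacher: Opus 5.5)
Your outer reduction matches the paper's: build a DP stream $\tX$ that is a semi-coreset of every prefix, run the \citet{FichtenbergerLN21} algorithm on it as post-processing (with grid snapping to bound the aspect ratio), and chain the guarantees as in \Cref{lem:coreset-approx}. Two small corrections on that part. For consistency, $\tX$ does not need to be short: the bound depends only polylogarithmically on its length, which is $O(n)$. For the $n^{o(1)}$ space bound, what you actually need is that $\tX$ has only $n^{o(1)}$ \emph{distinct} points.

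The gap is the semi-coreset construction, which carries all of the theorem's content. As described, it delivers neither the $O(1)$ multiplicative factor nor the $\frac{\sqrt d}{\eps}(k\log(nd/\delta))^{O(1)}$ additive error.
\begin{itemize}
\item After projecting to $D=\Theta(\log(nk))$ dimensions, a grid cell of side $s$ has diameter $s\sqrt D$, and a randomly shifted grid cuts a ball of radius $r$ with probability up to $O(Dr/s)$. Snapping points at distance about $r$ from their optimal center to a heavy cell's representative therefore loses a $\mathrm{poly}(D)=\polylog{n}$ factor, on top of the hierarchical stretch you mention. That is not $O(1)$.
\item If you shrink cells to side $r/\sqrt D$ to avoid this loss, one optimal ball of radius $r$ meets $2^{\Omega(D)}=n^{\Omega(1)}$ cells, each possibly light. ``Charging light cells to the additive error'' then costs $n^{\Omega(1)}$ times the threshold.
\item The claim that only $\tilde O(k)$ cells per level are heavy is false. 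For example, $n/T$ far-apart tight groups of $T$ points give $n/T$ heavy cells at every fine level. This count multiplies both the per-representative averaging error $\approx\sqrt d/\eps$ and the number of distinct points passed to the non-private algorithm.
\item Hashing light cells into $n^{o(1)}$ buckets does not rescue this. A heavy bucket certifies no dense region, and creating counters only for touched cells leaks occupancy unless you add stability thresholds you have not specified.
\end{itemize}

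Your parenthetical ``improved to $O(1)$ via the Meyerson-style opening probability'' is where the real argument lives, but your construction has no such sampling step. Meyerson's sketch also opens raw input points, which cannot be released. The missing idea is the paper's private Meyerson sketch (\Cref{alg:meyerson-main}):
\begin{itemize}
\item Sample $x_t$ with probability $\min\{1,\frac{\tau kL}{\estopt}\dist(x_t,R_{\le t-1})^p\}$. The proxy $R$ is the output of a DP dense-ball algorithm run on the private sample $S$; this algorithm is iterated DP 1-Cluster of \citet{GhaziKM20}, which is where $\sqrt d$ enters.
\item Since $\E{|S|}=O(\tau kL(1+\OPT/\estopt))$, $R$ is recomputed only when a continual counter reports that $|S|$ grew by $\tau$. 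That is $\Psi=O(k\log n)$ calls on inputs of size at most $\Psi\tau$, so composition costs only a factor $\Psi$.
\item Meyerson's charging argument survives: once $3\tau$ samples land in a ring $P_{i,\ell}$, at least $\tau$ of them were present at the last dense-ball call. Coverage then puts a point of $R$ within $3\cdot 2^\ell\rmin$ of $c^*_i$.
\end{itemize}
This gives an $(O(1),O(\estopt+\tau kL))$-prefix-approximation with $O(\Psi^2)$ released points. Guessing $\estopt$ among $O(\log n)$ powers of two, plus repetition, makes it hold for all $t$. Finally, snapping each $x_t$ to its nearest released point, with a continual counter per released point, yields via \Cref{lem:transport_coreset} a semi-coreset with $k^2\polylog{n/\gamma}$ distinct points and additive error $\tO{\sqrt d\,k^{2.5}/\eps}$.
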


As mentioned above, both the dependency of $\sqrt{d} \cdot \polylog{d}$ in the additive error, and the consistency bound $k \cdot \polylog{nd}$ are nearly optimal, matching the known lower bounds~\citep{PeterTU24,LattanziV17} up to polylogarithmic factors. A detailed comparison between our algorithm and those in \cite{TourHS24,EMZ23} can be found in \Cref{tab:dp_clustering_comparison}.

\begin{table}[t]
    \centering
    \caption{
    Comparison of online $(\eps,\delta)$-DP clustering algorithms. For brevity, we hide $\polylog{nd/\delta}$ terms in all bounds and assume $\eps = \Theta(1)$. 
    All algorithms have constant approximation ratio. 
    While \cite{TourHS24} do not specify the complexity of their algorithm, the bottleneck in their algorithm is the use of an $O(1/n)$-net in $\R^{O(\log k)}$, which requires $n^{\Theta(\log k)}$ time and space. 
    We note however that their algorithm supports stream with both insertions and deletions, whereas \cite{EMZ23} and ours only support insertions.  Note that $o_{k, d}(n)$ consistency cannot be achieved for streams with both
    insertions and deletions even non-privately and when $O_{k,d}(1)$ additive error is allowed.} 
    \label{tab:dp_clustering_comparison}
    \begin{tabular}{l c c c c}
        \toprule
        \textbf{Algorithm} & \textbf{Additive Err.} & \textbf{Space} & \textbf{Time} & \textbf{Consistency} \\
        \midrule
        \cite{TourHS24} & $k^{O(1)} \sqrt{d}$ & $n^{O(\log k)} d^{O(1)}$ & $n^{O(\log k)} d^{O(1)}$ & --- \\
        \cite{EMZ23} & $k^{2.5} d^{3.51}$ & $(kd)^{O(1)}$ & $n (d k)^{O(1)}$ & --- \\
        This paper (\Cref{thm:dp-consistent-main}) & $k^{2.5} \sqrt{d}$ & $(kd)^{O(1)}$ & $n d k^2 + (dk)^{O(1)}$ & $k$ \\
        \bottomrule
    \end{tabular}
\end{table}
We now state our generic reduction. 
Recall that for a given multiset $X$, a semi-coreset $\tilde{X}$ is
its ``compressed'' version such that for any set of cluster centers,
the clustering objective on $X$ and $\tilde{X}$ are close to each other, up to a multiplicative error in the approximation factor and an additive error in the cost 
(see \Cref{def:semi-coreset}.) Our reduction is an online algorithm that produces an output stream $\tX = (\tx_1, \dots, \tx_n)$ of points%
\footnote{
As stated our algorithm outputs a stream of \emph{multisets} of points rather than individual points, i.e., multiple points may be released in each step. See \Cref{thm:main-semi-coreset} for the formal statement. We note that it is possible to modify the reduction so that at most one point is released in each step, but this distinction is not important in downstream applications.
} 
such that, for any $t \in [n]$, the output prefix $\tx_1, \dots, \tx_t$ is a semi-coreset of the input prefix $x_1, \dots, x_t$, as stated below; furthermore, our algorithm is DP.  
\begin{theorem}[Informal version of \Cref{thm:main-semi-coreset}] \label{thm:coreset-informal}
    There is an $(\eps, \delta)$-DP algorithm that, given a stream $X = (x_1, \dots, x_n)$, 
    produces another stream $\tX = (\tx_1, \dots, \tx_n)$ such that, with high probability,
    $\{\tx_1, \dots, \tx_t\}$ is a semi-coreset for $\{x_1, \dots, x_t\}$ with constant multiplicative error and $\frac{\sqrt{d}}{\eps} \cdot \Paren{k \log(nd/\delta)}^{O(1)}$-additive error 
    for all $t \in [n]$.
    
    Furthermore, when $k, d, \log(1/\delta) \leq n^{o(1)}$, the algorithm runs in $n^{1 + o(1)}$ time over the entire stream and uses $n^{o(1)}$ space.
\end{theorem}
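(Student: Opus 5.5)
We would prove \Cref{thm:coreset-informal} by combining a non-private online ``bicriteria'' summarization (in the spirit of Meyerson's online facility location, suggested by the macro $\meyersonalg$) with a private routine that releases points only in \emph{dense} regions. The plan has four layers.

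\emph{Step 1 (guessing the scale).} Run, in parallel, copies of the algorithm indexed by a guess $\Lambda \in \{2^i\}$ of the optimum cost. This gives $O(\log(nd))$ copies when points lie in a bounded grid; we reduce to that case by the standard discretization $[-\Lambda_{\max},\Lambda_{\max}]^d$. At each time only the copy whose guess is within a constant factor of the current optimum is ``active'', so its output is charged.

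\emph{Step 2 (private online bucketing).} For a fixed guess, impose randomly shifted hierarchical grids, or a locality-sensitive partition of a JL-projected space of dimension $O(\log k)$, at scales $r = (\Lambda/n)^{1/p}\cdot 2^j$. Each arriving $x_t$ is routed to its cell. For each cell we maintain a private continual counter $\countingalg$ (binary-tree mechanism) with noise $\polylog{n}/\eps$. Whenever the noisy count of a cell crosses a threshold $\tau = \Theta(\polylog{nd/\delta}/\eps)$ times a multiple, we release a representative point of the cell, computed privately. Space stays small because only $\tilde O(k)$ cells can become heavy at each scale; the others are tracked by sparse-vector/heavy-hitter style sketches ($\densealg$), which give sublinear space.

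\emph{Step 3 (accuracy).} Let $\tilde x$ denote the released points, weighted by the count increments. We argue that for any center set $C$, the quantity $\cost_X(C)$ is, up to constants, $\cost_{\tX}(C)$ plus the cost of moving each point to its cell representative plus the mass of light cells. Moving costs are bounded by the cell diameter, which via the bicriteria solution is $O(\opt)$ summed over points; this is the generalized triangle inequality $\dist(x,C)^p\le 2^{p-1}(\dist(x,y)^p+\dist(y,C)^p)$. The mass of light cells is at most $\tilde O(k)\cdot\tau$ points per scale, each at distance at most the scale's radius, contributing additive error $\tau\cdot k^{O(1)}\cdot\sqrt d$ after the representative noise. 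The $\sqrt d$ arises from Gaussian noise in the private averaging: in $d$ dimensions the norm is $\sqrt d\,\sigma$. A union bound over all $t$ and all cells, using the high-probability guarantees of the tree counters, gives the bound simultaneously for every prefix.

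\emph{Step 4 (privacy).} Each $x_t$ affects one cell per scale, per guess, and per tree node. Basic or advanced composition over $\polylog{nd}$ mechanisms therefore gives $(\eps,\delta)$-DP after rescaling. The released stream is a function of the noisy counters and noisy means only, so it is private.

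The main obstacle is Step 3's claim that online thresholding loses only an additive $\tilde O(k\tau)$-type error \emph{at every prefix simultaneously}, while the active scale changes over time. Points summarized under a coarse, early guess must remain valid once $\opt$ grows. We would handle this by making scales monotone: once a copy is retired, its released points are kept. We would then show that retired summaries cost at most a geometric series $O(\opt_t)$, which is the delicate charging argument.
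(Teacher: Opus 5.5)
Your proposal has a genuine gap exactly where you place the main obstacle. You never give the argument that an online, privately thresholded summary has small cost at \emph{every} prefix, and the geometric-series charging over ``retired'' guesses is only promised. That argument is the heart of the theorem. In the paper (\Cref{lem:bicriteria-approx}), Meyerson sampling uses $\lambda_t = \min\{1, \tfrac{\tau kL}{\estopt}\dist(x_t, R_{\le t-1})^p\}$, measured against the \emph{private} proxy set $R$ returned by the dense-ball routine. The analysis is per ring $P_{i,\ell}$ around an optimal center $c^*_i$. Once $3\tau$ points of a ring have been sampled, the $(\tau/2)$-accurate counter forces the most recent dense-ball call to have seen at least $\tau$ of them, so coverage puts a point of $R$ within $O(2^\ell \rmin)$ of $c^*_i$. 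This bounds both $\E{\sum_t \dist(x_t, R_{\le t-1})^p}$ and $\E{|S|}$. Hence the hard caps $|S_t| < \Psi\tau$ and $j \le \Psi$ do not bind with constant probability; these caps keep dense-ball inputs within capacity and limit privacy loss to at most $\Psi = O(k\log n)$ compositions. The unknown scale is not handled by declaring one copy ``active'', which is a data-dependent choice you would have to make privately. Instead, the paper draws a uniformly random guess among $O(\log n)$ values, repeats $O(\log n \cdot \log(n/\gamma))$ times, and releases the union of all candidate centers; over-inclusion is harmless because these are only candidates.

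The bucketing in Steps 2--3 also does not deliver what you need.
\begin{itemize}
\item In $\R^d$ a cell of side $r$ has diameter $r\sqrt d$, and a random shift separates points at distance $\Delta$ with probability up to about $\sqrt d\,\Delta/r$.
\item The number of cells within distance $r$ of the $k$ centers grows super-polynomially in $d$, not $\tilde{O}(k)$.
\item Projecting to $O(\log k)$ dimensions preserves $k$-clustering costs but not pairwise distances. A projected cell can therefore contain points far apart in $\R^d$, and its private mean need not be close to them.
\end{itemize}
So neither ``moving cost $O(\opt)$'' nor ``light mass $\tilde{O}(k)\tau$ per scale'' holds with a constant multiplicative factor. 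The paper avoids grids altogether by building the dense-ball routine from DP 1-Cluster (\Cref{thm:dense-ball-main}).

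Further, weighting representatives by per-cell count increments charges each point once per scale and per guess, which inflates the total mass. Each point must be charged to exactly one released point. The paper does this with a separate pass (\Cref{alg:coreset-main}) that snaps $x_t$ to its nearest already-released center and keeps one continual counter per center, which is affordable by \Cref{lem:parallel}. \Cref{lem:transport_coreset} then converts the prefix-approximation into a semi-coreset.

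Finally, re-releasing a cell's representative each time its count crosses a new multiple of $\tau$ reuses earlier points up to $n/\tau$ times. Your Step 4 is therefore not a composition over only $\polylog{nd}$ mechanisms unless that reuse is bounded.
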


Since DP is preserved under post-processing, we can run any \emph{non-private} online clustering algorithm on the output $\tX$. Indeed, our DP consistent clustering algorithm (\Cref{thm:dp-consistent-main}) simply runs the (non-private) algorithm by~\cite{FichtenbergerLN21} as a post-processing step on $\tX$.

\subsection{Technical Overview}
\label{subsec:overview}


\paragraph{(Non-Private) Meyerson's Sketch.}
The starting point of our work is the so-called \emph{Meyerson's sketch}~\citep{meyerson2001online}. To describe the sketch, it is simpler to consider the setting where, instead of keeping a semi-coreset, we only aim for an \emph{end-stream bi-criteria} approximation. Here, we need to output a set $S$ of points at the end of the input stream $X = (x_1, \dots, x_n)$ such that
(i) $\cost^p_X(S)$ is at most $O(1)$ times the optimum and 
(ii) $|S| \leq O(k \log n)$.
We use the nomenclature ``bi-criteria'' since we allow $S$ to be (slightly) larger than $k$.  Standard techniques exist  for turning such an end-stream bi-criteria approximation into an end-stream 
semi-coreset, even with DP; see \Cref{subsec:bicriteria-to-coreset} for details.
\begin{wrapfigure}{r}{0.41\textwidth}
\small
\hrule height 0.8pt
{
\captionsetup{type=algorithm,labelfont=bf,textfont=normalfont}
\captionof{algorithm}{Meyerson's Sketch \citep{meyerson2001online}}
}
\label{alg:meyerson-nonprivate}
\vspace{-0.6em}
\hrule
\begin{algorithmic}[1]

\makeatletter
\renewcommand{\ENDFOR}{\end{ALC@for}}
\makeatother

\REQUIRE Stream $X = (x_1, \dots, x_n)$
\item[\textbf{Parameters:}] 
 Estimated optimum $\estopt$
\STATE $S = \emptyset$
\FOR{$t = 1, \dots, n$}
\STATE $\lambda_t \gets \min\left\{1, \frac{k \log n}{\estopt} \cdot \dist(x_t, S)^p \right\}$ \\
\algcomment{If $S = \emptyset$, let $\dist(x_t, S) = \infty$}
\STATE Sample $Z_t \sim \Bern(\lambda_t)$
\IF{$Z_t = 1$} 
\STATE $S \gets S \cup \{x_t\}$ 
\ENDIF
\ENDFOR
\STATE Output $S$
\end{algorithmic}
\hrule
\end{wrapfigure}
We now describe the Meyerson sketch.  Initially, $S = \emptyset$. After a point $x_t$ arrives, we compute its distance to its closest point in $S$, denoted by $\dist(x_t, S)$. Then, with probability $\lambda_t$, we add it to $S$; otherwise, we simply discard it.  Here, $\lambda_t$ is set based on $\dist(x_t, S)$ so that the larger the distance, the more likely it is to be included in $S$. The full description is in \Cref{alg:meyerson-nonprivate}, which also uses an estimate (i.e.,  ``guess'') of the optimum value $\estopt$. Standard analysis shows that if $\estopt$ is within a constant factor of the true optimum, then the output $S$ is a bi-criteria approximation with constant probability~\citep{meyerson2001online,LattanziV17,FichtenbergerLN21}.

\paragraph{Privacy via DP Dense Balls.} 
A naive attempt to make Meyerson's sketch (in fact, any sampling-based sketch) DP faces the following issue: The output contains (raw) input points, rendering it blatantly non-private.  
To overcome this challenge, we do \emph{not} return the set $S$ directly. Instead, we maintain another 
set $R$, which serves as a proxy for $S$ but is DP, and use the distance $\dist(x_t, R)$ instead of $\dist(x_t, S)$ in each step. Our main observation is that the analysis of the Meyerson's sketch is quite flexible: as long as $R$ contains a ``representative point'' in every ``dense ball'' of $S$, it is already sufficient to obtain a bi-criteria approximation (with some small additive error) while still ensuring that $S$ remains not ``too large''. We formulate this requirement for the proxy set $R$
as the \emph{dense ball} problem (\Cref{def:dense-ball}) and show that known algorithms from the literature~\citep{NissimSV16,NissimS18,GhaziKM20} can be adapted to solve this problem privately. 

With this formulation, we then compute $R$ by running a DP dense ball algorithm on $S$. However, doing so at every time step $t \in [n]$ would mean we need to apply the DP composition theorem on the privacy budget across all the steps, resulting in excessive error (that is polynomial in $n$). To overcome this, we observe that since $S$ remains small, it suffices to invoke the DP dense ball algorithm only when $S$ is updated.  However, we cannot naively perform this check since $S$ contains sensitive data; hence, we use the \emph{continual counting}  mechanism~\citep{DworkNPR10,ChanSS11} to provide a DP  estimate of $|S|$ after every time step. We then invoke
the DP dense ball algorithm on $S$ only when the (privately estimated) size of $S$ is sufficiently larger than what it was when the DP dense ball was last invoked.  This concludes our high-level technical overview.

Sampling is the most popular method for finding a (semi-)coreset for (non-private) clustering problems and beyond (e.g.,  \cite{Chen09,FeldmanL11,FeldmanSS20,CohenAddadSS21}).  We believe our work is the first to naturally integrate sampling-based algorithms with DP, and hope that our techniques will be useful for future research on this topic.

\section{Preliminaries}
\label{sec:prelim}

For any metric space $\cM = (U, \dist)$, 
%
let $\cB(u, r)$ denote the \emph{ball} of radius $r$ centered at $u$, i.e., 
$\cB(u, r) := \{u' \in U \mid \dist(u, u') \leq r\}$.
For any $S \subseteq U$, let $\dist(u, S) = \inf_{u' \in S} \dist(u, u')$.

Throughout this work, we assume that we work with the Euclidean space\footnote{Our DP Meyerson's sketch (\Cref{alg:meyerson-main}) works for any metric space as long as there is an efficient DP dense ball algorithm for that space.
However, we are only aware of existing work for the Euclidean space.} with unit diameter\footnote{The bounded diameter assumption is necessary to obtain finite additive error~\citep{NissimSV16}.
By scaling, our algorithm works for any diameter bound $\Delta$, incurring a multiplicative factor of $\Delta^p$ in the additive error.}.
%
\begin{assumption}
$\cM$ is the Euclidean space in $\R^d$ with $U = \cB\Paren{0, \frac{1}{2}}$.
\end{assumption}


A \emph{stream} (of points)
is a sequence $X = (x_1, \dots, x_n)$, where each $x_t$ belongs
to $U \cup \{\perp\}$.
Here $\perp$ should be interpreted as ``no point''; for convenience, we abuse the notation and write $\dist(\perp, S) = 0$ for any set $S \subseteq U$. For every $t \in [n]$, we write $X_{\leq t}$ to denote the multiset $\{x_i \mid i \in [t], x_i \ne \perp\}$.

We also define a \emph{multiset stream} $Y = (y_1, \dots, y_n)$, where each $y_i$ is a multiset of $U$.
Again, we write $Y_{\leq t}$ to denote the multiset $\bigcup_{i \in [t]} y_i$. The \emph{size} of $Y$ is $|Y_{\leq n}|$.

An \emph{online algorithm} $\cA$ is an algorithm that receives the points in the stream $X = (x_1, \dots, x_n)$ one point per time step, and, immediately after receiving $x_t$ at time step $t$, it must output $o_t$. The final output released over the execution is $\cA(\cX) = (o_1, \dots, o_n)$. 

\subsection{\boldmath (Online) \texorpdfstring{$(k, p)$}{(k, p)}-Clustering}
\label{subsec:prelim-online-clustering}

\paragraph{Static Clustering.}
For a given multiset $X = \{x_1, \dots, x_n\} \subseteq U$ of input points, the cost for a set $C \subseteq U$ of centers is given by $\cost^p_X(C) := \sum_{i \in [n]} \dist(x_i, C)^p$, where $p \geq 1$ is a constant.
In the \emph{$(k, p)$-clustering problem}, the goal is to find a set $C$ of $k$ centers that minimizes $\cost^p_X(C)$.
It will be convenient to define the optimum as $\OPT^{k,p}_X := \min_{C \in \binom{U}{k}} \cost^p_X(C)$.

\paragraph{Online Clustering.}
For a given stream $X = (x_1, \dots, x_n)$, a
multiset stream $C = (C_1, \dots, C_n)$ is an \emph{$(\alpha, \beta)$-approximation} if
\begin{align*}
    \cost^p_{X_{\leq t}}(C_t) \leq \alpha \cdot \OPT^{k,p}_{X_{\leq t}} + \beta & &\forall t \in [n].
\end{align*}

An \emph{online $(k, p)$-clustering algorithm} takes in an input stream $X = (x_1, \dots, x_n)$ and outputs a multiset stream $C = (C_1, \dots, C_n)$ such that $|C_t| = k$ for all $t\in [n]$.
We say that it is an \emph{$(\alpha, \beta)$-approximation} algorithm for online $k$-clustering if for every input stream, the output
stream of the algorithm 
is an $(\alpha, \beta)$-approximation
(with probability $1 - \gamma$).

\paragraph{Prefix-Approximation.}
For our Meyerson's sketch construction, it will be helpful to define the following notion.
We say that a multiset stream 
$Y = (y_1, \dots, y_n)$ is an \emph{$(\alpha, \beta)$-prefix-approximation} (w.r.t. $(k, p)$-clustering) of a stream $X = (x_1, \dots, x_n)$ if 
\begin{align*}
    \sum_{t \in [n]} \dist(x_t, Y_{\leq t})^p \leq \alpha \cdot \OPT^{k, p}_{X_{\leq n}} + \beta.
\end{align*}
Note that prefix-approximation is stronger than requiring $Y_{\leq n}$ to be an $(\alpha, \beta)$-approximation of $X_{\leq n}$, since
prefix-approximation computes the cost using $\dist(x_t, Y_{\leq t})$.
In other words, we only allow $x_t$ to be connected to the point closest to it only up until that point in the stream (or, equivalently, the points in $y_{t + 1}, \dots, y_n$ cannot be used).
On the other hand, we do not restrict the size of $Y$ (which will be stated explicitly in the theorem statements, e.g.,  \Cref{thm:combined-bicriteria-high-prob}).

\subsubsection{Semi-Coreset}

Next, we recall the definition of a {\em semi-coreset} \citep{EMZ23}. Unlike a standard coreset~\citep{HarPeledM04},  which provides a purely multiplicative approximation,  a
semi-coreset allows additional additive error terms.

\begin{definition}[Semi-Coreset~\citep{EMZ23}] \label{def:semi-coreset}
    A multiset $\tX$ is a \emph{$(\kappa, \eta_1, \eta_2)$-semi-coreset} for $(k, p)$-clustering of $X$ if, for all $C \in \binom{U}{k}$, we have
    \[\textstyle
    \frac{1}{\kappa} \cdot \cost^p_X(C) - \eta_1 \cdot \OPT^{k, p}_X - \eta_2
    ~\leq~ \cost^p_{\tX}(C) ~\leq~
    {\kappa} \cdot \cost^p_X(C) + \eta_2.
    \]
\end{definition}
When $k, p$ are clear from context, we often drop them from the notation for brevity.

We say that a multiset stream $\tX = (\tx_1, \dots, \tx_n)$ is a $(\kappa, \eta_1, \eta_2)$-semi-coreset for another multiset stream $X = (x_1, \dots, x_n)$ if $\tX_{\leq t}$ is a $(\kappa, \eta_1, \eta_2)$-semi-coreset of $X_{\le t}$ for all $t \in [n]$.

While the semi-coreset definition is more relaxed than the original coreset definition, it still maintains the most crucial property of coreset: 
Any good approximation for the semi-coreset is also a
good approximation for the original dataset.
This is stated more precisely below\footnote{Missing proofs in this section (including  \Cref{lem:coreset-approx}) and additional preliminaries can be found in Appendix~\ref{app:prelim}.}.

\begin{lemma}[Approximation from Semi-Coreset] \label{lem:coreset-approx}
    Let $X$ be a multiset of points and let $\tX$ be a $(\kappa, \eta_1, \eta_2)$-semi-coreset for $X$.
    If a set $C \in \binom{U}{k}$ of centers is an $(\alpha, \beta)$-approximation for $\tX$, then $C$ is an $(\alpha', \beta')$-approximation for $X$, where 
    $\alpha' = \alpha \kappa^2 + \eta_1 \kappa$ and $\beta' = \kappa(\alpha + 1)\eta_2 + \kappa\beta$.
\end{lemma}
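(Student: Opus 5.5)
Let $C^*$ denote an optimal solution for $X$, so $\cost^p_X(C^*) = \OPT^{k,p}_X$. The plan is to sandwich $\cost^p_X(C)$ between quantities involving $\tX$, using the lower bound of \Cref{def:semi-coreset} for $C$ and the upper bound for $C^*$. The approximation hypothesis on $\tX$ connects the two.

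First I bound $\cost^p_X(C)$ from above. Rearranging the left inequality of \Cref{def:semi-coreset} applied to $C$ gives
\[
\cost^p_X(C) \le \kappa \cdot \cost^p_{\tX}(C) + \kappa\eta_1 \OPT^{k,p}_X + \kappa \eta_2 .
\]

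Next I bound $\cost^p_{\tX}(C)$. Since $C$ is an $(\alpha,\beta)$-approximation for $\tX$, we have $\cost^p_{\tX}(C) \le \alpha \OPT^{k,p}_{\tX} + \beta$. Because $\OPT^{k,p}_{\tX} \le \cost^p_{\tX}(C^*)$, the right inequality of the definition applied to $C^*$ gives
\[
\OPT^{k,p}_{\tX} \le \kappa \OPT^{k,p}_X + \eta_2 .
\]
Substituting, $\cost^p_{\tX}(C) \le \alpha\kappa \OPT^{k,p}_X + \alpha\eta_2 + \beta$.

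Plugging this into the first inequality yields
\[
\cost^p_X(C) \le (\alpha\kappa^2 + \kappa\eta_1)\OPT^{k,p}_X + \kappa(\alpha+1)\eta_2 + \kappa\beta ,
\]
which matches the stated $\alpha'$ and $\beta'$ exactly.

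There is no real obstacle here. The only care needed is to use the upper-bound direction only for the fixed set $C^*$ and the lower-bound direction only for $C$. It also matters that the $\eta_1$ term is measured against $\OPT^{k,p}_X$ rather than $\cost^p_X(C)$, so it contributes directly to $\alpha'$.
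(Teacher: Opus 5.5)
Your proof is correct and follows the paper's argument essentially step for step. You bound $\OPT^{k,p}_{\tX}$ using the semi-coreset upper bound at an optimal $C^*$ for $X$, combine this with the $(\alpha,\beta)$-approximation guarantee of $C$ on $\tX$, and then rearrange the semi-coreset lower bound at $C$ to obtain exactly $\alpha'$ and $\beta'$.
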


\subsection{Consistent Clustering}
\label{subsec:prelim-consistent}


For an online $(k, p)$-clustering algorithm, we say that an output stream $C = (C_1, \dots, C_n)$ of centers is \emph{$m$-consistent} if $\sum_{t \in [n]} |C_t \setminus C_{t - 1}| \leq m$, where $m = m(n)$.  
We say that the algorithm is $m$-consistent if its output is always $m$-consistent.
We use the following result by \cite{FichtenbergerLN21}, which yields a nearly optimal consistency bound in the non-private setting.



\begin{theorem}[Consistent $k$-Clustering with Constant Additive Error~\citep{FichtenbergerLN21}] \label{thm:consistent-additive}
There is an online $(k, p)$-clustering $\Paren{k \cdot \polylog{nd}}$-consistent algorithm whose output is an $(O(1), O(1))$-approximation with high probability.
Moreover, its time and space complexity is polynomial in the number of distinct points in the input stream and the number of dimensions $d$.
\end{theorem}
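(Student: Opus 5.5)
The plan is to obtain this statement as a black-box consequence of the consistent clustering algorithm of \cite{FichtenbergerLN21}, after a discretization step that produces the constant additive error. As I recall it (and I would re-check this against their paper), their algorithm works in an arbitrary metric space. It outputs centers chosen from among the input points, is an $O(1)$-approximation with high probability, and has consistency $k \cdot \polylog{n \Delta}$, where $\Delta$ is the aspect ratio (maximum over minimum nonzero distance). It does not use any bounded-diameter assumption with additive slack. The only gap to close is therefore that, in our $U = \cB(0, \frac12)$, distinct points may be arbitrarily close, so $\Delta$ is unbounded.

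The first step is to snap each arriving point $x_t$ to the nearest point $\hat x_t$ of a grid of side length $\tau := 1/(n\sqrt{d})$, intersected with a slightly enlarged ball, say $\cB(0,1)$. Then $\dist(x_t,\hat x_t) \le \sqrt d\,\tau = 1/n$. All pairwise distances between distinct snapped points are at least $\tau$ and at most $2$, so the aspect ratio is $O(n\sqrt d)$. Running the algorithm of \cite{FichtenbergerLN21} on the snapped stream therefore gives consistency $k\cdot\polylog{nd}$. Its time and space are polynomial in the number of distinct (snapped) points and in $d$, as claimed.

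The second step transfers the approximation guarantee back to the original stream. Fix $t$ and any center set $C$. Using $(a+b)^p \le 2^{p-1}(a^p+b^p)$ and the triangle inequality, each point satisfies $\dist(x_i,C)^p \le 2^{p-1}\dist(\hat x_i,C)^p + 2^{p-1} n^{-p}$. Summing over $i \le t \le n$ shows that the two costs agree up to a factor $2^{p-1}$ and an additive $2^{p-1}n^{1-p} = O(1)$, and the same bound holds in the reverse direction. Applying this once to the output $C_t$ and once to an optimal solution for $X_{\le t}$ gives
\[\cost^p_{X_{\le t}}(C_t) \le 2^{p-1}\bigl(O(1)\cdot 2^{p-1}\OPT^{k,p}_{X_{\le t}} + O(1)\bigr) + O(1),\]
which is an $(O(1),O(1))$-approximation since $p$ is a constant. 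If their guarantee is stated only relative to the best centers chosen among input points, we lose at most an extra factor $2^p$. This is the standard fact that the nearest input point to an optimal center is a good substitute.

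The main obstacle is confirming the precise form of the consistency bound in \cite{FichtenbergerLN21}, in particular how it depends on $\Delta$ and on $\OPT$ possibly being $0$. If their bound instead depends on the ratio of the final optimum to the first nonzero optimum, I would proceed differently. I would treat phases in which the snapped optimum is $0$ (at most $k$ distinct points) by outputting exactly the distinct points, which costs only $k$ changes. After that phase I would note that any nonzero optimum is at least $\tau^p$ and at most $n$, so the ratio is polynomial in $nd$. The consistency bound then remains $k\cdot\polylog{nd}$ either way.
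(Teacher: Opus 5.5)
Your proposal is correct and follows essentially the same route as the paper. Both arguments snap the stream to a fine grid, bound the aspect ratio by $\mathrm{poly}(nd)$, run the algorithm of \cite{FichtenbergerLN21} on the snapped stream, and transfer the guarantee back using the $\le 1/n$ per-point displacement. The paper uses grid side $\frac{1}{nd}$ and rounds each coordinate toward zero, so snapped points, and hence any centers chosen among them, stay in $U$; with your nearest-point rounding you would project centers back onto $U$, which is non-expansive and does not hurt consistency. The paper also packages the transfer step through the transport-cost lemma and \Cref{lem:coreset-approx}, rather than your direct $2^{p-1}$ triangle-inequality computation. Your extra factor $2^p$ for centers restricted to input points covers a detail the paper leaves implicit.
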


We remark that the above result is slightly different than the one originally stated by \cite{FichtenbergerLN21}, whose bound depends on the ``aspect ratio'' of the metric space.
However, there is a simple reduction between the two versions.
See \Cref{subsec:consistent-snapping} for more details.

\subsection{Dense Balls}

As mentioned in \Cref{subsec:overview}, a central problem in our algorithm is the dense ball problem, which is parameterized by $\rmin$, the smallest target radius, and $\tau$, the target number of points per ball.
The input to the problem is a multiset $S$
of points and the output is another multiset $O$.
(Let $\digamma = 3$ be a factor that is fixed throughout.) The problem is formalized as follows.

\begin{definition}[Dense Ball] \label{def:dense-ball}
We say that an algorithm solves the 
\emph{$(r_{\min}, \tau)$-dense ball} (with probability $1 - \gamma$) if the following \emph{coverage constraint} holds for any input multiset $S$ of at most $N$ points (with probability $1 - \gamma$): 
$\dist\Paren{u, O} \leq \digamma \cdot r$ for all $u \in U, r \geq r_{\min}$ such that $|S \cap \cB(u, r)| \ge \tau$.

Moreover, we say that the algorithm has \emph{size blow-up $\Upsilon$ with size capacity $N$} if, for any input $S$ of size at most $N$, the output $O$ (always) satisfies $|O| \leq \Upsilon \cdot \frac{N}{\tau}$.
\end{definition}

\subsection{DP and Continual Counting}
\label{sec:prelim-dp}

We quickly recall the definition of DP. For online clustering algorithms, two input streams $X = (x_1, \dots, x_n), X' = (x'_1, \dots, x'_n)$ are \emph{(add-remove) neighbors}%
\footnote{Note that one may consider the \emph{substitution neighbors}, in which we can change $x_{i^*}$ to any other point $x'_{i^*}$ (not necessarily $\perp$).
By standard {\em group privacy}, our algorithm also applies in this setting.} iff there exists $i^* \in [n]$ such that $x_i = x'_i$ for all $i \ne i^*$, and $\perp \in \{x_{i^*}, x'_{i^*}\}$. 
In words, one stream can be obtained from the other by changing a single point to $\perp$.

\begin{definition}[Differential Privacy (DP) \citep{DworkKMMN06,dwork2006calibrating}] \label{def:nonadaptivedp}
    An online algorithm $\cA$ that takes in a stream is said to be \emph{$(\eps, \delta)$-DP} if, for all sets $S$ of possible outputs and all pairs $X, X'$ of neighboring input streams, it holds that $\Pr[\cA(X) \in S] \leq e^{\eps} \cdot \Pr[\cA(X') \in S] + \delta$.
\end{definition}
The case $\delta = 0$ is known as \emph{pure-DP} (written as $\eps$-DP) and
$\delta > 0$ is known as \emph{approximate-DP}.

Throughout this work, we will assume that $\eps \in (0, 1)$ and $\delta \in (0, 1/2)$.

We remark that the above model is the so-called \emph{non-adaptive} (aka \emph{oblivious}) model where the adversary fixes the neighboring input streams at the start of the algorithm. Our privacy proof however works even with the stronger \emph{adaptive} (aka \emph{adversarial}) model where the adversary may choose to adapt the streams based on the outputs so far. Due to the extra notations required, we defer the formal definition of this adaptive model to the appendix. However, we stress that our utility proof assumes the oblivious model, and it remains an interesting question to extend the utility guarantees to the adversarial setting. (See \Cref{sec:conclusion} for more discussion.)


\paragraph{Continual Counting.} In DP continual counting~\citep{DworkNPR10,ChanSS11},
 the input stream is $Y = (y_1, \dots, y_n) \in \{0, 1\}^n$.
The goal is to output a private estimate $e_t$ of $\sum_{t' \in [t]} y_{t'}$ after receiving $y_t$.
We say that the estimate is 
\emph{$\alpha$-accurate}
if $\left|e_t - \sum_{t' \in [t]} y_{t'}\right| \leq \alpha$ for all $t \in [n]$

\section{Private Semi-Coreset Construction} 
\label{sec:dp-coreset-main}

Our main DP online semi-coreset construction (a formal version of \Cref{thm:coreset-informal}) is stated below.

\begin{theorem}[DP Online Semi-Coreset] \label{thm:main-semi-coreset}
There is an online $(\eps, \delta)$-DP algorithm whose output is an $(O(1), O(1), \beta)$-semi-coreset 
of the input with probability $1 - \gamma$, and the algorithm runs in time $\tO{ndk^2 \cdot \log(1/\gamma)} + \Lambda^{O(1)}$ and uses space $\Lambda^{O(1)}$ where
\begin{itemize}
    \item $\beta = O\Paren{\frac{d \cdot k^3}{\eps} \cdot \polylog{\frac{n d}{\eps \gamma}} }$ and $\Lambda = \frac{kd \log(n/\gamma)}{\eps}$ for pure-DP.
    \item $\beta = O\Paren{\frac{\sqrt{d} \cdot k^{2.5}}{\eps} \cdot \polylog{\frac{n d}{\eps \delta \gamma}} }$ and $\Lambda = \frac{kd \log\Paren{ \frac{n}{\gamma\delta} } }{\eps}$ for approximate-DP.
\end{itemize}
\end{theorem}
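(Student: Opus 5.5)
The plan is to build the semi-coreset in three layers. The first layer is a DP Meyerson sketch that yields a prefix-approximation of small size. The second layer converts that into a streamed semi-coreset by DP-counting points assigned to released representatives. The third layer handles the unknown optimum by running parallel copies over geometric guesses of $\estopt$.

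\textbf{Step 1 (DP Meyerson sketch).} We run \Cref{alg:meyerson-nonprivate} with a modification: $\lambda_t$ is computed from $\dist(x_t, R)$ rather than $\dist(x_t, S)$. Here $R$ is the output of a DP $(\rmin,\tau)$-dense ball algorithm (\Cref{def:dense-ball}) run on the private sample set $S$.

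To keep the privacy cost low, we recompute $R$ only when the continual-counting estimate of $|S|$ has grown by roughly $\tau$ since the last invocation. Counting is applied to the bits $Z_t$, and the estimate is $\alpha$-accurate with $\alpha = \polylog{n}/\eps$. The total number of invocations is then $O(|S|/\tau)$, which is $\polylog{n}$ once we show $|S| = O(k\,\polylog{n})$ and choose $\tau \approx \polylog{n}/\eps$ with a $\sqrt{d}$ dependence in the approximate-DP case. Privacy follows from composing the counting mechanism with these few dense-ball calls. A point affects the stream only through its own $Z_t$ and through $S$, and sensitivity is one in each call. We argue the adaptive composition directly.

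\textbf{Step 2 (utility of the sketch; main obstacle).} We follow the standard Meyerson analysis. Fix an optimal solution and partition each optimal cluster into geometric rings around its center. Within a ball of radius $r$ around an optimal center, the usual argument says that points arriving after some point of the ball has been sampled pay $O(r^p)$ each.

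Here the condition changes: we need a point of $R$ within $\digamma r$. This holds once the ball contains $\tau$ sampled points and $R$ has been refreshed, which costs an extra $\tau + \alpha$ samples per ring of delay. The delay contributes additive error of about $k\log n\cdot(\tau+\alpha)\cdot \rmin^p$, which becomes roughly $\beta$ after balancing. Points at distance below $\rmin$ contribute at most $n\rmin^p$, so we set $\rmin \approx (1/n)^{1/p}$.

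The size bound on $S$ is where the argument can fail. Since $\dist(x_t,R)$ can be larger than $\dist(x_t,S)$, the sketch samples more often. The fix is to charge these extra samples to the at most $\tau+\alpha$ "pre-dense" samples per ring, giving $|S| = O(k\log n\cdot(\tau+\alpha))$ in expectation. High probability follows from a martingale/Freedman bound, and the failure probability can be boosted to $\gamma$ by running $\log(1/\gamma)$ copies. This is the step I expect to be delicate: the privacy and size accounting interlock, because the refresh schedule depends on $|S|$, and $|S|$ depends on $R$.

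\textbf{Step 3 (from prefix-approximation to semi-coreset).} We run the sketch in parallel for guesses $\estopt = 2^j$ over $O(\log n)$ scales. Each guess only releases $R$, and we switch guesses as the privately estimated cost crosses thresholds. This yields a DP multiset stream $Y$, the union of released $R$'s, that is an $(O(1),\beta)$-prefix-approximation of size $\tilde O(k\tau)$.

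We then assign each $x_t$ to its nearest point of $Y_{\le t}$ and maintain one continual counter per center of $Y$. We release copies of a center whenever its noisy count increases. A triangle-inequality argument (generalized for $p$) shows the released stream is an $(O(1),O(1),\beta)$-semi-coreset at every $t$: the moved cost is bounded by the prefix-approximation cost, and the counter noise is bounded by $|Y|\cdot\alpha$. Composing over all counters costs a factor $|Y|$ in $\eps$ under pure DP, giving $d k^3$, versus $\sqrt{|Y|}$ with advanced composition, giving $\sqrt{d}\,k^{2.5}$.

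Runtime is $O(dk^2)$ per point per copy from nearest-center computations, plus $\Lambda^{O(1)}$ for the dense-ball calls. Space is $\Lambda^{O(1)}$ because only $S$, $R$ and the counters are stored.
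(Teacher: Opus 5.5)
Your architecture matches the paper's: a Meyerson sketch whose sampling uses a dense-ball proxy $R$ refreshed by a counter, followed by nearest-center snapping with per-center counters. However, the accounting that is supposed to produce $\beta$ is wrong, and it hides genuine gaps. First, privacy of the sketch cannot rest on $|S| = O(k\tau\log n)$. That bound holds only in expectation, and only when $\estopt$ is within a constant factor of $\OPT$, whereas DP must hold for every pair of neighboring streams (for pure DP no failure event is tolerated at all). In your copies with too-small guesses, $\lambda_t = 1$ for many points, so $|S|$ can grow linearly in $n$. Uncapped, the dense-ball algorithm is then invoked up to $\Theta(n/\tau)$ times on inputs exceeding its size capacity. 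You need hard caps, as in the paper's checks $|S_t| < \Psi\tau$ and $j \le \Psi$ with $\Psi = \Theta(k\log n)$. You also need a utility argument that the caps w.h.p.\ never fire; the paper analyzes the uncapped variant and applies Markov's inequality. With caps, the number of dense-ball calls is $\Theta(k\log n)$, not $\polylog{n}$. Each call gets budget $\eps/\Theta(k\log n)$ on up to $N \approx k\tau\log n$ points. The guarantee of \Cref{thm:dense-ball-main} then requires $\tau$ of order $dk^2/\eps$ (pure) or $\sqrt{d}\,k^{1.5}/\eps$ (approximate), up to polylogarithmic factors. So your $\tau \approx \sqrt{d}\,\polylog{n}/\eps$ does not give coverage. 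Moreover, the sketch's additive error is of order $\tau kL$, not $k\log n\,(\tau+\alpha)\,\rmin^p$. The $\Theta(\tau)$ samples per ring taken before $R$ covers that ring are not themselves covered by $R$, and each can cost up to the diameter; this is the term $\E{\sum_t \tZ_t}\cdot(1+\estopt/(\tau kL))$ in the paper's analysis. It is $\tau kL$ with the above $\tau$ that yields $dk^3$ and $\sqrt{d}\,k^{2.5}$.

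Second, the counter step as you describe it would fail. The center set $Y$ has size of order $k^2$ up to polylogarithmic factors, not $\tilde{O}(k\tau)$, since each of the $\Theta(k\log n)$ dense-ball calls returns $O(k\log n)$ centers. Composing the per-center counters by basic composition gives total weight error about $|Y|^2/\eps \approx k^4/\eps$, and advanced composition gives about $|Y|^{1.5}/\eps \approx k^3/\eps$. These exceed the claimed $\beta$ by roughly a factor $k/d$ or $\sqrt{k/d}$ whenever $d \ll k$, so the $dk^3$ and $\sqrt{d}\,k^{2.5}$ you attribute to this step do not come from it. The missing idea is parallel composition (\Cref{lem:parallel}). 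Each $x_t$ feeds a $1$ only to the counter of its nearest center. Conditioned on the transcript (which fixes $Y_{\le t}$), neighboring streams therefore differ in the input of a single counter. Hence all counters together are $\eps_c$-DP and each has error $\polylog{n/\gamma}/\eps$. The total $|Y|\cdot\polylog{n/\gamma}/\eps$, which is $\tilde{O}(k^2/\eps)$, is then dominated by $\tau kL$.
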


\subsection{From Dense Ball to Prefix-Approximation: Private Meyerson's Sketch}
\label{sec:dp-meyerson-main}

The main ingredient for \Cref{thm:main-semi-coreset} is a private version of Meyerson's sketch (\Cref{alg:meyerson-nonprivate}).  As in the non-private version, we maintain
a set $S_t$ of sampled input points, but instead of using these directly we periodically feed them to the DP dense ball algorithm to get $R_t$, which we actually output. For notational convenience, let $R_0 = \{0\}$ contain the origin, and write $R_{\leq t} = \bigcup_{i=0}^t R_i$ to include $R_0$ as well. Moreover, we use the convention  $\dist(\perp, R_{\leq t}) = 0$ so that $x_t = \perp$ is never included in the set $S_t$.  Recall that $\estopt$ is a guess of the optimum value.  Let $r_{\min} := \Paren{\frac{\estopt}{n}}^{1/p}$ and $L := \left\lceil \log_2\Paren{\frac{1}{r_{\min}}} \right\rceil$.  \Cref{alg:meyerson-main}  follows the outline from \Cref{subsec:overview} except for additional checks to ensure that the size of $S_t$ and the number of calls to the DP dense ball algorithm are bounded.  

\begin{algorithm}[t]
\small
\caption{Private Meyerson's Sketch: $\meyersonalg(\estopt, \Psi)$}
\label{alg:meyerson-main}
\begin{algorithmic}[1]

\makeatletter
\renewcommand{\ENDFOR}{\end{ALC@for}}
\makeatother

\REQUIRE Stream $X = (x_1, \dots, x_n)$
\item[\textbf{Parameters:}] Target sketch size $\Psi$; Dense ball algorithm $\densealg(\tau, \Psi, \Upsilon)$ with threshold $\tau$, size capacity $\Psi \cdot \tau$ and size blow-up $\Upsilon$; Continual counting algorithm $\countingalg$; 
Estimated optimum $\estopt$
\STATE $(R_0, S_0) \gets (\{0\}, \emptyset)$
\STATE $j = 1$ \hfill \algcomment{Counter for when to call $\densealg$}
\FOR{$t = 1, \dots, n$}
\STATE \algcomment{Do the following upon receiving $x_t$}
\STATE $\lambda_t \gets \min\left\{1, \frac{\tau k L}{\estopt} \cdot \dist(x_t, R_{\leq t-1})^p \right\}$ \label{line:closest-point}
\hfill \algcomment{Probability of adding $x_t$ to $S_t$}
\STATE $S_t \gets S_{t - 1}$
\STATE Sample $Z_t \sim \Bern(\lambda_t)$ \hfill \algcomment{Should $x_t$ be added to $S_t$?} 
\IF{$Z_t = 1$ and $|S_t| < \Psi \cdot \tau$} \label{line:size-check-s}
\STATE $S_t \gets S_t \cup \{x_t\}$ 
\ENDIF
\STATE Feed $Z_t$ to $\countingalg$ and let $e_t$ be the output
\IF{$e_t \geq j \cdot \tau$ and $j \leq \Psi$} \label{line:num-call-dense-check}
\STATE $R_t \gets \densealg(S_t)$
\STATE $j \gets j + 1$
\ELSE
\STATE $R_t \gets \emptyset$
\ENDIF
\STATE Output $R_t$
\ENDFOR
\end{algorithmic}
\end{algorithm}

A generic version of the approximation
guarantee of \Cref{alg:meyerson-main} is given below.
\begin{lemma}[Prefix-Approximation Guarantee] \label{lem:bicriteria-approx}
Assume that
\begin{itemize}
\item $\densealg(\tau, \Psi, \Upsilon)$ solves $\Paren{r_{\min}, \tau}$-dense ball and has size blow-up $\Upsilon$ with size capacity $\Psi \cdot \tau$.
\item $\countingalg$ solves continual counting with accuracy $\tau/2$. 
\item $\textstyle\Psi \geq D \cdot \Paren{kL \cdot \Paren{1 + \frac{\OPT^{k, p}_{X_{\leq n}}}{\estopt}}}$ for some sufficiently large constant $D$.
\end{itemize} 
Then, the output stream $(R_1, \dots, R_n)$ is an $\Paren{O(1), O\Paren{\estopt + \tau k L}}$-prefix-approximation of $(x_1, \dots, x_n)$ with probability $1/2$. 
Furthermore, the output stream has size $|R_{\leq n}| \leq \Psi^2 \cdot \Upsilon$.

\end{lemma}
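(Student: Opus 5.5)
The size bound is immediate, so I would dispose of it first. $\densealg$ is called only when $j \le \Psi$, so at most $\Psi$ times. Each call receives $S_t$ with $|S_t| \le \Psi\tau$, and this equals the size capacity. By the size blow-up guarantee, each call outputs at most $\Upsilon \cdot \Psi\tau/\tau = \Upsilon\Psi$ points. Summing over the calls gives $|R_{\le n}| \le \Psi^2\Upsilon$; the extra origin in $R_0$ can be absorbed, or $R_0$ can be excluded from the count.

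For the prefix-approximation I would follow the standard Meyerson analysis. The modification is that each $x_t$ is charged against $R_{\le t}$ rather than $S$. Fix an optimal solution $C^*=\{c_1,\dots,c_k\}$ for $X_{\le n}$. Around each $c_i$, partition its cluster $P_i$ into annuli: the inner ball $A_{i,0}=\{x: \dist(x,c_i)\le r_{\min}\}$ and shells $A_{i,\ell}=\{x: 2^{\ell-1}r_{\min}<\dist(x,c_i)\le 2^\ell r_{\min}\}$ for $\ell\le L$. That gives $O(kL)$ groups. Each group lies in a ball of radius $r_\ell := \max(r_{\min}, 2^\ell r_{\min})$ around $c_i$.

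I would then do the following accounting per group $A=A_{i,\ell}$. Order its points by arrival. Once $\tau$ points of $A$ have entered $S$, the ball $\cB(c_i, r_\ell)$ contains at least $\tau$ points of $S$. The counter has accuracy $\tau/2$, so a dense ball call fires within the next $O(\tau)$ sampled points overall (the $j$-th call is triggered once the true count exceeds $j\tau + \tau/2$). After that call, the coverage constraint puts a point of $R$ within $\digamma r_\ell$ of $c_i$. Every later point $x\in A$ therefore has $\dist(x,R_{\le t})^p \le (\dist(x,c_i)+\digamma r_\ell)^p = O(\dist(x,c_i)^p + r_{\min}^p)$, using $r_\ell \le 2\dist(x,c_i)$ for $\ell \ge 1$. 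Summing over all points, the $r_{\min}^p$ terms total $n\,r_{\min}^p=\estopt$, and the $\dist(x,c_i)^p$ terms total $O(\OPT)$.

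The points before this "covered" moment are the main obstacle. Their cost is controlled in expectation by the sampling rule: while $A$ is not yet covered, each arriving point is added with probability $\lambda_t = \min\{1,\frac{\tau kL}{\estopt}\dist(x_t,R_{\le t-1})^p\}$. The expected cost paid before $\tau$ samples from $A$ (plus the $O(\tau)$ global delay) is therefore $O(\tau\cdot \estopt/(\tau kL))$ per sample, where the $\lambda_t=1$ points contribute nothing because they are sampled. The accounting uses the martingale identity $\mathbb E[\sum \dist^p \cdot \mathbf 1\{\text{not yet}\}] = \frac{\estopt}{\tau kL}\mathbb E[\#\text{samples}]$, with the unit diameter keeping the per-point cost at most $1$. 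Summing gives $O(\estopt)$ for each group's $\tau$ samples, and $O(kL)$ groups give $O(\estopt\cdot\text{const})$. The global delay of $O(\tau)$ extra samples per call I would charge similarly, yielding the $O(\tau kL)$ additive term. I am unsure here and would treat each $\lambda_t=1$ point as paying cost at most $1$, bounded by the total number of such forced samples.

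Finally, the truncation $|S_t|<\Psi\tau$ and the limit $j\le\Psi$ must not bind. The expected total number of samples is $O(\tau kL(1+\OPT/\estopt))$: the sum of $\lambda_t$ is at most $\frac{\tau kL}{\estopt}$ times the total charged cost, plus $\tau\cdot O(kL)$ from the covering phases. By Markov's inequality this is at most $\Psi\tau/2$ with probability $3/4$ when $D$ is large. Markov on the expected cost bound then gives the prefix-approximation with probability $3/4$, and a union bound gives overall probability $1/2$.
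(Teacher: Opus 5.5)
Your plan is the paper's proof in outline. It uses dyadic annuli around the optimal centers and observes that once a group has $O(\tau)$ samples, the next counter-triggered call to $\densealg$ (at most about $2\tau$ samples later) puts a point of $R$ within $O(2^\ell\rmin)$ of $c_i$. Later points are charged to $\dist(x_t,c_i)^p+\rmin^p$, earlier points to the coin probabilities, and Markov plus a union bound finishes. The paper makes the cutoff predictable by calling a point of $P_{i,\ell}$ ``early'' iff fewer than $3\tau$ points of $P_{i,\ell}$ were sampled before it (the variables $\tZ_t,\tlambda_t$). It then shows deterministically that non-early points are covered.

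Three parts of your bookkeeping need repair.

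First, a point with $\lambda_t=1$ does not pay nothing. Sampling puts it into $S_t$, not into $R$, so it still pays $\dist(x_t,R_{\le t})^p$, which can be $1$. Your fallback (cost at most $1$, at most $O(\tau)$ such points per group before coverage) is correct, and it is exactly the source of the $\tau kL$ term. The global delay is not the source: delayed samples with $\lambda_t<1$ are charged $\estopt/(\tau kL)$ per expected sample and land in the $O(\estopt)$ term. Your per-group arithmetic should also give $O(\estopt/(kL))$ per group, not $O(\estopt)$. The paper handles both cases at once via $\dist(x_t,R_{\le t-1})^p\le\lambda_t\Paren{1+\frac{\estopt}{\tau kL}}$ and $\E{\sum_t\tlambda_t}=\E{\sum_t\tZ_t}\le 3\tau kL$.

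Second, in the sample-count bound you may multiply $\frac{\tau kL}{\estopt}$ only against the post-coverage cost $O(\OPT+n\rmin^p)$. Applied to the full cost bound, it would create an unusable $(\tau kL)^2/\estopt$ term. The pre-coverage samples are simply at most $3\tau$ per group.

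Third, your coverage step assumes the checks $|S_t|<\Psi\tau$ and $j\le\Psi$ never suppress a call, but that is what the sample bound is meant to establish. The paper avoids this circularity by computing both expectations for the variant of \Cref{alg:meyerson-main} with the checks removed. With the same coins, the two algorithms then produce identical outputs whenever that variant samples at most $\Psi\tau$ points.
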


\begin{proof}
The size bound immediately follows due to the checks $j \leq \Psi$ on line~\ref{line:num-call-dense-check}, $|S_t| \leq \Psi \cdot \tau$ on line~\ref{line:size-check-s} and the size blow-up guarantee of $\densealg$.

Next, we will prove the prefix-approximation claim.
%
To do this, it will be more convenient to analyze a modified version of \Cref{alg:meyerson-main}, denoted
\variantalgo, removing the
checks that $|S_t| < \Psi \cdot \tau$ on line~\ref{line:size-check-s} and that $j \leq \Psi$ on line~\ref{line:num-call-dense-check}.
For \variantalgo, we will then show the following:
\begin{align} \label{eq:bicri-approx-expectation}
\textstyle
\E{\sum_{t \in [n]} \dist(x_t, R_{\leq t - 1})^p} \leq O\Paren{\OPT_{X_{\leq n}}^{k, p} + \estopt + \tau k L},
\end{align}
\begin{align} \label{eq:size-s-expectation}
\textstyle
\E{|S_{\leq n}|} \leq \frac{\Psi \cdot \tau}{8}.
\end{align}
Before we prove these, we show how they complete our main proof. By Markov's inequality and a union bound, \eqref{eq:bicri-approx-expectation}, \eqref{eq:size-s-expectation} imply that, w.p. $3/4$, both the prefix-approximation guarantee and $|S_t| \leq \Psi \cdot \tau$ holds in \variantalgo. When the latter holds (and since $\countingalg$ is $(\tau/2)$-accurate), both checks on lines~\ref{line:size-check-s} and~\ref{line:num-call-dense-check} in \Cref{alg:meyerson-main} pass; this means that \Cref{alg:meyerson-main} produces the same output $(R_1, \dots, R_t)$ as \variantalgo.  Thus, the prefix-approximation guarantee holds with probability $3/4$ for \Cref{alg:meyerson-main}.

To prove \eqref{eq:bicri-approx-expectation} and \eqref{eq:size-s-expectation}, let $C^* = \{c^*_1, \dots, c^*_k\}$ denote the optimal centers such that $\cost^p_{X_{\leq n}}(C^*) = \OPT^{k, p}_{X_{\leq n}}$. 
For each $i \in [k]$, let $P_i = \{ t \in [n] \mid x_t \mbox{ is closest to } c_i^*, 
\mbox{ ties broken arbitrarily }\}$.

For every $\ell = 1, \dots, L$, define the
partitions
\begin{align*}
    P_{i, \ell} := 
    \begin{cases}
        \{t \in P_i \mid \dist(x_t, c^*_i) \leq 2 \cdot \rmin\} & \text{ if } \ell = 1, \\
        \{t \in P_i \mid 2^{\ell - 1} \cdot \rmin \leq \dist(x_t, c^*_i) \leq 2^\ell \cdot \rmin\} & \text{ otherwise. }
    \end{cases}
\end{align*}
We note that $\bigcup_{\ell \in [L]} P_{i, \ell} = P_i$, since the diameter of $\cM$ is at most one.

Consider each $P_{i, \ell}$.  Referring back to \Cref{alg:meyerson-main}, recall that $Z_t$ is the outcome of the coin, determined by $\lambda_t$, that decides
if $x_t$ should be added to $S_t$.  
For every $j \in P_{i, \ell}$, we define $Z_{< j} := \sum_{\substack{j' \in P_{i, \ell} \\ j' < j}} Z_{j'}$, 
i.e., the number of elements in this partition that has been added until before time step $j$. 
Similarly, we define $\lambda_{< j} := \sum_{\substack{j' \in P_{i, \ell} \\ j' < j}} \lambda_{j'}$.

\paragraph{Proof of \eqref{eq:bicri-approx-expectation}.} Let us define additional notation:
\begin{align*}
\tZ_t =
\begin{cases}
Z_t &\text{ if } Z_{<t} < 3\tau, \\
0 &\text{ otherwise.}
\end{cases} & &
\tlambda_t =
\begin{cases}
\lambda_t &\text{ if } Z_{<t} < 3\tau, \\
0 &\text{ otherwise.}
\end{cases}
\end{align*}

Consider any $t \in P_{i, \ell}$.
Observe that, if $Z_{< t} \geq 3\tau$ (i.e.,  $\lambda_t \ne \tlambda_t$), then $|S_{t - 1} \cap \cB(c^*_i, 2^\ell \cdot \rmin)| \geq 3\tau$.
Let $t' < t$ be the last time step $\densealg$ is called before time step $t$. 
Due to the guarantee of $\countingalg$, we must have $|S_t \setminus S_{t'}| \leq \tau + 2 \cdot (\tau/2) = 2\tau$.
This implies that $Z_{< t'} \geq \tau$. By the guarantee of $\densealg$, there must be a point $f \in R_{\leq t - 1}$ such that $\dist(c^*_i, f) \leq O(2^\ell \cdot \rmin)$. This implies
\begin{align} \label{eq:distance-bound-after-sufficient-samples}
\dist(x_t, R_{\leq t - 1}) 
\leq \dist(x_t, c^*_i) + \dist(c^*_i, f) &\leq \dist(x_t, c^*_i) + O(2^\ell \cdot \rmin) \nonumber\\ &\leq O(\dist(x_t, c^*_i) + \bone[\ell = 1] \cdot \rmin).
\end{align}
Meanwhile, by definition of $\lambda_t$ and since the diameter of $\cM$ is at most one, we also have that
\begin{align}
\label{eq:naive-distance-bound}
\textstyle
\dist(x_t, R_{\leq t - 1})^p \leq \lambda_t \cdot \Paren{1 + \frac{\estopt}{\tau k L}}.  
\end{align}
Using \eqref{eq:distance-bound-after-sufficient-samples} when
$\lambda_t \neq \tlambda_t$ and \eqref{eq:naive-distance-bound} when $\lambda_t = \tlambda_t$, we have
\begin{align*}
\dist(x_t, R_{\leq t - 1})^p \leq O\Paren{\dist(x_t, c^*_i)^p + \bone[\ell = 1] \cdot\rmin^p + \tlambda_t \cdot \Paren{1 + \frac{\estopt}{\tau k L}} }.
\end{align*}
Taking expectation, we thus get
\begin{align*}
\E{\dist(x_t, R_{\leq t - 1})^p} \leq O\Paren{\dist(x_t, c^*_i)^p + \bone[\ell = 1] \cdot \rmin^p + \E{\tZ_t} \cdot \Paren{1 + \frac{\estopt}{\tau k L}}}.
\end{align*}
Summing over all $t \in [n]$, we get
\begin{align*}
\textstyle
\E{\sum_{t \in [n]} \dist(x_t, R_{\leq t - 1})^p} \leq O\Paren{\OPT_{X_{\leq n}}^{k, p} + n \cdot \rmin^p + \E{\sum_{t \in [n]}  \tZ_t} \cdot \Paren{1 + \frac{\estopt}{\tau k L}}}.
\end{align*}
Finally, by definition, $\sum_{t \in P_{i, \ell}}  \tZ_t \leq 3\tau$ regardless of the randomness of the algorithm. Summing over $i \in [k]$ and $\ell \in [L]$ gives $\sum_{t \in [n]}  \tZ_t \leq 3 \tau k L$. Plugging this into the above gives
\begin{align*}
\textstyle
\E{\sum_{t \in [n]} \dist(x_t, R_{\leq t - 1})^p} &\leq O\Paren{\OPT_{X_{\leq n}}^{k, p} + n \cdot \rmin^p + \tau k L + \estopt} \\ 
&\leq O\Paren{\OPT_{X_{\leq n}}^{k, p} + \estopt + \tau k L},
\end{align*}
where the last inequality is due to our choice of $\rmin$. This proves \eqref{eq:bicri-approx-expectation}.

\paragraph{Proof of \eqref{eq:size-s-expectation}.} First, recall that \eqref{eq:distance-bound-after-sufficient-samples} holds if $Z_{< t} \geq 3\tau$, i.e.,  $\tZ_t \ne Z_t$. In this case, we have 
\begin{align*}
\E{Z_t - \tZ_t} = \E{Z_t} = \lambda_t \overset{\eqref{eq:distance-bound-after-sufficient-samples}}{\leq} O\Paren{\frac{\tau k L \cdot \Paren{\dist(x_t, c^*_i)^p + \bone[\ell = 1] \cdot\rmin^p}}{\estopt}}.
\end{align*} 

Thus, the expected total number of points sampled from $P_{i, \ell}$ is
\begin{align*}
\textstyle
\E{\sum_{t \in P_{i, \ell}} Z_t} &= 
\textstyle
\E{\sum_{t \in P_{i, \ell}} \tZ_t} + \E{\sum_{t \in P_{i, \ell}} (Z_t - \tZ_t)} \\
&\leq 3\tau + O\Paren{\sum_{t \in P_{i, \ell}} \frac{\tau k L \cdot \Paren{\dist(x_t, c^*_i)^p + \bone[\ell = 1] \cdot\rmin^p}}{\estopt}}, 
\end{align*}
where in the first inequality we again use the fact that $\sum_{t \in P_{i, \ell}} \tZ_t \leq 3\tau$ by definition. 

Summing this up over all $i \in [k]$ and $\ell \in [L]$, we get 
\begin{align*}
\E{|S_{\leq n}|} \leq O\Paren{\tau k L + \frac{\tau k L \cdot \Paren{\OPT^{k, p}_{X_{\leq n}} + n \cdot \rmin^p}}{\estopt}} \leq O\Paren{\tau k L + \frac{\tau k L \cdot \OPT^{k, p}_{X_{\leq n}}}{\estopt}},
\end{align*}
where the second inequality is due to our choice of $\rmin$. Finally, the RHS is at most $\frac{\Psi \cdot \tau}{8}$ for sufficiently large constant $D$. Thus, \eqref{eq:size-s-expectation} holds as desired.
\end{proof}

The privacy and efficiency analysis of the sketch is deferred to Appendix~\ref{app:meyerson-privacy-efficiency}.
Finally, turning such a prefix-approximation into a coreset can be easily done via snapping each point $x_t$ to the closest center in the output so far. Again, we need use a continual counting algorithm to compute the weights in DP manner. This is a standard technique that has been applied before in the DP setting (e.g., \cite{GhaziKM20}). The full argument is deferred to Appendix~\ref{subsec:bicriteria-to-coreset}.
\section{From DP Semi-Coreset to DP Consistent Clustering}

As mentioned earlier, we can simply 
invoke the non-private online consistent clustering algorithm (\Cref{thm:consistent-additive}) on the private semi-coreset (\Cref{thm:main-semi-coreset}) to obtain the desired result.
\begin{corollary} \label{thm:dp-consistent-main}
There is an $(\eps, \delta)$-DP online $(k, p)$-clustering $\Paren{k \cdot \polylog{nd}}$-consistent algorithm whose output is an $(O(1), \beta)$-approximation with high probability, and the algorithm runs in time $\tO{ndk^2} + \Lambda^{O(1)}$ and uses space $\Lambda^{O(1)}$ where
\begin{itemize}
    \item $\beta = O\Paren{\frac{d \cdot k^3}{\eps} \cdot \polylog{\frac{n d}{\eps}} }$ and $\Lambda = \frac{kd \log n}{\eps}$ for pure-DP.
    \item $\beta = O\Paren{\frac{\sqrt{d} \cdot k^{2.5}}{\eps} \cdot \polylog{\frac{n d}{\eps \delta}} }$ and $\Lambda = \frac{kd \log\Paren{n/\delta} }{\eps}$ for approximate-DP.
\end{itemize}
\end{corollary}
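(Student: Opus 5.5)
The plan is to compose \Cref{thm:main-semi-coreset} with \Cref{thm:consistent-additive} by post-processing, and then convert the approximation guarantee using \Cref{lem:coreset-approx}. Concretely, I run the DP online semi-coreset algorithm of \Cref{thm:main-semi-coreset} with failure probability $\gamma = 1/\mathrm{poly}(n)$. This produces a multiset stream $\tX = (\tx_1, \dots, \tx_n)$. I then feed the points of $\tx_t$ one at a time into the consistent algorithm of \Cref{thm:consistent-additive}, and at time $t$ output that algorithm's current centers $C_t$.

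Privacy follows immediately: $C$ is a (randomized) function of $\tX$ alone, so post-processing gives $(\eps,\delta)$-DP. The same holds in the adaptive model, since each $C_t$ depends only on $\tX_{\le t}$.

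Consistency needs a small care point, because \Cref{thm:consistent-additive} counts changes over its own input sequence, whose length is $|\tX_{\leq n}|$ rather than $n$. Since we only sample its state at $n$ of those intermediate times, the number of changes we observe is at most the total number of changes. The bound is therefore $k\cdot\polylog{|\tX_{\le n}| d}$. By the size bound from \Cref{lem:bicriteria-approx} and the construction of \Cref{thm:main-semi-coreset}, $|\tX_{\le n}|$ is $\mathrm{poly}(n,d,k)$, or we can use weighted points with at most $\Lambda^{O(1)}$ distinct locations. Hence this is still $k\cdot\polylog{nd}$, assuming $k\le\mathrm{poly}(nd)$ WLOG.

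For utility, fix $t$. With high probability $\tX_{\le t}$ is an $(O(1),O(1),\beta_0)$-semi-coreset of $X_{\le t}$, where $\beta_0$ is the $\beta$ of \Cref{thm:main-semi-coreset}. Also, $C_t$ is an $(O(1),O(1))$-approximation for $\tX_{\le t}$. Applying \Cref{lem:coreset-approx} with $\kappa,\eta_1,\alpha=O(1)$, $\beta=O(1)$ and $\eta_2=\beta_0$ gives an $(O(1), O(\beta_0)+O(1))$-approximation for $X_{\le t}$, which matches the claimed $\beta$. A union bound over $t\in[n]$ and the two high-probability events finishes this step. The $\log(1/\gamma)$ factors are absorbed into the polylogs since $\gamma=1/\mathrm{poly}(n)$.

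Efficiency: the semi-coreset costs $\tO{ndk^2}+\Lambda^{O(1)}$ time and $\Lambda^{O(1)}$ space. \Cref{thm:consistent-additive} costs time and space polynomial in the number of distinct points of $\tX$ and in $d$, and that count is $\Lambda^{O(1)}$. The one point to verify is that the non-private algorithm need not do work at every one of the $n$ time steps when $\tx_t=\emptyset$; it is simply idle then. So the total cost is $\Lambda^{O(1)}$ additional.

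The main obstacle is this accounting: making sure the support of $\tX$ is $\Lambda^{O(1)}$ distinct points, and handling weights and multiplicities so that both the consistency polylog and the running time of \Cref{thm:consistent-additive} depend only on this support. I would first check the construction in Appendix~\ref{subsec:bicriteria-to-coreset}, where points are snapped to $R_{\le n}$ with $|R_{\le n}|\le\Psi^2\Upsilon$, and argue that the points arrive as weighted copies of these centers.
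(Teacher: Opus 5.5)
Your proposal is correct and follows the paper's proof essentially verbatim: run the algorithm of \Cref{thm:main-semi-coreset} with $\gamma = 1/\mathrm{poly}(n)$, feed its output points into the consistent algorithm of \Cref{thm:consistent-additive} as post-processing, and transfer the guarantee via \Cref{lem:coreset-approx}. Your extra bookkeeping fills in details the paper treats as immediate: sampling the consistent algorithm's outputs at only $n$ times cannot increase the number of center changes, and its stream length and support stay polynomially bounded. One correction there: the total multiplicity $|\tX_{\le n}|$ is controlled by the counters' accuracy (roughly $n + K\beta_2$, or at most $nK$ if each estimate is clipped to $[0,t]$), not by the size bound of \Cref{lem:bicriteria-approx}, which only limits the number of distinct centers.
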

\begin{proof}
We run the $(\eps, \delta)$-DP semi-coreset algorithm from \Cref{thm:main-semi-coreset} on input stream $X$ to get a multiset stream $\tX = (\tx_1, \dots, \tx_n)$. We then feed the points in $\tx_1, \dots, \tx_n$ to the online consistent $(k, p)$-clustering algorithm in \Cref{thm:consistent-additive}; the final output $C_t$ is the output from this algorithm after feeding it $\tx_t$. The consistency claim and time/space bounds  immediately follow. The privacy guarantee holds because of the post-processing property of DP, as we only post-process $(\tx_1, \dots, \tx_n)$.

Finally, since $C_t$ is an $(O(1), O(1))$-approximation of $\tX_{\leq t}$ and $\tX_{\leq t}$ is an $(O(1), O(1), \beta)$-semi-coreset of $X_{\leq t}$, \Cref{lem:coreset-approx} implies that $C_t$ is also an $(O(1), O(\beta))$-approximation of $X_{\leq t}$.
\end{proof}
\section{Conclusion and Open Questions}
\label{sec:conclusion}

In this work, we presented a generic reduction that transforms a sensitive data stream into a privatized semi-coreset stream. By allowing any non-private online clustering algorithm to be executed as a post-processing step, our framework yields the first DP online clustering algorithm that achieves a nearly optimal consistency guarantee, alongside a constant approximation ratio and a nearly optimal additive error in terms of the dimension. 

Despite these advances, several open questions and directions for future work remain:
\begin{itemize}[nosep]
\item \textbf{Practicality}:  While our private adaptation of Meyerson's sketch (\Cref{alg:meyerson-main}) is straightforward to implement and practical, the underlying DP dense ball algorithms it relies upon are theoretical. To our knowledge, there are currently no practical implementations of these algorithms, which poses a barrier to real-world deployment.
\item \textbf{Error Dependency on $k$:} While our additive error dependency on dimension $d$ is nearly optimal, the dependency on $k$ leaves room for improvement. In our private Meyerson's sketch (\Cref{alg:meyerson-main}), this sub-optimality is partly due to DP composition across all runs of the dense ball algorithm; it might be possible to improve on this by deriving an ``online'' version of DP dense ball. More generally, while efficient DP algorithms with tight additive errors in terms of $k$ are known in the offline setting~\citep{NguyenCX21,GhaziKKMS24}, none is known for the online setting. 
\item \textbf{Coreset with $(1 \pm o(1))$-Multiplicative Error:} Our online construction (and that of \cite{EMZ23}) only yields a semi-coreset, which naturally leaves the question of whether an online DP coreset construction (i.e., obtaining $\eta_1 = 0$ in \Cref{def:semi-coreset}) is possible. Another interesting question is whether one can reduce the multiplicative error $\kappa$ to $1 + o(1)$. We remark that the latter question is still open even in the offline setting, although there is some evidence that this might be hard~\citep{DP-coreset-hardness}.
\item \textbf{Adversarial Streaming Models:} While our privacy analysis holds for the adversarial setting (where the adversary sees outputs up to time $i - 1$ before selecting $x_i$), our current utility analysis assumes an oblivious data stream. 
There are several challenges. For instance, it has not yet been established whether the non-private Meyerson's sketch maintains its utility guarantee in the adversarial model. In fact, we are not aware of any analysis of the (non-private) consistent $k$-means/median clustering in this model. 
\item \textbf{Running Time Optimization:} Although our algorithm achieves $n^{1+o(1)}$ time complexity over the entire stream for $k, d \le n^{o(1)}$, the running time could be optimized further, e.g., by employing approximate nearest neighbor data structures. Furthermore, it also seems plausible that the $\Lambda$ term in our main theorem (\Cref{thm:main-semi-coreset})   can be optimized further. Currently, the polynomial dependency on $k, d$ through $\Lambda$ is due to the use of the 1-Cluster algorithm by \cite{GhaziKM20} as a blackbox (see Appendix~\ref{app:dense-ball}). It remains an interesting question whether this can be sped up further, e.g., to nearly linear time in $k$.
\end{itemize}  
\paragraph{AI Disclosure.} While all the main algorithms and proofs in this work were discovered by the authors, we used AI, including Gemini (Google) and Claude (Anthropic), to assist in preparing this paper. In particular, we used AI to flesh out proof sketches to first drafts of full proofs, and to help with drafting the introduction and conclusion. These drafts are heavily edited and verified by the authors, and we take full responsibility for all claims in this work.

\bibliographystyle{plainnat}
\bibliography{ref}

\appendix 
\section{Additional Background and Missing Proofs from \texorpdfstring{\Cref{sec:prelim}}{Section~\ref{sec:prelim}}} \label{app:prelim}

When we state the running time and space complexity of an algorithm, we always assume that each coordinate of $x_i$ fits in a single word of memory and arithmetic operations on them take constant time.  This, in particular, means that reading each $x_i$ takes 
$O(d)$ time and storing each $x_i$ uses $O(d)$ space. 


\paragraph{Approximate Triangle Inequality.}
The following ``approximate triangle inequality'' for the $p$th power will be useful in our proofs.

\begin{lemma}[Approximate Triangle Inequality, \citep{MakarychevMR19}] \label{lem:approx_triangle}
For any $p \ge 1$ and $\xi \in (0,1]$, there exists a constant $\vartheta_{p, \xi} > 0$ that depends only on $p$ and $\xi$ such that for any $a, b, c$ in a metric space, 
\begin{align*}
    \dist(a, b)^p \leq (1 + \xi) \cdot \dist(a, c)^p + \vartheta_{p, \xi} \cdot \dist(c, b)^p.
\end{align*}
\end{lemma}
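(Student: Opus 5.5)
The plan is to reduce the statement to a scalar inequality using the ordinary triangle inequality, and then prove that inequality by convexity of $s \mapsto s^p$. Write $x := \dist(a, c)$ and $y := \dist(c, b)$. By the triangle inequality, $\dist(a, b) \le x + y$. Since $s \mapsto s^p$ is nondecreasing on $[0, \infty)$, it suffices to show that
\[
(x + y)^p \le (1 + \xi) \cdot x^p + \vartheta_{p, \xi} \cdot y^p \qquad \text{for all } x, y \ge 0,
\]
for some constant $\vartheta_{p, \xi}$ depending only on $p$ and $\xi$.

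The case $p = 1$ is immediate: we may take $\vartheta_{1, \xi} = 1$, since $x + y \le (1 + \xi) x + y$. For $p > 1$, fix a parameter $\lambda \in (0, 1)$ to be chosen later, and write $x + y$ as a convex combination:
\[
x + y = (1 - \lambda) \cdot \frac{x}{1 - \lambda} + \lambda \cdot \frac{y}{\lambda}.
\]
By convexity of $s \mapsto s^p$ (Jensen's inequality), this gives
\[
(x + y)^p \le (1 - \lambda)^{1 - p} \, x^p + \lambda^{1 - p} \, y^p .
\]

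It remains to choose $\lambda$ so that the coefficient of $x^p$ equals $1 + \xi$. Setting $1 - \lambda := (1 + \xi)^{-1/(p - 1)}$ gives $(1 - \lambda)^{1 - p} = 1 + \xi$. Because $\xi > 0$, this choice satisfies $\lambda \in (0, 1)$, so the convex-combination step above is valid. With this $\lambda$ we set
\[
\vartheta_{p, \xi} := \lambda^{1 - p} = \Paren{1 - (1 + \xi)^{-1/(p - 1)}}^{1 - p},
\]
which is finite, positive, and depends only on $p$ and $\xi$. Combining this bound with $\dist(a, b)^p \le (x + y)^p$ yields the lemma.

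There is no real obstacle in this argument. The only points to check are that the chosen $\lambda$ lies strictly inside $(0, 1)$, which is exactly where $\xi > 0$ is used, and that the case $p = 1$ is handled separately, since the exponent $-1/(p - 1)$ is undefined there.
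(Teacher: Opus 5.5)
Your proof is correct and complete. The reduction to $(x+y)^p \le (1+\xi)x^p + \vartheta_{p,\xi} y^p$ via the triangle inequality is valid, the Jensen step with weights $1-\lambda$ and $\lambda$ is valid, and the choice $(1-\lambda)^{1-p} = 1+\xi$ (with $\lambda \in (0,1)$ because $\xi>0$) is valid. The paper gives no proof of its own and simply cites \citet{MakarychevMR19}, where the inequality is usually stated as $\dist(a,b)^p \le (1+\eps)^{p-1}\dist(a,c)^p + (1+1/\eps)^{p-1}\dist(c,b)^p$ and proved by the same convexity argument; that is exactly your bound with $1-\lambda = 1/(1+\eps)$.
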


\subsection{Properties of Semi-Coreset}

\subsubsection{Monge's Optimal Transport and Semi-Coreset}

We use a generalization of the optimal transport~\citep{monge1781memoire}, as defined in \citep{ChangGKM21}.

\begin{definition}[Generalized Transport Cost, \citep{ChangGKM21}]
Let $X$ and $\tX$ be weighted point sets over a metric space $U$. For a mapping $\Psi: U \to U$, the \emph{generalized $p$-transport cost} is defined as:
\begin{align} \label{eq:gen-transport}
    \mt(\Psi, X, \tX) := \sum_{x \in U} w_X(x) \cdot \dist(x, \Psi(x))^p + \sum_{y \in U} |w_X(\Psi^{-1}(y)) - w_{\tX}(y)|,
\end{align}
where we use the notation $w_X(x)$ to denote the weight of point $x$ in $X$ and $w_X(S) = \sum_{x \in S} w_X(x)$.
\end{definition}

In other words, instead of allowing just the distance error (first term), we also allow weight mismatch (second term). This make it convenient for DP, which naturally produces noisy weights.

In \cite{ChangGKM21}, it was shown that if two sets have ``small'' optimal transport cost, then one is a good coreset of the other. However, the ``small'' cost requirement in \cite{ChangGKM21} is restrictive: It requires $\mt(\Psi, X, \tX) \leq \alpha \cdot \OPT^{k,p}_X + \beta$ for $\alpha < 1$.  On the other hand, our proof will use $\alpha > 1$. Nevertheless, we show that the statement is still true for $\alpha > 1$, albeit with semi-coreset instead of coreset. This is formalized in the lemma below, which will be handy in our proofs.

\begin{lemma}[Transport $\implies$ Semi-Coreset] \label{lem:transport_coreset}
Assume that all points in $U$ have a pairwise distance bounded by 1 (i.e., diameter $\le 1$). Let $X$ and $\tX$ be weighted point sets over $U$. If there exists a mapping $\Psi: X \to U$ such that $\mt(\Psi, X, \tX) \leq \alpha \cdot \OPT^{k,p}_X + \beta$, then $\tX$ is a $(\kappa, \eta_1, \eta_2)$-semi-coreset for $X$ with parameters $\kappa = O(1 + \alpha)$, $\eta_1 = O(\alpha)$, and $\eta_2 = O(\beta)$.
\end{lemma}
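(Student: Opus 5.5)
Fix an arbitrary $C \in \binom{U}{k}$ and compare $\cost^p_X(C)$ with $\cost^p_{\tX}(C)$ by writing both as sums over the points $y \in U$ in the image of $\Psi$. Let $W_y := w_X(\Psi^{-1}(y))$ be the weight transported to $y$. We pass through the intermediate weighted set $Y := \Psi(X)$, which puts weight $W_y$ on $y$. The transport cost splits as $A + B$, where
\[
A := \sum_x w_X(x)\,\dist(x,\Psi(x))^p, \qquad B := \sum_y |W_y - w_{\tX}(y)|,
\]
and $A + B \le \alpha\,\OPT^{k,p}_X + \beta$. We then prove two comparisons, $X \leftrightarrow Y$ (moving points) and $Y \leftrightarrow \tX$ (changing weights), and compose them.

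\textbf{Step 1 (moving points).} Apply \Cref{lem:approx_triangle} with $\xi = 1$, once in each direction:
\[
\dist(\Psi(x),C)^p \le 2\,\dist(x,C)^p + \vartheta\,\dist(x,\Psi(x))^p, \qquad \dist(x,C)^p \le 2\,\dist(\Psi(x),C)^p + \vartheta\,\dist(x,\Psi(x))^p.
\]
Summing with weights $w_X(x)$ gives
\[
\cost^p_Y(C) \le 2\cost^p_X(C) + \vartheta A, \qquad \cost^p_X(C) \le 2\cost^p_Y(C) + \vartheta A.
\]

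\textbf{Step 2 (changing weights).} Since the diameter is at most $1$, every term satisfies $\dist(y,C)^p \le 1$. Hence
\[
|\cost^p_Y(C) - \cost^p_{\tX}(C)| \le \sum_y |W_y - w_{\tX}(y)| \le B.
\]
The same bound holds for points of $\tX$ outside the image of $\Psi$: these have $W_y = 0$, so they are already counted in $B$.

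\textbf{Step 3 (combining).} For the upper bound,
\[
\cost^p_{\tX}(C) \le 2\cost^p_X(C) + \vartheta A + B \le 2\cost^p_X(C) + O(\alpha)\,\OPT^{k,p}_X + O(\beta).
\]
The key point is that $\OPT^{k,p}_X \le \cost^p_X(C)$ because $C$ has $k$ centers. So this is at most $O(1+\alpha)\cost^p_X(C) + O(\beta)$, which gives $\kappa = O(1+\alpha)$ and $\eta_2 = O(\beta)$ on the right side of \Cref{def:semi-coreset}. For the lower bound,
\[
\cost^p_X(C) \le 2\cost^p_{\tX}(C) + 2B + \vartheta A \le 2\cost^p_{\tX}(C) + O(\alpha)\,\OPT^{k,p}_X + O(\beta).
\]
Dividing by $\kappa \ge 2$ yields
\[
\tfrac1\kappa \cost^p_X(C) - \eta_1 \OPT^{k,p}_X - \eta_2 \le \cost^p_{\tX}(C), \qquad \eta_1 = O(\alpha),\ \eta_2 = O(\beta).
\]

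The main obstacle is Step 3's upper bound. The additive term $\alpha\,\OPT^{k,p}_X$ is not allowed on that side of \Cref{def:semi-coreset}, and the inequality $\OPT^{k,p}_X \le \cost^p_X(C)$ is what absorbs it into the multiplicative factor $\kappa$. This absorption is exactly why $\kappa$ grows with $\alpha$, and why the regime $\alpha > 1$ yields only a semi-coreset rather than a coreset. The remaining care is bookkeeping: if $\Psi$ is defined only on $X$, extend it arbitrarily, and treat negative or fractional weights in $\tX$ uniformly via the absolute-value term $B$.
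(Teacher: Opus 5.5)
Your proof is correct and is essentially the paper's argument. As there, the approximate triangle inequality with $\xi = 1$ controls the displacement term, the unit diameter bounds each weight-mismatch term by $|W_y - w_{\tX}(y)|$, and $\OPT^{k,p}_X \le \cost^p_X(C)$ absorbs $\alpha\,\OPT^{k,p}_X$ into $\kappa$ on the upper side. Routing through the intermediate weighted set $Y = \Psi(X)$ just spells out the lower-bound direction that the paper dismisses as a ``symmetric argument''. That direction correctly yields $2B + \vartheta A$ rather than the paper's stated $\max(\vartheta_{p,1},1)\cdot \mt$, a discrepancy that is harmless once absorbed into the $O(\cdot)$ constants.
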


\begin{proof}
Consider any set $C \in \binom{U}{k}$ of $k$ centers. For any point $y \in U$, since the diameter of the metric space is bounded by 1, we have $\dist(y, C)^p \leq 1$. By applying the approximate triangle inequality (Lemma \ref{lem:approx_triangle}) with $\xi = 1$, we can bound the cost of $C$ on the estimated set $\tX$:
\begin{align*}
    &\cost^p_{\tX}(C)\\
    &= \sum_{y \in U} w_{\tX}(y) \cdot \dist(y, C)^p \\
    &\leq \sum_{y \in U} w_X(\Psi^{-1}(y)) \cdot \dist(y, C)^p + \sum_{y \in U} |w_{\tX}(y) - w_X(\Psi^{-1}(y))| \cdot \dist(y, C)^p \\
    &\leq \sum_{x \in U} w_X(x) \cdot \dist(\Psi(x), C)^p + \sum_{y \in U} |w_{\tX}(y) - w_X(\Psi^{-1}(y))| \\
    &\leq \sum_{x \in U} w_X(x) \cdot \left( 2 \cdot \dist(x, C)^p + \vartheta_{p,1} \cdot \dist(x, \Psi(x))^p \right) + \sum_{y \in U} |w_{\tX}(y) - w_X(\Psi^{-1}(y))| \\
    &\leq 2 \cost^p_X(C) + \max(\vartheta_{p,1}, 1) \cdot \mt(\Psi, X, \tX).
\end{align*}
Using the bounds $\mt(\Psi, X, \tX) \leq \alpha \cdot \OPT^{k,p}_X + \beta$ and  $\OPT^{k,p}_X \leq \cost^p_X(C)$, we obtain the upper bound:
\begin{align*}
    \cost^p_{\tX}(C) &\leq 2 \cost^p_X(C) + \max(\vartheta_{p,1}, 1) \left( \alpha \cdot \OPT^{k,p}_X + \beta \right) \\
    &\leq \left(2 + \max(\vartheta_{p,1}, 1) \cdot \alpha \right) \cost^p_X(C) + \max(\vartheta_{p,1}, 1) \beta.
\end{align*}
A symmetric argument upper bounds the true cost:
\begin{align*}
    \cost^p_X(C) \leq 2 \cost^p_{\tX}(C) + \max(\vartheta_{p,1}, 1) \left( \alpha \cdot \OPT^{k,p}_X + \beta \right).
\end{align*}
Rearranging this to isolate $\cost^p_{\tX}(C)$ yields the lower bound:
\begin{align*}
    \cost^p_{\tX}(C) \geq \frac{1}{2} \cost^p_X(C) - \frac{\max(\vartheta_{p,1}, 1) \cdot \alpha}{2} \OPT^{k,p}_X - \frac{\max(\vartheta_{p,1}, 1)}{2} \cdot \beta.
\end{align*}
By defining $\kappa = \max(2 + \max(\vartheta_{p,1}, 1) \cdot \alpha, 2) = O(1 + \alpha)$, $\eta_1 = \frac{\max(\vartheta_{p,1}, 1) \cdot \alpha}{2} = O(\alpha)$, and $\eta_2 = \max(\vartheta_{p,1}, 1) \cdot \beta = O(\beta)$, both the upper and lower bounds of the semi-coreset definition are satisfied.
\end{proof}

\subsubsection{Proof of \texorpdfstring{\Cref{lem:coreset-approx}}{Lemma~\ref{lem:coreset-approx}}}

\begin{proof}[Proof of \Cref{lem:coreset-approx}]
Let $C^*_{X}$ denote the optimal set of $k$ centers for $X$ such that $\cost^p_{X}(C^*_{X}) = \OPT^{k,p}_{X}$. By the definition of a $(\kappa, \eta_1, \eta_2)$-semi-coreset applied to $C^*_{X}$, we have the upper bound
\begin{align*}
\cost^p_{\tX}(C^*_{X}) \leq \kappa \cdot \cost^p_{X}(C^*_{X}) + \eta_2 
= \kappa \cdot \OPT^{k,p}_{X} + \eta_2.
\end{align*}

Since $\OPT^{k,p}_{\tX} \leq \cost^p_{\tX}(C)$ for any $C \in \binom{U}{k}$, it follows that $\OPT^{k,p}_{\tX} \leq \cost^p_{\tX}(C^*_{X})$.
By assumption, the set $C$ of centers is an $(\alpha, \beta)$-approximation for $\tX$. Therefore,
\begin{align}
\cost^p_{\tX}(C) &\leq \alpha \cdot \OPT^{k,p}_{\tX} + \beta \nonumber \\
&\leq \alpha \cdot \cost^p_{\tX}(C^*_{X}) + \beta \nonumber \\
&\leq \alpha \cdot \left( \kappa \cdot \OPT^{k,p}_{X} + \eta_2 \right) + \beta \nonumber \\
&= \alpha \kappa \cdot \OPT^{k,p}_{X} + \alpha \eta_2 + \beta. \label{eq:approx_bound}
\end{align}

Now, we apply the lower bound property of the $(\kappa, \eta_1, \eta_2)$-semi-coreset to the centers $C$:
\begin{align}
\frac{1}{\kappa} \cdot \cost^p_{X}(C) - \eta_1 \cdot \OPT^{k, p}_{X} - \eta_2 \leq \cost^p_{\tX}(C). \label{eq:coreset_lower}
\end{align}

Substituting the upper bound from \eqref{eq:approx_bound} into \eqref{eq:coreset_lower}, we obtain:
\begin{align*}
\frac{1}{\kappa} \cdot \cost^p_{X}(C) - \eta_1 \cdot \OPT^{k, p}_{X} - \eta_2 \leq \alpha \kappa \cdot \OPT^{k,p}_{X} + \alpha \eta_2 + \beta.
\end{align*}

Rearranging to isolate $\cost^p_{X}(C)$ yields:
\begin{align*}
\frac{1}{\kappa} \cdot \cost^p_{X}(C) &\leq (\alpha \kappa + \eta_1) \cdot \OPT^{k,p}_{X} + (\alpha + 1) \eta_2 + \beta \\
\cost^p_{X}(C) &\leq (\alpha \kappa^2 + \eta_1 \kappa) \cdot \OPT^{k,p}_{X} + \kappa(\alpha + 1)\eta_2 + \kappa \beta.
\end{align*}
This completes the proof.
\end{proof}

\subsection{Consistent \texorpdfstring{$k$}{k}-Clustering via Grid Snapping: Proof of \texorpdfstring{\Cref{thm:consistent-additive}}{Theorem~\ref{thm:consistent-additive}}}
\label{subsec:consistent-snapping}

As mentioned in \Cref{subsec:prelim-consistent}, the original consistency bound in \cite{FichtenbergerLN21} depends on the \emph{aspect ratio} of the metric space, which is defined as the ratio between the largest distance and the smallest non-zero distance. Their result can be stated as follows:

\begin{theorem}[\cite{FichtenbergerLN21}] \label{thm:fichtenberger-consistent}
Let $\cM = (U, d)$ be any metric space with aspect ratio $\Delta$. There is an online $(k, p)$-clustering $\Paren{k \cdot \polylog{n\Delta}}$-consistent algorithm for metric space $\cM$ whose output is an $(O(1), 0)$-approximation with high probability.
Moreover, the algorithm has time and space complexity at most polynomial in the number of distinct points in the input stream and the time needed to compute distances between points.
\end{theorem}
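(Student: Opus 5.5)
This result is due to \cite{FichtenbergerLN21}, and the plan is to reconstruct their argument as follows. I take the objective to be $\OPT^{k,p}_{X_{\leq t}}$, which is monotone nondecreasing in $t$. The idea is to split the stream into \emph{phases} according to the scale of the optimum.

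First, the degenerate case. While $\OPT^{k,p}_{X_{\leq t}} = 0$, the prefix has at most $k$ distinct points. We output exactly these points, padding with copies, so each new distinct point costs one change and the whole degenerate period costs at most $k$ changes. Afterwards every nonzero distance is at least $\Delta^{-1}$ (normalizing the diameter to $1$), so $\Delta^{-p} \leq \OPT^{k,p}_{X_{\leq t}} \leq n$. We start a new phase whenever the optimum doubles, computed offline on the distinct points, which is allowed since only polynomial time is required. There are then only $O(p \log \Delta + \log n) = O(\log(n\Delta))$ phases.

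Second, within a phase, with $\OPT$ known up to a factor $2$, I would run a Meyerson-type sketch (\Cref{alg:meyerson-nonprivate}) with $\estopt$ set to the phase's lower bound. This yields an \emph{incremental} bicriteria set $S$, where points are only added and never removed. It has $O(k \log(n\Delta))$ points in expectation and gives $O(1)$ prefix cost, as in the analysis of \Cref{lem:bicriteria-approx} with $\densealg$ replaced by the identity. The output $k$-set $C_t$ is obtained from the weighted instance given by $S$, with each point's weight being the number of input points assigned to it. When a point is added to $S$, or a weight changes substantially, we update $C_t$ by a bounded number of swaps. Each addition to $S$ should be charged $O(1)$ center changes, and a full recomputation, costing $k$ changes, happens only $O(1)$ times per phase. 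This would give $k \cdot \polylog{n\Delta}$ changes in total over the $O(\log(n\Delta))$ phases. High probability comes from running $O(\log n)$ independent copies and boosting.

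The main obstacle is the intra-phase consistency. Weights on $S$ grow as new points snap to existing sketch points, so an $O(1)$-approximate $k$-set for the weighted $S$ can go stale without any new point entering $S$. Naively recomputing whenever the current cost exceeds twice the cost at the last recomputation does not obviously bound the number of recomputations, because the cost of the stale solution can grow while $\OPT$ stays within the same phase. To handle this, I would first try the swap-based maintenance of \cite{FichtenbergerLN21}: a solution is kept which is locally stable, and each time the cost crosses a threshold $(1+\Theta(1))^j \cdot \estopt$ only $O(\polylog{n\Delta})$ swaps are made, amortized against the geometric growth of the cost within the phase. Since the phase optimum lies in $[\estopt, 2\estopt]$ and the solution stays an $O(1)$-approximation, the maintained cost stays $O(\estopt)$. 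Only $O(1)$ threshold crossings can therefore occur per phase, which bounds the swaps per phase.
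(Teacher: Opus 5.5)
\textbf{The gap: bounding the intra-phase changes.} The paper gives no proof of this statement. It imports it from \cite{FichtenbergerLN21} as a black box and only derives the aspect-ratio-free \Cref{thm:consistent-additive} from it by grid snapping. Your outer scaffolding is fine: the $\OPT=0$ prefix costs at most $k$ changes, there are $O(\log(n\Delta))$ doubling phases, and within a phase an incremental Meyerson sketch gives $O(k\log(n\Delta))$ weighted points with $O(1)$ prefix cost. But the step that carries all the content is not established: maintaining an $O(1)$-approximate $k$-set on the growing weighted sketch with $\tilde{O}(k)$ changes per phase. Your counting argument for this step is wrong. You say the maintained cost stays $O(\estopt)$, so only $O(1)$ threshold crossings occur per phase. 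However, the cost of the maintained solution on $X_{\le t}$ is not monotone in $t$: arrivals and weight increases push it up, and your own repair swaps push it back down. Confining it to $[\estopt, O(\estopt)]$ bounds the number of distinct thresholds, not the number of times they are crossed. The same threshold can therefore be re-triggered arbitrarily often within one phase. This is exactly the stale-solution problem you identified for the ``recompute when the cost doubles'' rule, reappearing unchanged.

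\textbf{The per-event costs are asserted, not derived.} You assume $O(1)$ center changes per new sketch point, $\polylog{n\Delta}$ swaps per trigger, and only $O(1)$ full recomputations per phase. Nothing in the proposal shows that an $O(1)$-approximate solution before a point enters $S$ (or before weights shift) is within $O(1)$ swaps of an $O(1)$-approximate solution afterwards. Even if such single repairs existed, their approximation losses compound, so full recomputations are still needed and you would again have to bound how often they occur. Local search does not supply this either: it only guarantees an improving swap with gain on the order of $(\cost - c\cdot\OPT)/k$. The natural charging scheme pays $k$ changes per recomputation and charges recomputations to the $O(k\log(n\Delta))$ sketch insertions. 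That gives at best $\tilde{O}(k^2)$, the bound known before \cite{FichtenbergerLN21}. Closing the gap to $\tilde{O}(k)$ is precisely their new structural result, so ``use the swap-based maintenance of \cite{FichtenbergerLN21}'' amounts to citing the theorem you are trying to prove.

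\textbf{Two smaller, fixable points.} First, you cannot boost by taking a union of independent copies, because exactly $k$ centers must be output. You would instead switch copies upon a detected failure, at $k$ changes per switch, and you need success in every phase. Second, phase boundaries should use a polynomial-time $O(1)$-approximation of $\OPT$; the exact optimum is not computable in polynomial time.
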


To bypass the aspect ratio assumption in consistent clustering, we transform the input stream by snapping points to a fine grid. We show that this transformation controls the aspect ratio of the active metric space while incurring only a constant additive error, which we can formalize via the generalized transport cost.

\begin{definition}[Grid-Snapped Stream] \label{def:grid-snapped}
Let $X = (x_1, \dots, x_n)$ be a stream of points in $\R^d$ with a metric diameter at most $1$. Define the grid $G = (\frac{1}{nd}\Z)^d$. Let the snapped stream be $\hat{X} = (\hat{x}_1, \dots, \hat{x}_n)$, where each $\hat{x}_t$ is obtained by rounding each coordinate of $x_t$ to a multiple of $\frac{1}{nd}$ in such a way that its absolute value does not increase.
\end{definition}

\begin{lemma}[Snapped Stream as a Semi-Coreset] \label{lem:snapped-coreset}
For any $t \in [n]$, the snapped multiset $\hat{X}_{\le t}$ is a $(\kappa, 0, \eta_2)$-semi-coreset for $X_{\le t}$ for the $(k, p)$-clustering objective, where $\kappa = O(1)$ and $\eta_2 = O(1)$.
\end{lemma}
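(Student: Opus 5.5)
The plan is to apply \Cref{lem:transport_coreset} with the natural snapping map $\Psi(x) = \hat{x}$, taking $X$ to be $X_{\le t}$ and $\tX$ to be $\hX_{\le t}$, each with unit weight per occurrence. Since every point of $X_{\le t}$ is sent to its own snapped copy, we get $w_X(\Psi^{-1}(y)) = w_{\hX_{\le t}}(y)$ for every $y$, so the weight-mismatch term of the generalized transport cost \eqref{eq:gen-transport} is zero. The whole transport cost is then the distance term $\sum_{x \in X_{\le t}} \dist(x, \hat{x})^p$.

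Next I bound the per-point displacement. Each coordinate moves by less than $\frac{1}{nd}$, so $\dist(x,\hat{x}) \le \sqrt{d}\cdot \frac{1}{nd} \le \frac{1}{n}$. Because $p \ge 1$ and $\frac1n \le 1$, this gives $\dist(x,\hat x)^p \le \frac1n$. Summing over at most $t \le n$ points yields $\mt(\Psi, X_{\le t}, \hX_{\le t}) \le 1$. This is the hypothesis of \Cref{lem:transport_coreset} with $\alpha = 0$ and $\beta = 1$.

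\Cref{lem:transport_coreset} then says $\hX_{\le t}$ is a $(\kappa,\eta_1,\eta_2)$-semi-coreset with $\kappa = O(1)$, $\eta_1 = O(\alpha) = 0$ and $\eta_2 = O(1)$. From its proof, the constants are $\kappa = 2$, $\eta_1 = 0$ and $\eta_2 = \max(\vartheta_{p,1},1)$.

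The only point needing care is that \Cref{lem:transport_coreset} assumes all relevant points lie in a space of diameter at most $1$, and it uses $\dist(y,C)^p \le 1$ for the (here vanishing) mismatch term. Rounding toward zero never increases any coordinate's absolute value, so $\|\hat{x}\| \le \|x\| \le \frac12$ and $\hat{x} \in U = \cB(0,\frac12)$; hence the diameter bound is preserved. With $\alpha = 0$ the mismatch term is zero anyway, so I do not expect a real obstacle. The only step to state explicitly is the check that $\hat{x} \in U$, so that the centers $C \in \binom{U}{k}$ and all points live in the same bounded space.
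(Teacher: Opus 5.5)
Your proposal is correct and follows essentially the same route as the paper. The identity snapping map gives zero weight mismatch, each point moves by at most $\frac{1}{n\sqrt{d}} \le \frac{1}{n}$ so the transport cost is at most $1$, and \Cref{lem:transport_coreset} with $\alpha = 0$, $\beta = 1$ then gives $\kappa = O(1)$, $\eta_1 = 0$, $\eta_2 = O(1)$; your explicit checks that $(1/n)^p \le 1/n$ for $p \ge 1$ and that rounding toward zero keeps $\hat{x} \in U$ are details the paper leaves implicit, and both are right.
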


\begin{proof}
Consider the natural mapping $\Psi: X_{\le t} \to \hat{X}_{\le t}$ given by $\Psi(x_i) = \hat{x}_i$. Because the point weights match perfectly, the weight difference term in the generalized transport cost \eqref{eq:gen-transport} is 0. 

For the distance term in \eqref{eq:gen-transport}, rounding each coordinate to displaces a point by at most $\frac{1}{nd}$ along any dimension. The $\ell_2$ displacement is therefore bounded by:
\begin{align*}
    \dist(x_i, \hat{x}_i) \leq \sqrt{d \left(\frac{1}{nd}\right)^2} = \frac{\sqrt{d}}{nd} \leq \frac{1}{n\sqrt{d}} \leq \frac{1}{n}.
\end{align*}
For the $(k, p)$-clustering objective, the generalized $p$-transport cost is:
\begin{align*}
    \mt(\Psi, X_{\le t}, \hat{X}_{\le t}) = \sum_{i=1}^t \dist(x_i, \hat{x}_i)^p \leq t \cdot \frac{1}{n} \leq 1.
\end{align*}
By applying Lemma \ref{lem:transport_coreset}, $\hat{X}_{\le t}$ is an $(O(1), 0, O(1))$-semi-coreset for $X_{\le t}$.
\end{proof}

We are now ready to prove \Cref{thm:consistent-additive}.

\begin{proof}[Proof of \Cref{thm:consistent-additive}]
Instead of clustering the original stream $X$, we maintain and cluster the grid-snapped stream $\hat{X}$ as defined above (\Cref{def:grid-snapped}). Since all points in $\hat{X}$ are restricted to the grid $G = (\frac{1}{nd}\Z)^d$ and the overall diameter is at most 1, the maximum distance between any two points in $\hat{X}$ is 1. Conversely, the minimum non-zero distance between any two distinct grid points is at least $\frac{1}{nd}$. Therefore, the aspect ratio $\Delta$ of the active points in $\hat{X}$ is strictly bounded by $nd$.

We run the consistent $k$-clustering algorithm from Theorem \ref{thm:fichtenberger-consistent} over the snapped stream $\hat{X}$. Because $\Delta \le nd$, this guarantees a total consistency cost of $k \cdot \polylog{n \Delta} \le k \cdot \polylog{n d}$. 
    
For the utility guarantee, Theorem \ref{thm:fichtenberger-consistent} ensures that with high probability, for all $t \in [n]$, $C_t$ is an $(O(1), 0)$-approximation on the snapped points $\hX_{\leq t}$.
By Lemma \ref{lem:snapped-coreset}, $\hat{X}_{\leq t}$ is a $(O(1), 0, O(1))$-semi-coreset for $X_{\leq t}$. Applying \Cref{lem:coreset-approx}, we can thus conclude that $C_t$ is an $(O(1), O(1))$-approximation for the original set $X_{\leq t}$, which concludes our proof.
\end{proof}

\subsection{Differential Privacy}

As mentioned in \Cref{sec:prelim-dp}, our privacy proof holds in the stronger \emph{adaptive} model as well. Here we follow terminologies from \cite{Jain23}, although similar models were considered even before (e.g. \cite{ThakurtaS13}). The adaptive setting can be defined using a game between the adversary and the algorithm as follows.

\begin{definition}[Adaptive continual release game]\label{def:game}
Let $\mathcal{B}$ be an adversary and $b\in\{0,1\}$. The game $\Expt^b(\mathcal{B},\mathcal{M})$
proceeds as follows.
\begin{enumerate}
  \item For $t=1,\dots,n$, given the transcript $\pi_{t-1} = (o_1, \dots, o_{t-1})$, the adversary declares
        $\mathrm{type}_t\in\{\textsf{regular},\textsf{challenge}\}$, where \textsf{challenge} is
        declared at most once over the whole game.
        \begin{itemize}
          \item If $\mathrm{type}_t=\textsf{regular}$, $\mathcal{B}$ outputs a single
            $x_t$, which is sent to $\mathcal{M}$ in both worlds.
          \item If $\mathrm{type}_t=\textsf{challenge}$, set $i^\ast\leftarrow t$;
            $\mathcal{B}$ outputs a pair $(x_t^0,x_t^1)$ with
            $\perp\in\{x_t^0,x_t^1\}$, and $x_t^b$ is sent to $\mathcal{M}$.
        \end{itemize}
  \item $\mathcal{M}$ releases $o_t$, which is appended to the transcript.
  \item The game outputs $\pi_n$.
\end{enumerate}
\end{definition}

\begin{definition}[DP under adaptive inputs]\label{def:adaptivedp}
$\mathcal{M}$ is $(\eps,\delta)$-DP in the adaptive continual release model if for every
adversary $\mathcal{B}$ and every measurable set $T$ of transcripts,
\[
  \Pr\bigl[\Expt^0(\mathcal{B},\mathcal{M})\in T\bigr]
  \;\le\; e^{\eps}\,\Pr\bigl[\Expt^1(\mathcal{B},\mathcal{M})\in T\bigr]+\delta .
\]
\end{definition}

It is clear that \Cref{def:adaptivedp} is no weaker than that in \Cref{def:nonadaptivedp}: For a fixed pair neighboring streams $X = (x_1, \dots, x_n), X' = (x'_1, \dots, x'_n)$, the adaptive adversary can always simply set $x_t^0 = x_t$ and $x_t^1 = x'_t$ for all $t \in [n]$ and recovers the non-adaptive guarantee.

Henceforth, for online algorithms, we will always consider privacy in the stronger adaptive continual release model (\Cref{def:adaptivedp}). For brevity, we will sometimes not state this explicitly.

\subsubsection{Tools from Differential Privacy}

We next state two lemmas which will be helpful in the privacy proofs. First is the so-called ``post-processing'' property of DP, specialized to the adaptive continual release model below.

\begin{lemma}[Online post-processing]\label{lem:postproc}
Let $\mathcal{M}$ be $(\eps,\delta)$-DP in the adaptive continual release model, and let
$\mathcal{M}'$ be the online mechanism that runs $\mathcal{M}$ internally and, at step $t$,
releases $g_t(\pi_t;\sigma)$, where $\pi_t = (o_1, \dots, o_t)$ is $\mathcal{M}$'s transcript prefix, $g_t$ is a
fixed function, and $\sigma$ is randomness independent of the input. Then $\mathcal{M}'$ is
$(\eps,\delta)$-DP in the adaptive continual release model.
\end{lemma}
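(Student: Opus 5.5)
We prove \Cref{lem:postproc} by a direct reduction: from any adversary against $\mathcal{M}'$ we build an adversary against $\mathcal{M}$, and use the fact that the $\mathcal{M}'$ transcript is a fixed randomized function of the $\mathcal{M}$ transcript. Throughout, fix the input-independent randomness $\sigma$; at the end we average over it.

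\textbf{Step 1 (simulation).} Let $\mathcal{B}'$ be any adversary for $\mathcal{M}'$. Define an adversary $\mathcal{B}_\sigma$ for $\mathcal{M}$ that knows $\sigma$. At step $t$, $\mathcal{B}_\sigma$ sees the $\mathcal{M}$-transcript $\pi_{t-1}=(o_1,\dots,o_{t-1})$ and computes
\[
\pi'_{t-1}=\bigl(g_1(\pi_1;\sigma),\dots,g_{t-1}(\pi_{t-1};\sigma)\bigr).
\]
It then runs $\mathcal{B}'$ on $\pi'_{t-1}$ and forwards whatever $\mathcal{B}'$ declares: the type, and either $x_t$ or the pair $(x_t^0,x_t^1)$. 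Any internal randomness of $\mathcal{B}'$ can be fixed or folded into $\mathcal{B}_\sigma$, since the adaptive definition quantifies over all adversaries and hence covers randomized ones by averaging. Since $\mathcal{B}'$ declares a challenge at most once, so does $\mathcal{B}_\sigma$. By induction on $t$, for each $b\in\{0,1\}$ the joint law of the inputs and outputs in $\Expt^b(\mathcal{B}_\sigma,\mathcal{M})$, after mapping the transcript through $G_\sigma(\pi_n):=(g_t(\pi_t;\sigma))_{t\le n}$, equals the law of $\Expt^b(\mathcal{B}',\mathcal{M}')$ conditioned on $\sigma$.

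\textbf{Step 2 (apply DP).} For a measurable set $T$ of $\mathcal{M}'$-transcripts, the DP guarantee of $\mathcal{M}$ applied to the set $G_\sigma^{-1}(T)$ gives
\[
\Pr[\Expt^0(\mathcal{B}',\mathcal{M}')\in T\mid\sigma]\le e^{\eps}\Pr[\Expt^1(\mathcal{B}',\mathcal{M}')\in T\mid\sigma]+\delta .
\]
Since $\sigma$ is independent of the input, we can integrate this inequality over $\sigma$, which yields the claim.

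\textbf{Main obstacle.} The delicate point is Step 1. We must check that $\mathcal{B}_\sigma$ is a legitimate adaptive adversary: its choice at time $t$ depends only on $\pi_{t-1}$, plus $\sigma$ and its own coins, which are independent of the game bit $b$. This requires $g_t$ to depend only on the prefix $\pi_t$, so that causality is preserved. The inductive coupling argument then goes through.
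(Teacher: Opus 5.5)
Your argument is correct. The paper does not prove this lemma; it states it as the standard post-processing property of DP, specialized to the adaptive continual release model. Your simulation is exactly the standard proof one would supply: fix $\sigma$, let $\mathcal{B}_\sigma$ run $\mathcal{B}'$ on the recomputed transcript $\bigl(g_1(\pi_1;\sigma),\dots,g_{t-1}(\pi_{t-1};\sigma)\bigr)$, apply the guarantee of $\mathcal{M}$ to $G_\sigma^{-1}(T)$, and average over $\sigma$.

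One point is worth making explicit. Your Step 1 identification of conditional laws, and the final averaging, need $\sigma$ (and the coins of $\mathcal{B}'$) to be independent of $\mathcal{M}$'s internal randomness and of $b$, not merely of the input. The lemma's phrase ``independent of the input'' has to be read as ``fresh randomness.'' Otherwise the conclusion fails: if $\sigma$ is $\mathcal{M}$'s own noise, it is independent of the input, yet $g_t$ could subtract it off. Fixing $\sigma$ before building the adversary, rather than letting one adversary sample it, is also the right choice, because the event $G_\sigma^{-1}(T)$ depends on $\sigma$, which is not part of $\mathcal{M}$'s transcript.
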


We also need composition properties of DP. Since we employ multiple online DP subroutines which are interleaved, we require a ``concurrent'' version of DP composition, as stated below. 


\begin{lemma}[Concurrent composition; \cite{HSV24}]\label{lem:concurrent}
Let $\mathcal{M}_1,\dots,\mathcal{M}_m$ be continual mechanisms run concurrently, where each $\mathcal{M}_i$ is $(\eps_i,\delta_i)$-DP in the adaptive model, with respect to a
neighbor relation $\sim_i$ on its input streams.
Consider an adversary that interacts with all $m$ mechanisms concurrently, interleaving its interactions arbitrarily and choosing each input to each mechanism
adaptively as a function of all outputs received so far from all
mechanisms.
 Then the composition is
$\bigl(\sum_i \eps_i,\ \sum_i \delta_i\bigr)$-DP with respect to any pair of adaptive strategies whose induced input streams satisfy $\sim_i$ for every $i$. 
\end{lemma}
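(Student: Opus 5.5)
The statement is the concurrent composition lemma of \cite{HSV24}, and it is a cited result, so in the paper it is used as a black box. If I had to prove it, the plan would be to reduce concurrent composition to ordinary adaptive sequential composition. The concurrent game, with arbitrary interleaving, is simulated by a sequence of $m$ ``hybrid'' games. In the $i$-th hybrid, mechanisms $\mathcal{M}_1,\dots,\mathcal{M}_{i-1}$ receive world-$1$ inputs, $\mathcal{M}_{i+1},\dots,\mathcal{M}_m$ receive world-$0$ inputs, and only $\mathcal{M}_i$'s world varies. The overall conclusion then follows from a hybrid/triangle argument.

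To make this work for approximate DP, I would not chain the $(\eps_i,\delta_i)$ bounds naively. Instead I would use the characterization of $(\eps_i,\delta_i)$-DP mechanisms as post-processings of a simple randomized-response-like mechanism $\mathcal{RR}_{\eps_i,\delta_i}$, in the spirit of Kairouz--Oh--Viswanath. The key steps are as follows.

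\textbf{Step 1.} Show that an interactive $(\eps_i,\delta_i)$-DP mechanism, in the adaptive continual release model, can be simulated by an interactive post-processing of a single \emph{non-interactive} draw from $\mathcal{RR}_{\eps_i,\delta_i}(b)$. Here the simulator is allowed to depend on the adversary's strategy, and it uses the challenge bit $b$ only through that draw. This is the interactive analogue of the Kairouz--Oh--Viswanath result and is the main content of \cite{HSV24}.

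\textbf{Step 2.} Given Step 1 for each $i$, the concurrent interaction becomes a post-processing of the tuple $(\mathcal{RR}_{\eps_1,\delta_1}(b),\dots,\mathcal{RR}_{\eps_m,\delta_m}(b))$ of independent non-interactive mechanisms. The interleaving and adaptivity are absorbed into the post-processing function, since each simulator reacts to queries routed by the adversary.

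\textbf{Step 3.} Apply standard basic composition for non-interactive mechanisms to obtain $(\sum_i\eps_i,\sum_i\delta_i)$. Then finish with post-processing (\Cref{lem:postproc}).

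The neighbor relation $\sim_i$ only enters in Step 1: each $\mathcal{M}_i$ sees a pair of input streams that are $\sim_i$-neighbors.

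The main obstacle is Step 1. The simulator must answer $\mathcal{M}_i$'s queries online, one at a time, without knowing future queries, while its behavior stays consistent with a single randomized-response bit. My first attempt would be to construct the simulator inductively over time steps. I would maintain, as the invariant, the conditional distributions of the two worlds' transcripts given the $\mathcal{RR}$ outcome. The decomposition would be extended round by round by splitting each next-output distribution pair into a ``common part'' and ``$\eps$-bounded likelihood-ratio parts'' with a $\delta$-mass remainder. This is exactly the delicate coupling that \cite{HSV24} carry out. For pure DP ($\delta_i=0$) it is easier: the likelihood-ratio argument can be done directly, by multiplying per-step ratios along the interleaved transcript.
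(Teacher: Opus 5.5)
The paper does not prove this lemma; it imports it from \cite{HSV24}, so your sketch has to stand on its own. Your architecture is the right one: simulate each $\mathcal{M}_i$ by an interactive post-processing of a non-interactive $\mathcal{RR}_{\eps_i,\delta_i}(b)$, compose the randomized responses, then post-process. You also correctly note that naive hybrid chaining only yields $\delta'=\sum_i e^{\sum_{j<i}\eps_j}\delta_i$.

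However, Step~1 as you formulate it cannot carry Step~2, because you explicitly allow the simulator to depend on the adversary's strategy. For a fixed (w.l.o.g.\ deterministic) adversary, such a simulator adds nothing beyond the non-interactive Kairouz--Oh--Viswanath theorem applied to the pair of view distributions. More importantly, in the concurrent game the party interacting with $\mathcal{M}_i$ is not the global adversary. It is the global adversary composed with the other $m-1$ mechanisms, which run on world-$b$ inputs. An adversary-dependent $P_i$ is therefore itself a function of $b$. What you obtain in Step~2 is then not a $b$-independent post-processing of $(\mathcal{RR}_{\eps_1,\delta_1}(b),\dots,\mathcal{RR}_{\eps_m,\delta_m}(b))$, and Step~3 has nothing to act on.

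\textbf{What is missing is universality.} For each $i$ you need a single interactive post-processing $P_i$, depending only on $\mathcal{M}_i$ (and $\eps_i,\delta_i$), such that for \emph{every} adversary $\mathcal{B}'$ and each $b\in\{0,1\}$, the view of $\mathcal{B}'$ against $\mathcal{M}_i$ in world $b$ is identically distributed to its view against $P_i(\mathcal{RR}_{\eps_i,\delta_i}(b))$.

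No knowledge of the adversary is needed to set this up in the continual-release model. Regard $\mathcal{M}_i$ inside the game of \Cref{def:game} as an interactive mechanism whose only private input is the bit $b$, with the challenge pair carried in the adversary's message. Then \Cref{def:adaptivedp} is ordinary interactive DP for the two ``datasets'' $b=0$ and $b=1$.

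With universal simulators, Step~2 becomes a same-world hybrid: replace $\mathcal{M}_1,\dots,\mathcal{M}_m$ by $P_1,\dots,P_m$ one at a time. Each replacement is an exact equality, because for fixed $b$ the environment of $\mathcal{M}_i$ (the global adversary together with the remaining mechanisms and simulators) is just another adversary. Step~3 and the final post-processing then go through.

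This universal simulation is exactly the hard theorem in the concurrent-composition literature: Vadhan--Wang for pure DP, and Lyu and Vadhan--Zhang for approximate DP. Your remark that the simulator must answer ``without knowing future queries'' shows you have it in mind. But it contradicts the parenthetical allowance of adversary dependence, which must go.

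A smaller point on your pure-DP remark: individual per-step likelihood ratios are not bounded. What is bounded by $e^{\eps_i}$ is, for each $i$, the product of $\mathcal{M}_i$'s own conditional ratios along its sub-transcript. That product is the likelihood ratio against the deterministic adversary that replays those inputs, and the adversary's factors cancel. Equivalently, for $\delta_i=0$ your opening hybrid argument already gives $\sum_i\eps_i$ with no loss.
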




We also need a ``parallel'' composition theorem for continual mechanisms, in which the neighbor relation on the composed input is restricted so that the two worlds differ only in the inputs to a single mechanism. Under this restriction, the privacy loss does not grow with $m$. We state this for the pure-DP case, which is all we need. 
\begin{lemma}[Parallel composition; \cite{HSV24}]\label{lem:parallel}
Consider the setting of \Cref{lem:concurrent}, with the neighbor
relation restricted as follows: the input streams induced by the two strategies are identical for all but one mechanism
$\mathcal{M}_{i^*}$, and neighboring for $\mathcal{M}_{i^*}$. If
$\eps_1 = \cdots = \eps_m = \eps'$ and $\delta_1 = \cdots = \delta_m = 0$, then the composed mechanism is $\eps'$-DP with respect to this relation.
\end{lemma}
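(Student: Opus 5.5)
The plan is a simulation (reduction) argument. It shows that the composed interaction is a post-processing of a single adaptive game against $\mathcal{M}_{i^*}$, so that \Cref{def:adaptivedp} for $\mathcal{M}_{i^*}$ alone, together with \Cref{lem:postproc}, gives the $\eps'$ bound.

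The subtlety is that $i^*$ need not be fixed in advance. Which mechanism receives the differing input may be decided adaptively by the adversary at the challenge step. I would therefore split the analysis at the challenge step.

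\textbf{Step 1 (before the challenge).} Fix an adversary $\mathcal{B}$ for the composed system. Up to (but not including) the step at which the differing input is delivered, both worlds feed identical inputs to every mechanism. Hence the joint distribution of the transcript prefix $\pi$, together with the internal states of all $\mathcal{M}_i$, is the same in both worlds. Conditioned on such a prefix, the index $i^*$ and the adversary's subsequent strategy are fixed functions of $\pi$ (and of the adversary's coins).

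\textbf{Step 2 (after the challenge, reduce to one mechanism).} For a fixed prefix $\pi$, I would build an adversary $\mathcal{B}'_\pi$ that interacts only with $\mathcal{M}_{i^*}$. It internally runs every $\mathcal{M}_i$ with $i\neq i^*$ using its own fresh randomness, resuming from states drawn from the (common) conditional distribution given $\pi$. It also runs $\mathcal{B}$.

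Whenever $\mathcal{B}$ sends an input to some $\mathcal{M}_i$ with $i \ne i^*$, the adversary $\mathcal{B}'_\pi$ answers using its simulation. Whenever $\mathcal{B}$ addresses $\mathcal{M}_{i^*}$, it forwards the query to the real mechanism. The one differing input is issued as the challenge $(x^0,x^1)$ of $\mathcal{B}'_\pi$.

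Since the other mechanisms' inputs are identical across worlds (as functions of the full transcript), the composed transcript is exactly $g(\pi_{\mathcal{M}_{i^*}};\sigma)$, where $\sigma$ is the simulator's randomness, independent of the private bit $b$.

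A technical point arises here because $\mathcal{M}_{i^*}$ has already been running before the challenge, so its prefix must be folded into the reduction. I would handle this by letting $\mathcal{B}'_\pi$ replay, as \textsf{regular} steps, the inputs $\mathcal{M}_{i^*}$ received before the challenge. I would then condition on its earlier outputs matching $\pi$.

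\textbf{Step 3 (pure DP and conditioning).} This conditioning is the main obstacle, and it is where pure DP is used. With $\delta=0$, the guarantee against $\mathcal{B}'_\pi$ bounds the ratio $\Pr[\cdot \mid b=0]/\Pr[\cdot\mid b=1] \le e^{\eps'}$ for every event, including events intersected with the (world-independent) event that the replayed prefix equals $\pi$. Dividing by that common probability preserves the factor $e^{\eps'}$. I expect the write-up to phrase this via likelihood ratios over transcripts, assuming discrete or densitized outputs.

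\textbf{Step 4 (conclude).} Apply \Cref{lem:postproc} to pass from $\mathcal{M}_{i^*}$'s transcript to the full composed transcript. Then average over $\pi$, whose distribution is identical in both worlds. The pointwise bound $\le e^{\eps'}$ integrates to give the claimed $\eps'$-DP, with no dependence on $m$.

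With $\delta>0$, the additive terms would not survive the conditioning cleanly. This is why the statement is restricted to pure DP.
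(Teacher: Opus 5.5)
Your argument is correct, and it follows a different route from the paper, which gives no proof of this lemma and simply imports it as a black box from \cite{HSV24}. You split at the delivery of the differing input and use that the pre-challenge prefix $\pi$, and with it $i^*$, has the same law in both worlds. You then move the other mechanisms into the adversary. Finally, you use $\delta=0$ so that conditioning on $\mathcal{M}_{i^*}$'s replayed prefix, an event of equal probability in both worlds, preserves the factor $e^{\eps'}$.

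The citation buys generality, including general output spaces. Your reduction buys a self-contained proof that shows exactly where pure DP enters. Three points should be made explicit in a write-up:
\begin{enumerate}
\item Sampling the other mechanisms' states from their posterior, independently of $\mathcal{M}_{i^*}$'s replay-conditioned state, is valid only because the posterior factorizes. Once the adversary's coins are fixed (allowed by convexity), the event that the prefix equals $\pi$ is an intersection of events, each depending on the coins of a single mechanism.
\item The last step is not literally \Cref{lem:postproc}. The simulator's coins $\sigma$ also drive the queries of $\mathcal{B}'_\pi$, so they are correlated with $\mathcal{M}_{i^*}$'s transcript. What you need is that the adaptive guarantee covers the adversary's whole view, which follows by fixing $\sigma$, applying the definition to the resulting deterministic adversary, and integrating over $\sigma$.
\item Elementary conditioning requires discrete outputs. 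This covers the paper's only use of the lemma (integer-valued counters in \Cref{lem:coreset-privacy}), but general outputs would need densities or disintegration.
\end{enumerate}

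For discrete outputs there is also a shorter route with no split and no replay. For a deterministic adversary, the probability of a full composed transcript $\vec\pi$ in world $b$ is a product over $i$. The $i$-th factor is the probability that $\mathcal{M}_i$, run on the input sequence dictated by $\vec\pi$, emits the outputs recorded in $\vec\pi$. The factors for $i\neq i^*$ agree across worlds. The $i^*$ factor has ratio at most $e^{\eps'}$, by pure DP applied to a fixed neighboring pair of sequences and a singleton event.

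Your closing remark about $\delta>0$ is right in substance, not just an artifact of your proof technique. Consider $m$ copies of a mechanism that first publishes a $\Bern(\delta)$ flag and afterwards echoes its inputs iff the flag is $1$. Each copy is $(0,\delta)$-DP in the adaptive model. Yet an adversary that challenges a copy whose flag is $1$ distinguishes the composed worlds with probability $1-(1-\delta)^m$.
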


\subsubsection{Differentially Private Continual Counting}

For continual counting, two input streams $Y = (y_1, \dots, y_n) \in \{0, 1\}^n$ and $Y' = (y'_1, \dots, y'_n) \in \{0, 1\}^n$ are neighbors iff they differ on a single coordinate.

Throughout, we will assume that the estimates $e_t$ of $\sum_{i \in [t]} y_i$ are monotone, i.e., $e_1 \leq \cdots \leq e_n$, and that they are integers.  (If this does not hold, it can be easily achieved by taking $\lceil \max_{i \in [t]} e_i \rceil$ instead of $e_t$.)  The seminal work of~\cite{DworkNPR10, ChanSS11} gives a DP algorithm that is $O(\polylog{n})$-accurate with high probability. This is stated more formally below.

\begin{theorem}[\cite{DworkNPR10,ChanSS11}] \label{thm:cont-counting}
There exists an $\eps$-DP\footnote{Here the DP definition as in \Cref{def:adaptivedp} with 0 being treated as $\perp$.} continual counting algorithm that, for any $\gamma > 0$, is $O\Paren{\frac{1}{\eps} \cdot \log^{1.5} n \cdot \log(1/\gamma)}$-accurate with probability $1 - \gamma$. The total running time of the algorithm is $O(n \log n)$ and its space complexity is $O(\log n)$.
\end{theorem}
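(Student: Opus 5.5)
The statement is the cited continual-counting result, so the plan is to reconstruct the binary-tree (tree-aggregation) mechanism of \cite{DworkNPR10,ChanSS11} and verify its privacy, accuracy, and complexity. Assume first that $n$ is a power of two by padding with zeros. Consider the complete binary tree whose leaves are the time steps $1,\dots,n$. Each node $v$ corresponds to a dyadic interval $I_v$ and stores a noisy partial sum $\sum_{i\in I_v} y_i + \mathrm{Lap}(\log_2 n/\eps)$. A node's noisy sum is released, internally, at the time its interval closes. At time $t$, the prefix $[1,t]$ decomposes into at most $\log_2 n + 1$ disjoint dyadic intervals, one for each $1$-bit of $t$. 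The estimate $e_t$ is the sum of those noisy node values. If needed, we then apply the monotone integer rounding $\lceil \max_{i\le t} e_i\rceil$ described above, which is post-processing.

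\textbf{Privacy.} Changing one coordinate $y_{i^*}$ changes the true partial sum of exactly the $\log_2 n+1$ nodes on the leaf-to-root path of $i^*$, each by at most one. The full vector of node sums therefore has $\ell_1$-sensitivity at most $\log_2 n+1$, and Laplace noise of scale $(\log_2 n+1)/\eps$ gives $\eps$-DP. Every $e_t$ is a fixed function of the released node values. For the adaptive model of \Cref{def:adaptivedp}, observe that each node's noise is drawn independently when the node closes. Each node is then an individual Laplace mechanism whose input is determined by the adversary's choices so far. The privacy loss is nonzero only on the nodes covering $i^*$. Summing the per-node losses, either through the sequential/concurrent composition of \Cref{lem:concurrent} or by noting that the transcript density ratio factorizes over nodes, bounds the total loss by $\eps$. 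Post-processing (\Cref{lem:postproc}) then covers the output.

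\textbf{Accuracy.} For a fixed $t$, $e_t$ minus the true count is a sum of at most $\log_2 n+1$ independent $\mathrm{Lap}(b)$ variables with $b=O(\log n/\eps)$. A standard concentration bound for sums of Laplace variables (for example, via subexponential tails or Bernstein's inequality) gives deviation $O(b\sqrt{\log n}\,\sqrt{\log(1/\gamma')} + b\log(1/\gamma'))$ with probability $1-\gamma'$. Setting $\gamma'=\gamma/n$ and taking a union bound over all $t$ yields an error of $O(\frac1\eps \log^{1.5} n\cdot\log(n/\gamma))$. This is the main obstacle: absorbing the extra $\log n$ from the union bound into the stated $\log^{1.5}n\cdot\log(1/\gamma)$ form. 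I would try to avoid the union bound over $t$ altogether by union-bounding over the $2n$ nodes instead, so that each node's noise is at most $b\log(2n/\gamma)$. Then I would check that the resulting bound $O(\frac{\log n}{\eps}\cdot\log n\cdot\log(n/\gamma))$ is dominated by the stated one in the regimes used, i.e.\ when $\log(1/\gamma)\gtrsim \log n$, since later applications take $\gamma$ polynomially small. Alternatively, I would apply a maximal inequality to the martingale formed by the noise terms. The final rounding step can only improve accuracy, because the true counts are monotone integers.

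\textbf{Complexity.} At each step we close at most $\log_2 n$ nodes and add the newly closed ones. Only the $O(\log n)$ partial sums along the current frontier need to be stored, one per level. Recomputing $e_t$ takes $O(\log n)$ time, which gives $O(n\log n)$ total time and $O(\log n)$ space.
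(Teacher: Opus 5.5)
The paper gives no proof here; the result is quoted from \cite{DworkNPR10,ChanSS11}, and your construction is the binary mechanism of those works. Your privacy argument is fine: per-node Laplace losses along the path of $i^*$ sum to $\eps$, and this holds adaptively because the transcript density factorizes over nodes (just use scale $(\log_2 n+1)/\eps$ consistently). Your complexity accounting is also fine. The real problem is the one you flag, and it cannot be fixed for this mechanism, because accuracy is required simultaneously for all $t\in[n]$. For odd $t$, the dyadic decomposition of $[1,t]$ is that of $[1,t-1]$ plus the leaf $\{t\}$, and that leaf is used at no other time. Conditioned on all non-leaf noise, the errors at the $n/2$ odd times are $A_t+\eta_t$, with $\eta_t\sim\mathrm{Lap}(b)$ i.i.d., $b=\Theta(\log n/\eps)$, and independent of the $A_t$. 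Since $\Pr[|A_t+\eta_t|\ge s]\ge\frac12 e^{-s/b}$, we get $\Pr[\max_t|A_t+\eta_t|<s]\le\exp(-\frac n4 e^{-s/b})$. Taking $s=b\ln(n/4)$ shows the maximum error is $\Omega(\frac1\eps\log^2 n)$ with probability at least $1-1/e$. More generally, the bound $O(\frac1\eps\log^{1.5}n\log(1/\gamma))$ fails uniformly in $t$ whenever $\log(1/\gamma)=o(\sqrt{\log n})$, so no martingale maximal inequality can close the gap. The $\log(1/\gamma)$ form is a fixed-$t$ guarantee, which is exactly what your Bernstein computation gives.

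Your proposed repair also fails. Union-bounding over the $2n$ nodes and then adding $\log n$ worst-case node noises gives $O(\frac1\eps\log^2 n\log(n/\gamma))$. This is a $\sqrt{\log n}$ factor above the target in every regime, including $\log(1/\gamma)\gtrsim\log n$, because it discards the square-root cancellation among the independent terms. The check you describe would not go through. What does work is the argument you already had. The fixed-$t$ Bernstein bound with $\gamma'=\gamma/n$, plus a union bound over $t$, gives $O\bigl(\frac1\eps(\log^{1.5}n\sqrt{\log(n/\gamma)}+\log n\log(n/\gamma))\bigr)$. This is $O(\frac1\eps\log^{1.5}n\log(1/\gamma))$ as soon as $\log(1/\gamma)\gtrsim\sqrt{\log n}$, in particular when $\gamma\le 1/n$, where $\log(n/\gamma)\le 2\log(1/\gamma)$. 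That is the regime in which the paper applies the theorem (e.g.\ $\gamma\le 1/(4n)$ in the proof of \Cref{lem:combined-bicriteria-guess}). So the correct way to finish is to prove the statement under such a restriction on $\gamma$, or with $\log(n/\gamma)$ in place of $\log(1/\gamma)$, rather than for arbitrary $\gamma>0$.
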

\subsection{Dense Ball Algorithm} \label{app:dense-ball}

For the dense ball problem (\Cref{def:dense-ball}), we consider two input datasets $S, S'$ neighbors iff $|S \Delta S'| \leq 1$, where $\Delta$ denote the symmetric difference between the two sets.

Our definition of the dense ball problem is closely related to the 1-Cluster and the densest ball problems previously studied in DP literature~\citep{NissimSV16,NissimS18,GhaziKM20}. By repeatedly applying a known algorithm for this problem, it is simple to derive the following result for the dense ball problem. 

\begin{theorem} \label{thm:dense-ball-main}
Let $N \in \N$ and suppose that the input $S$ has size at most $N$. Then, there exists an $(\eps, \delta)$-DP algorithm with time and space complexity $(N d \cdot \log(1/r_{\min}))^{O(1)}$ and size blow-up of $O(1)$, which solves the $(r_{\min}, \tau)$-dense ball with probability $1 - \gamma$ where
\begin{itemize}
\item $\tau = O\Paren{\Paren{\frac{d \cdot N}{\eps}}^{1/2} \cdot \polylog{\frac{N d}{\eps \rmin}}}$ for $\delta = 0$, and,
\item $\tau = O\Paren{\Paren{\frac{d \cdot N}{\eps^2}}^{1/3} \cdot \polylog{\frac{N d}{\eps \delta \rmin}}}$ for $\delta > 0$.
\end{itemize}
\end{theorem}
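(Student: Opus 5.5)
We build the dense ball algorithm by repeatedly calling a DP 1-Cluster (densest ball) subroutine, in the style of \cite{GhaziKM20,NissimS18}. We use it as a black box with the following guarantee. Given a multiset $S$ of at most $N$ points, a radius $r$ and a threshold $\tau'$, it is $(\eps',\delta')$-DP. With probability $1-\gamma'$, whenever some ball $\cB(u,r)$ contains at least $\tau'$ points of $S$, it returns a center $c$ with $|S\cap \cB(c,2r)|\geq \tau'/2$. Here $\tau'$ is of order $\frac{\sqrt d}{\eps'}\polylog{\cdot}$ for approximate-DP, and of order $\frac{d}{\eps'}\polylog{\cdot}$ for pure-DP.

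For each radius scale $r_\ell=2^\ell r_{\min}$ with $\ell=0,\dots,L$, we run a peeling loop:
\begin{itemize}
\item find a candidate center $c$;
\item add it to $O$;
\item delete from the current copy of $S$ all points in $\cB(c, 2r_\ell)$.
\end{itemize}
We run this for a fixed number $T=\Theta(N/\tau)$ of rounds, independent of the data, so the running time and the size bound are deterministic. The output size is then $|O|\leq (L+1)T$. To get blow-up $O(1)$ we must drop the factor $L$. For this we check $\tau$ noisily with a Laplace or sparse-vector count and stop a scale once the found balls are no longer heavy. Each accepted center removes at least about $\tau/2$ points at that scale. Alternatively, we treat the $\log(1/r_{\min})$ factor as absorbed into the polylog in $\tau$ by rescaling $N$.

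Coverage at a fixed scale goes as follows. Suppose $|S\cap\cB(u,r_\ell)|\geq\tau$ but no chosen center lies within $3r_\ell$ of $u$. Then no deletion ball $\cB(c,2r_\ell)$ touched $\cB(u,r_\ell)$, because touching would require $\dist(u,c)\leq 3r_\ell$. So all $\tau$ points survive, and the subroutine must keep finding heavy balls until the budget $T$ is exhausted. Each heavy ball found removes at least $\tau/2$ points, so at most $2N/\tau$ such rounds can occur, which is a contradiction once $T>2N/\tau$. For a general $r\geq r_{\min}$, take $\ell$ with $r_\ell\geq r> r_{\ell-1}$. The ball $\cB(u,r)\subseteq\cB(u,r_\ell)$ is heavy, so $\dist(u,O)\leq 3r_\ell\leq 6r$. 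Getting exactly the constant $\digamma=3$ needs a finer geometric grid of scales (ratio $1+o(1)$) or a tighter subroutine, which only costs constant factors.

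Privacy comes from adaptive composition over $(L+1)T$ calls. For pure-DP we take $\eps'=\eps/((L+1)T)$. Then $\tau$ must satisfy $\tau\gtrsim \tau'\approx \frac{d(L+1)T}{\eps}$ with $T\approx N/\tau$, giving $\tau^2\approx dN/\eps$ up to logs. For approximate-DP we use advanced composition, $\eps'\approx\eps/\sqrt{(L+1)T\log(1/\delta)}$. Then $\tau\approx\frac{\sqrt{d}\sqrt{N/\tau}}{\eps}$, giving $\tau^{3/2}\approx \sqrt{dN}/\eps$, i.e.\ $\tau\approx (dN/\eps^2)^{1/3}$ up to polylogs. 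This matches the statement.

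The main obstacle is the sensitivity of the peeling. Removing points changes the later inputs, so neighboring $S,S'$ stay neighbors only because each subsequent call is analyzed as a mechanism on the current residual dataset. The deletion step is a deterministic function of the released centers, so the residuals also differ in at most one point. We make this rigorous by viewing the whole loop as adaptive composition, with each call's input determined by prior outputs together with the original $S$. The failure probability is $\gamma'=\gamma/((L+1)T)$ per call, which gives the claimed $1-\gamma$ by a union bound. Time and space are polynomial in $(N,d,L)$ from the subroutine.
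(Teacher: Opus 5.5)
\textbf{Gap: the construction does not give size blow-up $O(1)$.} Your per-scale peeling and counting contradiction are sound, and so are the privacy argument via adaptive composition on residuals and the derivation of $\tau$. The problem is the size bound. With $L+1$ independent peeling loops of $T=\Theta(N/\tau)$ rounds each, you output $\Theta(L\cdot N/\tau)$ centers, which is blow-up $\Theta(\log(1/r_{\min}))$. Neither of your two remedies removes the factor $L$.
\begin{itemize}
\item \emph{Noisy stopping per scale.} Every scale restarts from a fresh copy of $S$, so the charge ``each accepted center removes at least $\tau/2$ points'' is made against a separate budget of $N$ points per scale. For example, $N/\tau$ tight clusters of $\tau$ points each are rediscovered as heavy at every scale below their pairwise separation, and there can be $\Theta(L)$ such scales.
\item \emph{Rescaling $N$ or $\tau$.} The blow-up is the ratio $|O|\cdot\tau/N$, with the same $\tau$ as in the coverage constraint. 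Per-scale coverage forces $T\geq 2N/\tau$, so $|O|\cdot\tau/N\geq 2(L+1)$ however the parameters are rescaled.
\end{itemize}

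\textbf{The missing idea: charge all scales against one pool of $N$ points.} The paper runs a single loop of $K=\lceil 2N/\tau\rceil$ rounds on one residual set, and the radius is chosen privately and adaptively in each round. It calls the radius-approximating 1-Cluster algorithm of \cite{GhaziKM20}, which returns $(c_i,r_i)$ with $r_i\leq\alpha\max\{r_{\min},r_{\opt}\}$ and $\alpha<2$, and deletes $\cB(c_i,r_i)$. This gives the following.
\begin{itemize}
\item Deterministically $|O|=K\leq 4N/\tau$.
\item The counting contradiction is the same as yours.
\item Let $j$ be the first round whose deletion ball meets $\cB(u,r)$. At that point $r_{\opt}\leq r$, so $\dist(u,c_j)\leq r+r_j\leq(1+\alpha)r\leq 3r$.
\end{itemize}
Your scheme can be repaired along the same lines. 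Process the dyadic scales in increasing order on a shared residual, with a global cap of $O(N/\tau)$ accepted centers and a noisy check to decide when to advance the scale. Coverage still holds because a deletion ball $\cB(c,2r_{\ell'})$ with $\ell'\leq\ell$ that meets $\cB(u,r_\ell)$ gives $\dist(u,c)\leq 3r_\ell$.

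\textbf{The constant $\digamma=3$.} A finer grid of scales alone only gives $3(1+\xi)r$ with your factor-$2$ subroutine. Reaching $\digamma=3$ exactly requires a subroutine whose radius factor is strictly below $2$, as in the paper.
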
 

Our construction builds upon the DP 1-Cluster algorithms of \cite{GhaziKM20}. We formally define the problem and recall their result below.

\begin{definition}[1-Cluster, \citep{NissimSV16}] \label{def:1-cluster}
Given a multiset $S$ of size at most $N$ in a metric space, a target number of points $\tau$, a minimum radius, and an approximation ratio $\alpha \ge 1$, the \emph{1-Cluster problem} asks to output a center $c$ and a radius $r$ such that the ball $\cB(c, r)$ contains at least $\tau - t$ points from $S$ (where $t$ is the additive error), and $r \le \alpha \cdot \max\{\rmin, r_{\opt}\}$, where $r_{\opt}$ is the minimum radius of a ball containing at least $\tau$ points in $S$.
\end{definition}

\begin{theorem}[DP 1-Cluster, \citep{GhaziKM20}] \label{thm:ghazi-1-cluster}
There exist $(\eps, \delta)$-DP algorithms for the 1-Cluster problem in the Euclidean space with approximation ratio $\alpha \in (1, 2)$ that succeed with probability $1 - \beta$ and have an additive error bounded by:
\begin{itemize}
    \item $t = O\Paren{\frac{d}{\eps} \cdot \polylog{\frac{N d}{\eps r_{\min} \beta}}}$ for pure-DP ($\delta = 0$).
    \item $t = O\Paren{\frac{\sqrt{d}}{\eps} \cdot \polylog{\frac{N d}{\eps \delta r_{\min} \beta}}}$ for approximate-DP ($\delta > 0$).
\end{itemize}
The time and space complexities are bounded by $(Nd \cdot \log(1/r_{\min}))^{O(1)}$.
\end{theorem}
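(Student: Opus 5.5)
The plan is to reconstruct the \cite{GhaziKM20} argument in three stages: guess the radius, find a dense region in a randomly projected low-dimensional space, and lift a center back to $\R^d$ with a noisy average. Privacy will follow from basic composition over a constant number of stages, and the additive error $t$ will be the largest of the per-stage errors. The two error regimes in the statement, $d/\eps$ versus $\sqrt{d}/\eps$, should come only from the final lifting step.

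\textbf{Radius search.} Consider the geometric grid of radii $r_i = \rmin(1+\xi)^i$ for a small constant $\xi$, which has $O(\log(1/\rmin))$ levels. I would treat each level as a subroutine that either certifies a ball of radius $O(r_i)$ containing about $\tau$ points or reports failure. The smallest succeeding level would be selected with the sparse vector technique (or a private binary search) on noisy counts. This costs only $\polylog{1/\rmin}/\eps$ in additive error, and it gives $r \le (1+\xi)\max\{\rmin, r_{\opt}\}$ up to the approximation loss of the later stages.

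\textbf{Low-dimensional dense region.} For a fixed $r$, I would apply a Johnson--Lindenstrauss map to $d' = O(\log N)$ dimensions, scaled so that all pairwise distances in $S$ are preserved within $1\pm\xi$ with probability $1-\beta$. In $\R^{d'}$, I would run the exponential mechanism whose candidates are the cells of a randomized cover of scale $\Theta(r)$, such as shifted grids or a locality-sensitive partition, and whose score is the number of projected points in a cell. The number of candidates is $2^{O(d')} = N^{O(1)}$, so the additive loss is $O(\log(N/\beta)/\eps)$. I expect this to be the main obstacle: the cover must be enumerable in $N^{O(1)}$ time and must lose only a factor $1+O(\xi)$ in radius, so that $\alpha<2$. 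My first attempt would be the list-decodable covering codes of \cite{GhaziKM20}. If those prove unwieldy, I would use a constant number of random shifts of a fine grid and union bound, which seems to lose only constants in the exponent of $N$.

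\textbf{Lifting to $\R^d$.} Let $T \subseteq S$ be the points whose projections fall in the selected cell. Because the projection preserves distances, $T$ has diameter $(1+O(\xi))\cdot 2r$ in the original space. I would first privately obtain a rough anchor, for instance a noisy coordinatewise median or a second private 1-cluster call at a coarse scale, and clip every point to within $O(r)$ of it. After clipping, replacing one point moves the average by $O(r/|T|)$. I would then release a noisy average using the Gaussian mechanism for $\delta>0$, whose noise has norm about $r\sqrt{d}\log(1/\delta)^{1/2}/(\eps|T|)$, or the Laplace mechanism for $\delta=0$, whose noise has norm about $rd/(\eps|T|)$. Once $|T| \gtrsim \sqrt{d}/\eps$ (respectively $d/\eps$), up to polylogarithmic factors, the noise is at most $\xi r$. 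The ball of radius $(1+O(\xi))r$ around the noisy average then contains all of $T$, except for points lost to clipping or to JL failure, which the additive term $t$ absorbs. Choosing $\xi$ small enough gives $\alpha\in(1,2)$.

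The running time is polynomial: the JL map, the $N^{O(1)}$ exponential-mechanism candidates, and the averaging step are all polynomial, and there are $O(\log(1/\rmin))$ radius levels. This gives the stated $(Nd\log(1/\rmin))^{O(1)}$ bound.
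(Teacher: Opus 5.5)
The paper does not prove this statement. It is quoted from \cite{GhaziKM20}, so your reconstruction has to stand on its own, and it fails at the lifting step. From ``$T$ has diameter at most $(1+O(\xi))\cdot 2r$'' you conclude that the ball of radius $(1+O(\xi))r$ around the noisy average contains $T$. This does not follow, and it is false. Put $(1-f)|T|$ points at $c+re_1$ and $f|T|$ points at $c-re_1$ for a small constant $f$. All of $T$ lies in $\cB(c,r)$ and its projection lies in a ball of radius $(1+\xi)r$, so nothing stops your exponential mechanism from selecting it. But the mean is $c+(1-2f)re_1$, at distance $(2-2f)r$ from the $f|T|$ minority points. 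A constant fraction of $T$ therefore falls outside your ball, and an additive $t$ cannot absorb this when $|T|$ is large. The diameter bound alone only gives $T\subseteq\cB(\mu,(2+O(\xi))r)$, i.e.\ a ratio slightly \emph{above} $2$ once the radius grid and the noise are included. That is outside the claimed range $(1,2)$. It is also exactly what the paper's dense-ball reduction cannot tolerate, since its coverage step uses $r+r_j\le 3r$, i.e.\ $\alpha\le 2$. Diameter is the wrong invariant in any case: by Jung's theorem, a set of diameter $2r$ can need enclosing radius close to $\sqrt2\,r$.

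Two things are missing. First, track the minimum-enclosing-ball radius instead. By minimax, $\mathrm{rad}(T)^2=\max_v\frac12\sum_{x,y\in T}v_xv_y\|x-y\|^2$ over probability vectors $v$ on $T$. This depends only on pairwise distances, so a JL map preserving the pairwise distances of $S$ preserves the enclosing radius of \emph{every} subset up to $1\pm O(\xi)$. That gives some $c''\in\R^d$ with $T\subseteq\cB(c'',R)$, $R=(1+O(\xi))r$. In the other direction, it shows the optimal $\tau$ points project into a ball of radius $(1+O(\xi))r_{\opt}$, without requiring JL to preserve distances to the unknown center.

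Second, you must privately approximate $c''$, not the centroid. One way is a noisy B\u{a}doiu--Clarkson iteration. Start from the noisy mean and take a noisy count of the points of $T$ outside $\cB(c_i,(1+\xi)R)$. Stop if the count is below threshold; otherwise move $c_i$ a $\Theta(\xi)$-fraction of the way toward the clipped, noisy mean of those far points. Every far point $x$ satisfies $\langle x-c_i,\,c''-c_i\rangle\ge\frac12\|c''-c_i\|^2+\xi R^2$. Hence $\|c_i-c''\|^2$ drops by $\Omega(\xi^2R^2)$ per round, $O(1/\xi^2)$ rounds suffice, and your Laplace/Gaussian calibration then yields the $d/\eps$ versus $\sqrt d/\eps$ errors.

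A smaller issue: your grid-cell fallback for the low-dimensional step loses a $\Theta(\sqrt{d'})=\Theta(\sqrt{\log N})$ factor, because a cube containing a radius-$r$ ball has diameter at least $2r\sqrt{d'}$. Score balls of radius $(1+\xi)r$ around points of a fine net instead. There are $(1/r)^{O(d')}$ such candidates rather than $N^{O(1)}$, but this still costs only $\polylog{N/\rmin}/\eps$.
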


We quickly note that, in the case of approximate-DP, the algorithm from \citep{GhaziKM20} in fact has an even better dependency on $r_{\min}$. However, we choose to state this coarser error bound since it does not effect our final guarantee anyway as we only choose $r_{\min} = \frac{1}{n}$ in the subsequent step.

We formalize our procedure in Algorithm \ref{alg:dense-ball-reduction}. The algorithm iteratively extracts dense balls from the dataset by repeatedly invoking the DP 1-Cluster subroutine.

\begin{algorithm}[H]
\caption{Dense Ball via Iterative 1-Cluster}
\label{alg:dense-ball-reduction}
\begin{algorithmic}[1]
\REQUIRE Multiset $S$ of size at most $N$, minimum target radius $r_{\min}$, target points $\tau$, privacy parameters $(\eps, \delta)$, failure probability $\gamma$.
\STATE $O \gets \emptyset$
\STATE $S_1 \gets S$
\STATE $K \gets \lceil 2N/\tau \rceil$ \hfill \algcomment{Number of iterations}
\STATE Set per-run privacy parameters $(\eps', \delta')$ based on composition over $K$ total invocations.
\FOR{$i = 1, \dots, K$}
    \STATE $(c_i, r_i) \gets$ Result of running the 1-Cluster algorithm of (\Cref{thm:ghazi-1-cluster}) on $S_i$ target $\tau$, approximation ratio $\alpha=O(1)$, failure probability $\beta = \gamma/K$, and privacy $(\eps', \delta')$. 
    \STATE $O \gets O \cup \{c_i\}$
    \STATE $S_{i+1} \gets S_i \setminus \cB(c_i, r_i)$
\ENDFOR
\RETURN $O$
\end{algorithmic}
\end{algorithm}

\begin{proof}[Proof of Theorem \ref{thm:dense-ball-main}]
We analyze \Cref{alg:dense-ball-reduction} by establishing its privacy, accuracy (size bound and coverage), and matching the required parameter bounds for $\tau$.

\paragraph{Privacy.} 
\Cref{alg:dense-ball-reduction} invokes the base 1-Cluster algorithm a total of $K$ times. We distribute the total privacy budget $(\eps, \delta)$ across all $K = O(N/\tau)$ invocations using composition theorems:
\begin{itemize}
    \item Pure-DP: By basic composition, setting $\eps' = \eps / K$ ensures that \Cref{alg:dense-ball-reduction} is $\eps$-DP.
    \item Approximate-DP: By advanced composition, to achieve overall $(\eps, \delta)$-DP, it is sufficient to set $\delta' = \Theta\Paren{\frac{\delta}{K}}$ and $\eps' = \Theta\Paren{\frac{\eps}{\sqrt{K \ln(1/\delta)}}}$.
\end{itemize}

\paragraph{Size Bound.} The algorithm runs for exactly $K$ steps, outputting exactly one center per step. Thus, the output set $O$ has size $|O| = K \leq 4N/\tau$. This satisfies the size bound property with blow-up factor $\Upsilon = 4$.

\paragraph{Coverage.} 
Let $t$ denote the additive error of the 1-Cluster algorithm. As specified in more detail below, we will parameterize the noise such that $t \le \tau/2$. 

Since each run of the 1-Cluster algorithm is successful with probability $1 - \beta$, all the runs succeed with probability at least $1 - \beta \cdot K = 1 - \gamma$. Conditioned on this, we will show that the coverage constraint is satisfied. Suppose there exist $u \in \R^d, r \ge r_{\min}$ such that $|S \cap \cB(u, r)| \ge \tau$. We must show that $\dist(u, O) \le \digamma \cdot r$.

First, let us assume for contradiction that no removed ball $\cB(c_i, r_i)$ intersects $\cB(u, r)$. If this were true, none of the points in $S \cap \cB(u, r)$ would ever be removed. Thus, at every step $i$, the remaining set $S_i$ would contain all $\ge \tau$ points of $S \cap \cB(u, r)$. Because the algorithm succeeds at every step and $t \le \tau/2$, each of the $K$ steps removes at least $\tau/2$ points from the dataset. Across $K$ steps, the algorithm would remove $K \cdot (\tau/2) = \lceil 2N/\tau \rceil \cdot (\tau/2) \ge N$ points. Since the total dataset has size at most $N$, this implies all points are eventually removed, contradicting the assumption that the points in $\cB(u, r)$ remain untouched.

Therefore, there must be at least one step where the removed ball $\cB(c_i, r_i)$ intersects $\cB(u, r)$. Let $j$ be the \emph{first} such step. At the start of step $j$, none of the points in $\cB(u, r)$ have been removed, meaning $|S_j \cap \cB(u, r)| \ge \tau$. This implies that the optimal radius $r_{\opt}$ for capturing $\tau$ points in $S_j$ is bounded by $r$ (i.e., $r_{\opt} \le r$). Since we ran 1-Cluster with approximation ratio $\alpha = O(1)$, the output radius guarantees $r_j \le \alpha \cdot \max\{\rmin, r_{\opt}\} \le \alpha \cdot r$. Since $\cB(c_j, r_j)$ intersects $\cB(u, r)$, let $x$ be a point in their intersection. By the triangle inequality, we have:
    \begin{align*}
        \dist(u, c_j) &\le \dist(u, x) + \dist(x, c_j) \le r + r_j \le 3 r.
    \end{align*}
    Since $c_j \in O$, this proves the required coverage.

\paragraph{Parameter Settings.} 
To complete the proof, we select the parameters to ensure $t \le \tau/2$.

\paragraph{For Pure-DP:} 
    By Theorem \ref{thm:ghazi-1-cluster}, the additive error of 1-Cluster is $t = O\left( \frac{d}{\eps'} \cdot \polylog{\frac{N d}{\eps' r_{\min} \gamma}} \right)$. Substituting $\eps' = \eps / K$, we get:
    \begin{align*}
        t = O\left( \frac{d N}{\eps \tau} \cdot \polylog{\frac{N d}{\eps r_{\min} \gamma}}\right)
    \end{align*}
    Enforcing $t \le \tau/2$ yields the requirement:
    \begin{align*}
        \tau^2 \ge \Theta\left( \frac{d N}{\eps} \cdot \polylog{\frac{N d}{\eps r_{\min} \gamma}} \right) \iff \tau \geq \Theta\left( \left( \frac{d N}{\eps} \right)^{1/2} \cdot \polylog{\frac{N d}{\eps r_{\min} \gamma}}\right).
    \end{align*}

\paragraph{For Approximate-DP:} 
    By Theorem \ref{thm:ghazi-1-cluster}, the additive error is $t = O\left( \frac{\sqrt{d}}{\eps'} \cdot \polylog{\frac{N d}{\eps \delta r_{\min} \gamma}} \right)$. Substituting our advanced composition parameter $\eps' = \Theta\left(\frac{\eps}{\sqrt{K \log(1/\delta)}}\right) = \Theta\left(\frac{\eps \sqrt{\tau}}{\sqrt{N \log(1/\delta)}}\right)$, we get:
    \begin{align*}
        t = O\left( \frac{\sqrt{d N}}{\eps \sqrt{\tau}} \cdot \polylog{\frac{N d}{\eps \delta r_{\min} \gamma}} \right)
    \end{align*}
    Enforcing $t \le \tau/2$ yields the requirement:
    \begin{align*}
        \tau^{3/2} \ge \Theta\left( \frac{\sqrt{d N}}{\eps} \polylog{\frac{N d}{\eps \delta r_{\min} \gamma}} \right) \iff \tau \ge \Theta\left( \left( \frac{d N}{\eps^2} \right)^{1/3} \polylog{\frac{N d}{\eps \delta r_{\min} \gamma}} \right).
    \end{align*}

\paragraph{Time and Space Complexity.} 
The 1-Cluster subroutine is called $K \leq N$ times. Each run takes $(N d \log(1/r_{\min}))^{O(1)}$ time and space. Thus, the total time and space complexity is bounded by $(N d \log(1/r_{\min}))^{O(1)}$.
\end{proof}
\section{Missing Proofs from \texorpdfstring{\Cref{sec:dp-meyerson-main}}{Section~\ref{sec:dp-meyerson-main}}}

\subsection{Private Meyerson's Sketch: Privacy and Efficiency} \label{app:meyerson-privacy-efficiency}

In this section, we state generic bounds on the privacy and efficiency guarantees of our DP Meyerson's sketch (\Cref{alg:meyerson-main}).

\begin{lemma}[Privacy] \label{lem:dpmyersen-privacy}
If $\densealg$ is $(\eps_d, \delta_d)$-DP and $\countingalg$ is $(\eps_c, \delta_c)$-DP, then \Cref{alg:meyerson-main} is $\Paren{2 \cdot \Psi \cdot \eps_d + \eps_c, 2 \cdot e^{2\eps_d} \cdot \Psi \cdot \delta_d + \delta_c}$-DP.
\end{lemma}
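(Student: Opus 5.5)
The plan is to view \Cref{alg:meyerson-main} as a concurrent composition of online mechanisms: the continual counter $\countingalg$ fed with the bits $Z_t$, and at most $\Psi$ invocations of $\densealg$, one per index $j = 1, \dots, \Psi$. Everything else the algorithm does (computing $\lambda_t$ from $R_{\leq t-1}$, drawing $Z_t$, deciding when to call $\densealg$ from $e_t$) is a function of the past public outputs and independent randomness. So I will fix the two adaptive worlds from \Cref{def:game} with challenge step $i^*$. I will then argue that, conditioned on the transcript so far, each sub-mechanism's input differs between the worlds in a controlled way, and apply \Cref{lem:concurrent}.

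\textbf{Step 1: the counter's input.} Before step $i^*$, both worlds see identical points, and the outputs $R_{\leq t-1}$ seen so far are the same on the transcript. Hence $\lambda_t$ and the law of $Z_t$ coincide. At $t = i^*$, one world has $x_{i^*} = \perp$, so $\lambda_{i^*} = 0$ and $Z_{i^*} = 0$. The other world has $Z_{i^*} \sim \Bern(\lambda_{i^*})$. After $i^*$ the points agree, and $\lambda_t$ depends only on public $R_{\leq t-1}$, so the $Z_t$ can be coupled to be equal. Thus the bit streams fed to $\countingalg$ differ in at most the single coordinate $i^*$, which is exactly the counting neighbor relation. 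Formally, I will treat the coin $Z_{i^*}$ as part of the adversary's strategy: the adversary samples it and chooses $y^0_{i^*}=0$, $y^1_{i^*}=Z_{i^*}$. This costs $(\eps_c, \delta_c)$.

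\textbf{Step 2: the dense-ball calls.} Under the same coupling, $S_t$ in the two worlds differs by at most the single point $x_{i^*}$ for all $t$. The one subtlety is the size cap $|S_t| < \Psi\tau$. If $x_{i^*}$ is admitted in one world and the cap is reached, then later some point $x_s$ may be admitted in the other world but not this one. So the symmetric difference can reach $2$: it is $\{x_{i^*}\}$ versus $\{x_s\}$. I therefore expect each of the at most $\Psi$ calls to see inputs with $|S \Delta S'| \leq 2$. By group privacy for size $2$, each call is then $(2\eps_d, 2e^{2\eps_d}\delta_d)$-DP, or more precisely $(2\eps_d, (1+e^{\eps_d})\delta_d)$, which is at most the stated bound. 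This explains the factors $2$ and $2e^{2\eps_d}$ in the statement.

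\textbf{Step 3: composition.} The calls are made at times determined by the public outputs $e_t$, and their inputs are chosen adaptively. I will apply \Cref{lem:concurrent} over the counter and the $\Psi$ dense-ball mechanisms; calls that never happen are treated as trivial. This gives $(2\Psi\eps_d + \eps_c,\ 2e^{2\eps_d}\Psi\delta_d + \delta_c)$. Finally, the released $R_t$ is a post-processing of these outputs (\Cref{lem:postproc}).

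I expect the main obstacle to be making Steps 1 and 2 rigorous in the adaptive model. The randomness $Z_t$ is internal yet determines the sub-mechanisms' inputs, and the coupling after $i^*$ must be justified. I would first try to formalize this by constructing, for each world, an adversary that simulates the sampling of the $Z_t$ and the maintenance of $S_t$ from public outputs. It would feed the inputs to the sub-mechanisms so that the induced streams satisfy the required neighbor relations: the counting relation for the counter, and distance at most $2$ for each dense-ball call. If the bookkeeping with the size cap turns out to allow a larger difference, I would instead remove the cap analysis by truncating deterministically based on public information.
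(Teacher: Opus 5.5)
Your proposal is correct and follows essentially the same route as the paper. The paper likewise releases $e_t$ alongside $R_t$ and post-processes; it couples the coins by writing $Z_t=\mathbf{1}[U_t\le\lambda_t]$, fixing $U=u$ and using convexity of DP, which amounts to your folding the coins into the adversary. It then shows that the counter streams differ only at $i^*$ and that each dense-ball input differs by at most $2$ because of the size cap, and concludes with group privacy and \Cref{lem:concurrent}. Your remark that group privacy actually gives $(2\eps_d,(1+e^{\eps_d})\delta_d)$ is correct and slightly tighter than the stated bound.
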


\begin{proof}
Fix an adversary $\mathcal{B}$. Recall that we use $i^\ast$ to denote the challenge index. 

First, it will be more convenient to consider a modified algorithm $\widehat{\mathcal{A}}$, which is exactly the same as \Cref{alg:meyerson-main}, except it releases $(e_t, R_t)$ at each step (instead of just $R_t$). By
\Cref{lem:postproc}, it suffice to prove that $\widehat{\mathcal{A}}$ is $\Paren{2 \cdot \Psi \cdot \eps_d + \eps_c, 2 \cdot e^{2\eps_d} \cdot \Psi \cdot \delta_d + \delta_c}$-DP.

Moreover, in \Cref{alg:meyerson-main}, we write $Z_t \sim \Bern(\lambda_t)$. For our purpose, it will be more transparent to instead think of this as setting $Z_t=\mathbf{1}[U_t\le\lambda_t]$ where $U=(U_1,\dots,U_n)$ is drawn i.i.d.\ uniform on $[0,1]$. Let $\widehat{\mathcal{A}}_u$ denote $\widehat{\mathcal{A}}$
with $U$ fixed to $u$. 
By the convexity of DP (see,  e.g.,  \cite{DesfontainesP20}), it suffices to prove the bound for every fixed $u$.

Fix $u = (u_1, \ldots, u_n)$ and a transcript prefix $\pi_{t-1} = ((e_1, R_1), \dots, (e_{t-1}, R_{t-1}))$. Given $\pi_{t-1}$, the set $R_{\le t-1}$ and the
counter $j$ are determined, hence so is the map
$\lambda(\cdot)=\min\{1,\frac{\tau kL}{\estopt}\dist(\cdot,R_{\le t-1})^p\}$. The adversary is
also a function of $\pi_{t-1}$, so it submits the same $x_t$ in both worlds for $t\neq i^\ast$.
Therefore, the bit fed to $\countingalg$ at step $t$ is the deterministic function
$z_t=\mathbf{1}[u_t\le\lambda(x_t)]$ of the submitted point, and we have the following. (We use the superscript $b \in \{0, 1\}$ to denote the execution on $(x_1^b, \dots, x_n^b)$.)
\begin{enumerate}\itemsep2pt
  \item[(a)] \emph{The counting streams are neighbors.} For $t\neq i^\ast$, we have
  $x_t^0=x_t^1$ and hence $z_t^0=z_t^1$. So the two streams differ in at most the coordinate $i^\ast$.
  \item[(b)] \emph{The dense-ball inputs are $2$-neighbors.}  Assume w.l.o.g. that $x_{i^\ast}^1=\perp$. 
  Both executions insert points in stream order,
  skipping an arrival once $|S|=\Psi\tau$. We thus have $S_t^0\setminus S_t^1\subseteq\{x_{i^\ast}^0\}$ and $|S_t^1\setminus S_t^0|\le 1$,
    Hence, $|S_t^0\,\Delta\,S_t^1|\le 2$.
\end{enumerate}

Note that $\widehat{\mathcal{A}}_u$ is the concurrent composition of $\countingalg$, run on
a bit stream generated adaptively from the transcript, with at most $\Psi$ invocations of
$\densealg$, each on an input determined by the transcript, $u$ and the submitted points.
By (a), the first component is $(\eps_c,\delta_c)$-DP; by (b) and
group privacy for groups of size $2$, each invocation of $\densealg$ is
$(2\eps_d,2e^{2\eps_d}\delta_d)$-DP. \Cref{lem:concurrent} then gives the claimed privacy guarantee.
\end{proof}

\begin{lemma}[Efficiency] \label{lem:dpmyersen-efficiency}
Suppose that $\countingalg$ runs in $\atime_c(n)$ time and uses $\aspace_c(n)$ space, and $\densealg$ runs in $\atime_d(n)$ time and uses $\aspace_d(n)$ space. Furthermore, suppose that the output stream $O_{\leq n}$ has size at most $K$. Then, \Cref{alg:meyerson-main} runs in $O\Paren{nd K + \atime_c(n) + \Psi \cdot \atime_d(\Psi \cdot \tau)}$ time and uses $O\Paren{\Psi \cdot \Paren{\tau + \Psi \cdot \Upsilon} \cdot d + \aspace_c(n) + \aspace_d(\Psi \cdot \tau)}$ space.
\end{lemma}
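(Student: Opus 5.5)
We will prove \Cref{lem:dpmyersen-efficiency} by accounting for the cost of each line of \Cref{alg:meyerson-main} separately and summing. There are three sources of work: the per-step distance computation on line~\ref{line:closest-point}, the calls to the continual counter, and the calls to $\densealg$. Everything else is $O(1)$ or $O(d)$ bookkeeping per step: sampling $Z_t$, the size check on line~\ref{line:size-check-s}, inserting $x_t$ into $S_t$, and the comparison on line~\ref{line:num-call-dense-check}.

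\emph{Time.} At step $t$ we compute $\dist(x_t, R_{\leq t-1})$ by a linear scan over $R_{\leq t-1}$. Each distance costs $O(d)$. Since $R_{\leq t-1} \subseteq R_{\leq n} = O_{\leq n} \cup \{0\}$ has at most $K+1$ points, the step costs $O(dK)$, and the total over all steps is $O(ndK)$. Computing $\lambda_t$ from this distance and sampling $Z_t$ takes $O(1)$. Feeding the bits $Z_1, \dots, Z_n$ to $\countingalg$ costs $\atime_c(n)$ in total. Finally, the check $j \leq \Psi$ ensures $\densealg$ is invoked at most $\Psi$ times. Each invocation is on $S_t$, and the check on line~\ref{line:size-check-s} guarantees $|S_t| \leq \Psi\tau$, so each call costs at most $\atime_d(\Psi\tau)$. (Copying $S_t$ into the call costs $O(\Psi\tau d)$, which is dominated by $\atime_d(\Psi\tau)$ because $\densealg$ must read its input.) Summing these pieces gives the claimed time bound.

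\emph{Space.} At any moment we store the following. The sample set $S_t$ has at most $\Psi\tau$ points, using $O(\Psi\tau d)$ space. The union $R_{\leq t}$ is needed for future distance queries. By the size blow-up guarantee, each call outputs at most $\Upsilon \cdot \Psi\tau/\tau = \Psi\Upsilon$ points, and there are at most $\Psi$ calls, so storing it takes $O(\Psi^2\Upsilon d)$ space. The internal state of $\countingalg$ takes $\aspace_c(n)$. The workspace of the current $\densealg$ invocation takes $\aspace_d(\Psi\tau)$; it is freed after each call, so the calls do not accumulate. Each incoming $x_t$ is discarded after processing, except when it is inserted into $S_t$, and the outputs $R_t$ are emitted rather than stored beyond their union. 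Summing gives $O(\Psi(\tau + \Psi\Upsilon)d + \aspace_c(n) + \aspace_d(\Psi\tau))$.

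The only mildly delicate point is justifying that the nearest-point query is charged $O(dK)$ rather than something depending on $|S_t|$. This holds because line~\ref{line:closest-point} uses $R_{\leq t-1}$, which is the released output and is bounded by $K$ (plus the origin), not the sample set $S_t$. We also need to note that the space spent on $R_{\leq t}$ is controlled by the blow-up bound rather than by $K$ directly. Both observations follow directly from the algorithm's checks, so no real obstacle is expected.
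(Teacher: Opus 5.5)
Your proposal is correct and follows essentially the same line-by-line accounting as the paper's proof. Both arguments charge $O(Kd)$ per step for the nearest-point query against $R_{\le t-1}$, bound the dense-ball calls by $\Psi$ on inputs of size at most $\Psi\tau$, and store $S_t$ and $R_{\le t}$ within the limits given by the size check and the blow-up guarantee. Your extra remarks (the origin in $R_0$, the cost of copying the input, freeing the dense-ball workspace between calls) are harmless refinements the paper leaves implicit.
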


\begin{proof}
For the running time, observe that all calculations outside the calls to $\countingalg,\densealg$ takes $O(d)$ time per point except for the computation of the closest point (Line~\ref{line:closest-point}) which may take up to $O(K \cdot d)$ time. Furthermore, we only invoke $\densealg$ at most $O(\Psi)$ times, and each time the input $S_t$ has size at most $\Psi \cdot \tau$. The total running time thus follows.

As for the space complexity, outside of $\countingalg,\densealg$, the only thing needed to be kept by our algorithm are $S_t, R_t$. The former has size at most $\Psi \cdot \tau$ and the latter has size at most $\Psi^2 \cdot \Upsilon$ (due to the size blow-up guarantee of $\densealg$). Finally, note that each element in $S_t, R_t$ are points in $\R^d$ so it requires $O(d)$ space.
\end{proof}

\subsection{Combined Prefix-Approximation Guarantee via Guessing}

By using a randomized guessing strategy for the estimated optimum cost, we can eliminate the dependence on $\estopt$ in the error bounds of the Private Meyerson's Sketch. This allows us to strictly parameterize the capacity bound $\Psi$ and the density threshold $\tau$ in terms of $k, d, n$.

\begin{algorithm}[h]
\caption{Random-Guess DP Meyerson's Sketch $\mathcal{A}_{\text{guess-bicri}}$}
\label{alg:guess-bicriteria}
\begin{algorithmic}[1]
\REQUIRE Stream $X = (x_1, \dots, x_n)$ of points in a metric space $\mathcal{M}=(U,d)$ with diameter $\le 1$.
\STATE \textbf{Parameters:} Target centers $k$, Privacy budgets $(\eps, \delta)$.
\STATE Define the set of possible guesses: $\mathcal{G} = \{2^{\lceil \log_2 n \rceil - 1}, 2^{\lceil \log_2 n \rceil-2}, \dots, 2^{-\lceil \log_2 n \rceil}\}$
\STATE Sample an estimated optimum $\estopt$ uniformly at random from $\mathcal{G}$.
\STATE Set the target sketch capacity: $\Psi = \Theta(k \log n)$
\STATE Set the density threshold $\tau$ depending on the privacy model:
\begin{itemize}
    \item For pure-DP: $\tau = \Theta\Paren{\frac{d \cdot k^2}{\eps} \polylog{\frac{n d}{\eps}} }$
    \item For approximate-DP: $\tau = \Theta\Paren{\frac{\sqrt{d} \cdot k^{1.5}}{\eps} \polylog{\frac{n d}{\eps \delta}} }$
\end{itemize}
\STATE Instantiate $\countingalg$ as the $(\eps/2, \delta/2)$-DP continual counting algorithm.
\STATE Instantiate $\densealg$ as the $(\frac{\eps}{4\Psi}, \frac{\delta}{8\Psi e^{\eps/(2\Psi)}})$-DP dense ball algorithm.
\STATE Run $\meyersonalg(\estopt, \Psi)$ on $X$ using the configured parameters.
\RETURN Output stream $(O_1, \dots, O_n)$ from the sketch.
\end{algorithmic}
\end{algorithm}

\begin{lemma}[DP Prefix-Approximation] \label{lem:combined-bicriteria-guess}
Given a stream $X = (x_1, \dots, x_n)$ of $n$ points and a target number of centers $k$, \Cref{alg:guess-bicriteria} is $(\eps, \delta)$-DP. Furthermore, it outputs a multiset-stream $(O_1, \dots, O_n)$ satisfying:
\begin{enumerate}
    \item \textbf{Accuracy:} For a fixed $t \in [n]$, with probability at least $\Omega\Paren{\frac{1}{\log n}}$, $(O_1, \dots, O_t)$ is an $(O(1), \beta)$-prefix-approximation of $(x_1, \dots, x_t)$, where the additive error $\beta$ is bounded by:
    \begin{itemize}
        \item $\beta = O\Paren{\frac{d \cdot k^3}{\eps} \polylog{\frac{n d}{\eps}} }$ for pure-DP.
        \item $\beta = O\Paren{\frac{\sqrt{d} \cdot k^{2.5}}{\eps} \polylog{\frac{n d}{\eps \delta}} }$ for approximate-DP.
    \end{itemize}
    \item \textbf{Size Bound:} The total number of output centers is bounded by $|O_{\le n}| \le O(k^2 \log^2 n)$.
\end{enumerate}
The algorithm processes the stream in total time $\tO{ndk^2} + \Lambda^{O(1)}$ and uses space $\Lambda^{O(1)}$, where
\begin{itemize}
\item $\Lambda = \frac{kd \log n}{\eps}$ for pure-DP.
\item $\Lambda = \frac{kd \log(n/\delta)}{\eps}$ for approximate-DP.
\end{itemize}
\end{lemma}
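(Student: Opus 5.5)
The proof combines three earlier results: \Cref{lem:dpmyersen-privacy} for privacy, \Cref{lem:bicriteria-approx} for accuracy and size (with \Cref{thm:dense-ball-main} and \Cref{thm:cont-counting} supplying its hypotheses), and \Cref{lem:dpmyersen-efficiency} for efficiency.

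\textbf{Privacy.} The random choice of $\estopt$ is independent of the data. For each fixed guess, \Cref{lem:dpmyersen-privacy} applies with $\eps_d = \eps/(4\Psi)$, $\delta_d = \delta/(8\Psi e^{\eps/(2\Psi)})$, $\eps_c=\eps/2$ and $\delta_c=\delta/2$. This gives total parameters
\[
\Bigl(2\Psi\cdot\tfrac{\eps}{4\Psi}+\tfrac{\eps}{2},\; 2e^{2\eps_d}\Psi\delta_d+\tfrac{\delta}{2}\Bigr)=(\eps,\delta),
\]
since $2\eps_d=\eps/(2\Psi)$ exactly cancels the exponential factor in $\delta_d$. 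By convexity (mixing over the data-independent guess), the algorithm is $(\eps,\delta)$-DP.

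\textbf{Accuracy.} Fix $t$ and view the prefix $(x_1,\dots,x_t)$ as a stream padded with $\perp$. Padding does not change $\OPT^{k,p}_{X_{\le t}}$, and the algorithm's behavior on the first $t$ steps depends only on the prefix. Set $\opt_t := \OPT^{k,p}_{X_{\le t}} \le t \le n$.

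\emph{Case $\opt_t \ge 2^{-\lceil\log_2 n\rceil}$.} With probability $1/|\mathcal{G}| = \Omega(1/\log n)$, the sampled guess satisfies $\opt_t/2 < \estopt \le \opt_t$ or is the largest guess, so $\opt_t/\estopt = O(1)$.

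\emph{Case $\opt_t < 2^{-\lceil\log_2 n\rceil}\le 1/n$.} Any guess at the bottom scale gives $\opt_t/\estopt \le 1$, and additive terms of size $\estopt = O(1)$ are absorbed into $\beta$. Here I would use the smallest guess.

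In either case, $\Psi=\Theta(k\log n)$ with a large enough constant satisfies the third hypothesis of \Cref{lem:bicriteria-approx}, because $L=O(\log n)$ once $r_{\min} = (\estopt/n)^{1/p} \ge n^{-2/p}$.

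Next, I instantiate the dense ball algorithm via \Cref{thm:dense-ball-main} with $N=\Psi\tau$ and privacy $\eps_d=\Theta(\eps/(k\log n))$. In the pure-DP case, the requirement is $\tau \gtrsim \sqrt{dN/\eps_d}$ up to polylog factors, i.e. $\tau^2 \gtrsim d\Psi\tau\cdot \Psi/\eps$, which gives $\tau \gtrsim d k^2/\eps \cdot \polylog{}$. This matches the stated value. In the approximate-DP case, $\tau^3 \gtrsim d\Psi\tau\Psi^2/\eps^2$, so $\tau \gtrsim \sqrt d\, k^{1.5}/\eps$, which also matches. The continual counter from \Cref{thm:cont-counting} has accuracy $O(\eps^{-1}\log^{2.5} n)\le\tau/2$ with failure probability $1/\mathrm{poly}(n)$. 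The dense ball calls similarly succeed with failure probability $1/\mathrm{poly}(n)$; this only adds logarithmic factors.

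\Cref{lem:bicriteria-approx} then gives, with probability $1/2$ minus these small failure probabilities, an $(O(1),O(\opt_t+\estopt+\tau kL))$-prefix-approximation. Since $\estopt=O(\opt_t+1)$, this is an $(O(1),\beta)$-prefix-approximation with $\beta=O(\tau k\log n)$. Substituting $\tau$ yields $d k^3/\eps$ for pure-DP and $\sqrt d k^{2.5}/\eps$ for approximate-DP, each times polylog factors. Multiplying by the $\Omega(1/\log n)$ probability of a good guess gives the claimed success probability.

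\textbf{Size and efficiency.} The size bound $|O_{\le n}|\le \Psi^2\Upsilon = O(k^2\log^2 n)$ is immediate from \Cref{lem:bicriteria-approx}, since $\Upsilon=O(1)$. For efficiency I plug into \Cref{lem:dpmyersen-efficiency} with $K=O(k^2\log^2 n)$. The nearest-point computations cost $O(ndK)=\tO{ndk^2}$. The counter costs $O(n\log n)$. The dense ball calls cost $\Psi\cdot(\Psi\tau d\log n)^{O(1)}=\Lambda^{O(1)}$, because $\Psi\tau$ is polynomial in $kd\log(n/\delta)/\eps$. The space bound follows in the same way.

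\textbf{Main obstacle.} I expect the hardest step to be the bookkeeping for the case where $\opt_t$ falls below every guess. The concern is that $L$ and $\tau$, through the $\polylog{1/r_{\min}}$ dependence, must stay polylogarithmic for all guesses simultaneously. This holds because every guess is at least $1/(2n)$, which keeps $r_{\min}\ge \mathrm{poly}(1/n)$.
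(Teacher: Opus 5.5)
Your proposal is correct and follows essentially the same route as the paper. Privacy comes from \Cref{lem:dpmyersen-privacy} (plus mixing over the data-independent guess), accuracy from conditioning on the $\Omega(1/\log n)$-probability event of a correct guess and checking the hypotheses of \Cref{lem:bicriteria-approx} with the same $\Psi$ and $\tau$ calculations, and size/efficiency from \Cref{lem:dpmyersen-efficiency}; your explicit padding of the prefix with $\perp$ only makes precise a step the paper leaves implicit. One cosmetic slip: the $\delta$-term evaluates to $\delta/4+\delta/2=3\delta/4\le\delta$, not exactly $\delta$.
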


\begin{proof}
The privacy guarantee of \Cref{alg:guess-bicriteria} immediately follows from that of $\meyersonalg$ (\Cref{lem:dpmyersen-privacy}).

\vspace{2mm}\noindent\textbf{Accuracy Guarantee (Conditioning on a Correct Guess).}
Fix $t \in [n]$.
Let $\OPT := \OPT^{k,p}_{X_{\leq t}}$ denote the true optimal cost. Since the metric space has diameter $\le 1$ and there are $t \leq n$ points, $\OPT \le n$. Let us define the ``correct guess'' $g^* \in \mathcal{G}$ as follows:
\begin{itemize}
    \item If $\OPT > 1/n$, let $g^*$ be the unique power of 2 in $\mathcal{G}$ such that $g^* < \OPT \le 2g^*$.
    \item If $\OPT \le 1/n$, let $g^* = 2^{-\lceil \log_2 n \rceil} \le 1/n$.
\end{itemize}
Since $\mathcal{G}$ contains exactly $2\lceil \log_2 n \rceil$ elements, the algorithm selects $\estopt = g^*$ with probability exactly $\frac{1}{2\lceil \log_2 n \rceil} = \Omega\Paren{\frac{1}{\log n}}$.  Conditioned on the event that $\estopt = g^*$, we now verify that the parameters assigned strictly satisfy the prerequisites for the prefix-approximation guarantee of the private Meyerson's sketch in \Cref{lem:bicriteria-approx}:
\begin{enumerate}
    \item \textbf{Capacity $\Psi$:} The prerequisite is $\Psi \ge D \cdot kL \Paren{1 + \frac{\OPT}{\estopt}}$ where $L = \Theta\Paren{ \log_2(n/\estopt) }$. Because $g^*$ is the correct guess, we have $\frac{\OPT}{\estopt} \le \frac{\OPT}{g^*} \le 2$ (or $\frac{\OPT}{1/n} \le 1$). 
    Since $\estopt \ge 1/n$, $L \le  \Theta(\log n)$. 
    Thus, it suffices to choose $\Psi = \Theta(k \log n)$ where $\Theta(\cdot)$ hides a sufficiently large constant.
    
    \item \textbf{Threshold $\tau$:} We require both $\countingalg$ and $\densealg$ to succeed with failure probability $\gamma \le 1/(4n)$.
    \begin{itemize}
        \item $\countingalg$ requires an additive error of $\le \tau/2$. Given $\eps_c = \eps/2$, by \Cref{thm:cont-counting}, it suffices to satisfy $\tau \ge \Omega\Paren{\frac{1}{\eps} \cdot \polylog{n}}$. 
        \item $\densealg$ processes datasets of size up to $N = \Psi \cdot \tau$.
        \begin{itemize}
            \item \textbf{For pure-DP:} By \Cref{thm:dense-ball-main}, the subroutine requires $\tau \ge \Theta\Paren{\Paren{\frac{d \cdot (\Psi \tau)}{\eps_d}}^{1/2} \cdot \polylog{\frac{n d}{\eps_d}}}$. Since $\eps_d = \eps / (4\Psi)$, this resolves to:
            \begin{align*}
                \tau^2 \ge \Theta\Paren{\frac{d \Psi \tau}{\eps / \Psi} \cdot \polylog{\frac{n d}{\eps}} } \iff \tau \ge \Omega\Paren{\frac{d \Psi^2}{\eps} \cdot \polylog{\frac{n d}{\eps}} }
            \end{align*}
            Substituting $\Psi = \Theta(k \log n)$, it suffices to take $\tau = \Theta\Paren{\frac{d k^2}{\eps} \cdot \polylog{\frac{n d}{\eps}} }$.
            
            \item \textbf{For approximate-DP:} By \Cref{thm:dense-ball-main}, this requires $\tau \ge \Theta\Paren{\Paren{\frac{d \cdot (\Psi \tau)}{\eps_d^2}}^{1/3} \cdot \polylog{\frac{n d}{\eps_d \delta}}}$. This resolves to:
            \begin{align*}
                \tau^3 \ge \Theta\Paren{\frac{d \Psi \tau}{(\eps/\Psi)^2} \cdot \polylog{\frac{n d}{\eps \delta}} } &\iff \tau^2 \ge \Theta\Paren{\frac{d \Psi^3}{\eps^2} \cdot \polylog{\frac{n d}{\eps \delta}} } \\ &\iff \tau \ge \Theta\Paren{\frac{\sqrt{d} \Psi^{1.5}}{\eps} \cdot \polylog{\frac{n d}{\eps \delta}} }
            \end{align*}
            Substituting $\Psi = \Theta(k \log n)$, it suffices to take $\tau = \Theta\Paren{\frac{\sqrt{d} k^{1.5}}{\eps} \cdot \polylog{\frac{n d}{\eps \delta}} }$.
        \end{itemize}
    \end{itemize}
\end{enumerate}

Since all parameter prerequisites are met, the private Meyerson's sketch guarantees that with probability at least $1/2$, the output $(O_1, \dots, O_t)$ is an $(O(1), \beta)$-prefix-approximation of $(x_1, \dots, x_t)$, where the additive error bound is given by:
\begin{align*}
    \beta_{bicri} = O(\estopt + \tau k L).
\end{align*}
Recall that $\estopt \le \max(\OPT, 1/n)$; thus, the term $O(\estopt)$ scales as $O(\OPT) + O(1/n)$. The $O(\OPT)$ component acts as a constant multiplier absorbing directly into the $O(1)$-multiplicative approximation factor. The $O(1/n)$ term is strictly subsumed by the dominant term $O(\tau k L)$. Thus, the final additive error is purely $\beta = O(\tau k L) = O(\tau k \log n)$. 
Substituting our values for $\tau$:
\begin{itemize}
    \item \textbf{Pure-DP:} $\beta = O\Paren{ \frac{d k^2}{\eps} \cdot \polylog{\frac{n d}{\eps}} \cdot k \log n } = O\Paren{ \frac{d k^3}{\eps} \cdot \polylog{\frac{n d}{\eps}} }$.
    \item \textbf{Approximate-DP:} $\beta = O\Paren{ \frac{\sqrt{d} k^{1.5}}{\eps} \cdot \polylog{\frac{n d}{\eps \delta}} \cdot k \log n } = O\Paren{ \frac{\sqrt{d} k^{2.5}}{\eps} \cdot \polylog{\frac{n d}{\eps \delta}} }$.
\end{itemize}
Conditioned on guessing $g^*$, the algorithm succeeds with probability $1/2$. Thus, the overall unconditional success probability is strictly $\Omega(1/\log n) \cdot \frac{1}{2} = \Omega\Paren{\frac{1}{\log n}}$.

\vspace{2mm}\noindent\textbf{Size Bound and Complexity.}
By the constraints of the private Meyerson's sketch, the total number of output centers $K$ is bounded by $\Psi^2 \cdot \Upsilon$. Because $\Upsilon = O(1)$ for the 1-Cluster to dense ball reduction, we have $K \leq O(\Psi^2) = O(k^2 \log^2 n)$.

The time and space complexity follows by plugging the above parameters into \Cref{lem:dpmyersen-efficiency} with $\atime_c(n) = O(n \log n), \aspace_c(n) = O(n)$ (from \Cref{thm:cont-counting}) and $\atime_d(N) = \aspace_c(N) = N^{O(1)}$ (from \Cref{thm:dense-ball-main}) with $N \leq \Psi \cdot \tau$.
\end{proof}

\subsection{Boosting Probability via Repetition}

Since \Cref{lem:combined-bicriteria-guess} only yields a low-probability guarantee for a single $t \in [n]$, we need to boost its success probability. Fortunately, this can be easily accomplished by repeatedly running the algorithm multiple times. We start by describing the generic reduction below.

\begin{lemma} \label{lem:low-to-high-prob}
Suppose that there is an $(\teps, \tdelta)$-DP algorithm that satisfies the following:
\begin{itemize}
\item Its output stream size is bounded by $B$.
\item Its space and time complexity are $\aspace(n)$ and $\atime(n)$.
\item For each fixed $t \in [n]$, the output stream $(O_1, \dots, O_t)$ is an $(\alpha, \beta)$-prefix-approximation of the input stream $(x_1, \dots, x_t)$ with probability at least $q$.
\end{itemize}
Then, for some $R = O\Paren{\frac{\log(n/\gamma)}{q}}$ there is an $\Paren{R \cdot \teps, R \cdot \tdelta}$-DP algorithm that satisfies the following:
\begin{itemize}
\item Its output stream size is bounded by $R \cdot B$.
\item Its space and time complexity are $O(R) \cdot \aspace(n)$ and $O(R) \cdot \atime(n)$.
\item With probability $1 - \gamma$, for all $t \in [n]$, the output stream $(O_1, \dots, O_t)$ is an $(\alpha, \beta)$-prefix-approximation of the input stream $(x_1, \dots, x_t)$.
\end{itemize}
\end{lemma}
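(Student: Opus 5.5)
The plan is to run $R$ independent copies $\cA^{(1)}, \dots, \cA^{(R)}$ of the given algorithm, each with fresh randomness, on the same input stream. At each step $t$ the combined algorithm outputs the union $O_t := \bigcup_{r \in [R]} O^{(r)}_t$ (as a multiset).

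\textbf{Privacy.} Each copy is $(\teps, \tdelta)$-DP in the adaptive continual release model. On neighboring streams, all copies receive neighboring inputs, so \Cref{lem:concurrent} (concurrent composition) gives $(R\teps, R\tdelta)$-DP for the joint transcript. Taking the union at each step is online post-processing, so \Cref{lem:postproc} finishes this part. The size bound $R\cdot B$ and the complexity bounds $O(R)\cdot \aspace(n)$ and $O(R)\cdot\atime(n)$ are immediate, since the copies just run side by side.

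\textbf{Utility.} The key monotonicity observation is this: if some copy $r$ has $(O^{(r)}_1, \dots, O^{(r)}_t)$ as an $(\alpha,\beta)$-prefix-approximation of $(x_1,\dots,x_t)$, then so does the union stream. Indeed $O^{(r)}_{\le i} \subseteq O_{\le i}$ for every $i \le t$, hence
\[
\dist(x_i, O_{\le i}) \le \dist(x_i, O^{(r)}_{\le i}),
\]
and the prefix cost $\sum_{i\le t}\dist(x_i, O_{\le i})^p$ can only decrease, while the right-hand side $\alpha\,\OPT^{k,p}_{X_{\le t}}+\beta$ is unchanged.

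For a fixed $t$, the copies are independent and each succeeds with probability at least $q$. So the probability that all copies fail at $t$ is at most
\[
(1-q)^R \le e^{-qR} \le \gamma/n
\]
once we take $R = \lceil \ln(n/\gamma)/q \rceil$. A union bound over the $n$ values of $t$ then gives overall failure probability at most $\gamma$.

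The argument is straightforward, but two details need care. The first is that independence across copies uses the oblivious (non-adaptive) utility model, which matches the paper's stated assumption. The second is that the union of multisets must be taken prefix-wise, so that the prefix-approximation definition (which uses only $O_{\le i}$) is preserved. I expect no real obstacle beyond checking that concurrent composition applies with all $R$ copies receiving neighboring streams simultaneously.
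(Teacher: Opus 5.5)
Your proposal is correct and follows the paper's proof: run $R = \lceil \ln(n/\gamma)/q \rceil$ independent copies, output the union, use that adding centers can only lower the prefix cost, and union-bound over $t$ to get failure probability at most $n(1-q)^R \le \gamma$. Your explicit appeal to concurrent composition and online post-processing for the $(R\teps, R\tdelta)$ privacy bound fills in a step the paper leaves implicit.
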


\begin{proof}
This follows by simply running the algorithm $R = \left\lceil \frac{\ln(n/\gamma)}{q} \right\rceil$ times, and outputting the union of all the output. The output size bound, and space and time complexity obviously grows by a factor of $R$. As for the $(\alpha, \beta)$-prefix-approximation guarantee, observe that it holds if at least one of the repetition satisfies the $(\alpha, \beta)$-prefix-approximation. Thus, by a union bound, the probability that it satisfies $(\alpha, \beta)$-prefix-approximation for all $t \in [n]$ is at least $1 - n \cdot (1 - q)^R \geq 1 - \gamma$. 
\end{proof}

Combining \Cref{lem:low-to-high-prob} with \Cref{lem:combined-bicriteria-guess} (where $q = \Omega(1 / \log n)$), we immediately get the following:

\begin{theorem}[High-Probability DP Prefix-Approximation] \label{thm:combined-bicriteria-high-prob}
Given a stream $X = (x_1, \dots, x_n)$ of $n$ points and a target number of centers $k$, there is an $(\eps, \delta)$-DP algorithm whose multiset-stream $(O_1, \dots, O_n)$ satisfies:
\begin{enumerate}
    \item \textbf{Accuracy:} With probability at least $1 - \gamma$, for all $t \in [n]$, the output stream $(O_1, \dots, O_t)$ is an $(O(1), \beta)$-prefix-approximation of the input stream $(x_1, \dots, x_t)$, where the additive error $\beta$ is bounded by:
    \begin{itemize}
        \item $\beta = O\Paren{\frac{d \cdot k^3}{\eps} \polylog{\frac{n d}{\eps \gamma}} }$ for pure-DP.
        \item $\beta = O\Paren{\frac{\sqrt{d} \cdot k^{2.5}}{\eps} \polylog{\frac{n d}{\eps \delta \gamma}} }$ for approximate-DP.
    \end{itemize}
    \item \textbf{Size Bound:} The total number of output centers is bounded by $|O_{\le n}| \le O(k^2 \cdot \polylog{n/\gamma})$.
\end{enumerate}
The algorithm processes the stream in total time $\tO{ndk^2 \cdot \log(1/\gamma)} + \Lambda^{O(1)}$ and uses space $\Lambda^{O(1)}$, where
\begin{itemize}
\item $\Lambda = \frac{kd \log(n/\gamma)}{\eps}$ for pure-DP.
\item $\Lambda = \frac{kd \log\Paren{ \frac{n}{\gamma\delta} } }{\eps}$ for 
approximate-DP.
\end{itemize}
\end{theorem}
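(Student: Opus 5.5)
The plan is to instantiate \Cref{lem:low-to-high-prob} with the algorithm of \Cref{lem:combined-bicriteria-guess}, choosing the per-copy privacy budget so that the composed algorithm is $(\eps,\delta)$-DP. Here $q = \Omega(1/\log n)$, so $R = O(\log n \cdot \log(n/\gamma))$. We run $R$ independent copies of \Cref{alg:guess-bicriteria}, each with privacy parameters $(\teps,\tdelta) = (\eps/R, \delta/R)$, and output at each step the union of their outputs.

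\textbf{Privacy.} \Cref{lem:low-to-high-prob} already gives $(R\teps, R\tdelta) = (\eps,\delta)$-DP for the union. This is basic composition; in the continual-release setting it is justified by \Cref{lem:concurrent}, since the copies run concurrently on the same stream and each is DP by \Cref{lem:combined-bicriteria-guess}. The output is then a post-processing of the joint transcript (\Cref{lem:postproc}).

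\textbf{Accuracy.} Next we substitute $\teps = \eps/R$ and $\tdelta = \delta/R$ into the bounds of \Cref{lem:combined-bicriteria-guess}.
\begin{itemize}
\item Pure-DP: the additive error becomes $O\Paren{\frac{d k^3 R}{\eps}\polylog{\frac{ndR}{\eps}}}$. Since $R = \polylog{n/\gamma}$, this is $O\Paren{\frac{dk^3}{\eps}\polylog{\frac{nd}{\eps\gamma}}}$.
\item Approximate-DP: the factor $R$ and the change $\log(1/\tdelta) = \log(R/\delta)$ are likewise absorbed into $\polylog{\frac{nd}{\eps\delta\gamma}}$.
\end{itemize}
Prefix-approximation is monotone under taking supersets of the output, because $\dist(x_t, Y_{\le t})$ can only decrease. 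So for a fixed $t$ the union is good whenever at least one copy is good for $t$. A union bound over $t \in [n]$, as in the proof of \Cref{lem:low-to-high-prob}, gives success probability $1-\gamma$ for all $t$ simultaneously.

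\textbf{Size and complexity.} The size bound is $R \cdot O(k^2\log^2 n) = O(k^2\polylog{n/\gamma})$. For time and space, we multiply the bounds of \Cref{lem:combined-bicriteria-guess} by $R$:
\begin{itemize}
\item the $\tO{ndk^2}$ term becomes $\tO{ndk^2 \log(1/\gamma)}$, with the $\log n$ factors hidden in $\tilde O$;
\item the $\Lambda^{O(1)}$ term, evaluated with $\eps/R$ and $\delta/R$, becomes $(kd R\log(nR/\delta)/\eps)^{O(1)}$. This is at most $\Lambda^{O(1)}$ with $\Lambda = \frac{kd\log(n/(\gamma\delta))}{\eps}$, since $R \le \Lambda^{O(1)}$.
\end{itemize}

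\textbf{Expected obstacle.} The step needing the most care is the parameter bookkeeping: checking that dividing $\eps$ and $\delta$ by the polylogarithmic $R$ only costs polylog factors everywhere. This includes inside $\tau$, where $\eps$ enters as $1/\eps$ through both the counting accuracy and the dense-ball threshold. It also includes the failure probabilities $\gamma \le 1/(4n)$ used inside each copy, which need no change because each copy only has to succeed with probability $q$ per fixed $t$.
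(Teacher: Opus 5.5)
Your proposal is correct and takes the same route as the paper, which obtains this theorem in one line by plugging \Cref{lem:combined-bicriteria-guess} (with $q = \Omega(1/\log n)$) into \Cref{lem:low-to-high-prob}. Your per-copy budget $(\eps/R, \delta/R)$ with $R = O(\log n \cdot \log(n/\gamma))$, the superset-monotonicity observation, and the check that the factor $R$ only adds polylogarithmic factors are the details that one-line derivation leaves implicit, and they are handled correctly.
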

\section{From Prefix-Approximation to DP (Semi-)Coreset: Proof of \texorpdfstring{\Cref{thm:main-semi-coreset}}{Theorem~\ref{thm:main-semi-coreset}}} \label{subsec:bicriteria-to-coreset}

In this section, we present our algorithm for constructing semi-coreset and prove our main theorem (\Cref{thm:main-semi-coreset}). As explained in \Cref{sec:dp-meyerson-main}, we simply run the online prefix-approximation algorithm and then snap $x_t$ to the closest centers among those produced by the algorithm. We use DP continual counting algorithm to estimate the weight of each center.

\begin{algorithm}[h]
\caption{Private Semi-Coreset $\dpcoreset$}
\label{alg:coreset-main}
\begin{algorithmic}[1]
\REQUIRE Stream $X = (x_1, \dots, x_n)$
\item[\textbf{Parameters:}] Online Prefix-Approximation Algorithm $\bicrialg$; Continual Counting Algorithm $\countingalg$
\FOR{$t = 1, \dots, n$}
\STATE \algcomment{Do the following upon receiving $x_t$}
\STATE Feed $x_t$ to $\bicrialg$ and let $y_t \subseteq U$ be the output
\FOR{$u \in y_t \setminus Y_{\leq t - 1}$}
\STATE $\countingalg^u \gets$ start $\countingalg$ instance for $u$ 
\STATE $e^u_{t-1} \gets 0$
\ENDFOR
\STATE $\tx_t \gets \emptyset$ \hfill \algcomment{Output in this time step}
\IF{$x_t \ne \perp$}
\STATE $f_t \gets$ closest point to $x_t$ in $Y_{\leq t}$ \hfill \algcomment{Ties  broken arbitrarily}
\ELSE
\STATE $f_t \gets \perp$
\ENDIF
\FOR{$u \in Y_{\leq t}$}
\IF{$u = f_t$}
\STATE Feed 1 to $\countingalg^u$ and let $e^u_t$ be the output
\ELSE
\STATE Feed 0 to $\countingalg^u$ and let $e^u_t$ be the output
\ENDIF
\STATE Add $(e^u_t - e^u_{t - 1})$ copies of $u$ to $\tx_t$
\ENDFOR
\RETURN $\tx_t$
\ENDFOR
\end{algorithmic}
\end{algorithm}

\subsection{Properties of the DP Semi-Coreset Algorithm: Privacy and Efficiency}

The privacy and efficiency guarantee of \Cref{alg:coreset-main} are fairly simple to derive and are stated more formally below.

\begin{lemma}[Privacy] \label{lem:coreset-privacy}
If $\bicrialg$ is $(\eps_b, \delta_b)$-DP and $\countingalg$ is $\eps_c$-DP, then \Cref{alg:coreset-main} is $(\eps_b + \eps_c, \delta_b)$-DP.
\end{lemma}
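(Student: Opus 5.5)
We prove the privacy claim for \Cref{alg:coreset-main} in the adaptive continual release model (\Cref{def:adaptivedp}). The output is computed in two stages: first $\bicrialg$ produces the stream $(y_1,\dots,y_n)$, and then a family of counting instances $\countingalg^u$, one for each released center $u$, is fed bits that depend on the input. So \Cref{alg:coreset-main} is a concurrent composition of $\bicrialg$ with the whole family $\{\countingalg^u\}_u$. The plan is to show that $\bicrialg$ costs $(\eps_b,\delta_b)$ and the whole family of counters together costs only $\eps_c$. The second bound comes from parallel composition, because the challenge point affects at most one counter.

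The first step is to view \Cref{alg:coreset-main} as a post-processing of a mechanism $\widehat{\mathcal{A}}$ that at each step releases $y_t$ together with all counter outputs $(e^u_t)_{u\in Y_{\le t}}$. The released multiset $\tx_t$ is a deterministic function of these quantities, so by \Cref{lem:postproc} it suffices to prove that $\widehat{\mathcal{A}}$ is $(\eps_b+\eps_c,\delta_b)$-DP. Given a transcript prefix, the set $Y_{\le t}$ of active counters is determined. For $t\neq i^\ast$ the adversary submits the same $x_t$ in both worlds, so $f_t$ is the same in both worlds, and every counter receives the same bit in both worlds. At the challenge step $t=i^\ast$, say with $x^1_{i^\ast}=\perp$, world $1$ has $f_t=\perp$ and every counter receives $0$. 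World $0$ differs only in that the single counter $\countingalg^{f_t}$ receives $1$ instead of $0$. Hence the input streams to the counters are identical except for one counter, whose stream is a neighbor (one bit flipped, in the sense of \Cref{thm:cont-counting}).

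The second step treats the family $\{\countingalg^u\}_u$ as one continual mechanism. We pad it to a fixed universe of potential counters, started lazily and fed $0$ before they become active. By \Cref{lem:parallel}, with every $\eps_i=\eps_c$ and $\delta_i=0$, this family is $\eps_c$-DP with respect to ``differ in one counter's stream by a neighboring change.''

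The third step composes this family with $\bicrialg$, which is $(\eps_b,\delta_b)$-DP on neighboring input streams. The inputs chosen for the counters are functions of the transcript and the current $x_t$, so the adaptivity allowed in \Cref{lem:concurrent} covers this. The lemma then gives $(\eps_b+\eps_c,\delta_b)$.

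The main obstacle is justifying the parallel step rigorously. Which counter receives the differing bit, namely $f_{i^\ast}$, is itself data-dependent and is only fixed once the challenge point arrives. In addition, the set of counters grows adaptively, which does not match the fixed family of $m$ mechanisms in \Cref{lem:parallel}. I would handle this by noting that the adversary in \Cref{lem:parallel} may be adaptive. The counter identities are indexed by points of $U$ and are determined by the public transcript. The identity of the single affected counter is a function of the adversary's view together with $x^0_{i^\ast}$, which the adversary chose, so it can be treated as the adversary's choice of $i^\ast$ in that lemma. If this is considered too loose, a fallback is to argue directly: fix everything up to step $i^\ast$, note that afterwards only one counter's internal state can differ, and bound the likelihood ratio of that counter's entire future output by $e^{\eps_c}$ using its pure-DP guarantee.
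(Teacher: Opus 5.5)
Your proposal is correct and follows essentially the same route as the paper. You pass via \Cref{lem:postproc} to a richer release that includes $y_t$, view it as the concurrent composition of $\bicrialg$ with the family of counters, note that only $\countingalg^{f_{i^\ast}}$ receives a (neighboring) different bit stream, and combine \Cref{lem:parallel} for the counters with \Cref{lem:concurrent} to get $(\eps_b+\eps_c,\delta_b)$; your extra care about the adaptively determined affected counter and the growing set of counters fills in details the paper glosses over.
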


\begin{proof}
Fix an adversary $\mathcal{B}$. Recall that we use $i^\ast$ to denote the challenge index. 

First, it will be more convenient to consider a modified algorithm $\widehat{\mathcal{A}}$, which is exactly the same as \Cref{alg:coreset-main}, except it releases $(\tx_t, y_t)$ at each step (instead of just $\tx_t$). By
\Cref{lem:postproc}, it suffice to prove that $\widehat{\mathcal{A}}$ is $(\eps_b + \eps_c, \delta_b)$-DP.


We view  $\widehat{\cA}$ as the concurrent composition of $\bicrialg$ and another mechanism $\widehat{\cA}'$ that handles the counters $\{\countingalg^u\}_{u\in Y_{\le n}}$. 
\begin{itemize}
\item For $\bicrialg$, the submitted stream differs only at $i^\ast$. Thus, this is $(\eps_b,\delta_b)$-DP.
\item For $\widehat{\cA}$, conditioned on a transcript prefix $\pi_{t-1} = ((\tx_1, y_1), \ldots, (\tx_{t-1}, y_{t-1}))$ and the output $y_t$ from $\bicrialg$, $Y_{\le t} = \bigcup_{t' \leq t} y_{t'}$ agrees in the two runs. Thus, for $t\ne i^\ast$ the submitted inputs to $\{\countingalg^u\}_{u\in Y_{\le t}}$ agree. For $t = i^*$, assume w.l.o.g. that $x_{i^\ast}^1=\perp$. For the run $b = 1$, all the inputs to the counters are 0. Meanwhile, for $b = 0$, all the inputs remain 0, except for $\countingalg^{f_t^0}$. Hence, we may apply \Cref{lem:parallel} to conclude that this is $\eps_c$-DP. 
\end{itemize}
Finally, applying Lemma~\ref{lem:concurrent} over these two components gives the claimed privacy guarantee.
\end{proof}

\begin{lemma}[Efficiency] \label{lem:coreset-efficiency}
Let $K$ be the maximum number of centers produced by $\bicrialg$ (i.e., $|Y_{\leq n}| \le K$). If $\bicrialg$ runs in time $\mathcal{T}_b(n)$ and space $\mathcal{S}_b(n)$, and each instance of $\countingalg$ runs in time $\mathcal{T}_c(n)$ and space $\mathcal{S}_c(n)$, then Algorithm \ref{alg:coreset-main} runs in time $O(\mathcal{T}_b(n) + n K d + K \cdot \mathcal{T}_c(n))$ and uses space $O(\mathcal{S}_b(n) + K \cdot \mathcal{S}_c(n) + K d)$.
\end{lemma}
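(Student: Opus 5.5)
The plan is to account separately for the work of $\bicrialg$, the per-step work of \Cref{alg:coreset-main} itself, and the work of the counting instances, and then add the three. The only global fact needed is $|Y_{\leq n}| \le K$. It implies that at any time step at most $K$ distinct centers have appeared, and hence at most $K$ instances $\countingalg^u$ are ever started.

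\textbf{Time.} Running $\bicrialg$ on the stream costs $\mathcal{T}_b(n)$ by assumption. At step $t$, the algorithm performs three tasks.
\begin{itemize}
\item It identifies the new points $y_t \setminus Y_{\leq t-1}$ and starts a counter for each. To make the membership test cheap, keep $Y_{\leq t}$ in a dictionary keyed by point; each lookup then costs $O(d)$. Since each $u$ is inserted only once, this totals $O(Kd)$ over the whole run, plus $O(d)$ per output point of $\bicrialg$, which is absorbed into $\mathcal{T}_b(n)$ because $\bicrialg$ must write those points.
\item It computes $f_t$ by a linear scan over $Y_{\leq t}$, which costs $O(Kd)$ per step.
\item It feeds one bit to every active counter and adds $e^u_t - e^u_{t-1}$ copies of $u$ to $\tx_t$. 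The copies can be represented as (point, multiplicity) pairs, so this is $O(K)$ bookkeeping per step beyond the counters' own work.
\end{itemize}
Summing the second and third items over $n$ steps gives $O(nKd)$. Each counter instance processes at most $n$ bits, so all instances together cost at most $K \cdot \mathcal{T}_c(n)$. Adding everything yields $O(\mathcal{T}_b(n) + nKd + K\mathcal{T}_c(n))$.

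\textbf{Space.} The algorithm stores the state of $\bicrialg$, which is $\mathcal{S}_b(n)$. It stores the set $Y_{\leq t}$, which is at most $K$ points in $\R^d$, i.e.\ $O(Kd)$. It stores at most $K$ counter states, each $\mathcal{S}_c(n)$. Finally, it stores the last estimate $e^u_{t-1}$ for each $u$, which is $O(K)$ words. The current output $\tx_t$ is written out as multiplicity pairs and never retained. The total is $O(\mathcal{S}_b(n) + K\mathcal{S}_c(n) + Kd)$.

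\textbf{Main obstacle.} The one delicate point is the counter accounting when instances start late. A counter started at time $t_0$ receives only $n - t_0 + 1 \le n$ bits, so its cost is still bounded by $\mathcal{T}_c(n)$ and the bound $K\mathcal{T}_c(n)$ stands. This relies on $\mathcal{T}_c$ being monotone in the stream length, which holds for the tree-based counter of \Cref{thm:cont-counting}.
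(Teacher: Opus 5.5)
Your proposal is correct and follows the same accounting as the paper: $\mathcal{T}_b(n)$ time and $\mathcal{S}_b(n)$ space for $\bicrialg$, an $O(Kd)$ nearest-center scan per step, at most $K$ counter instances costing $K\cdot\mathcal{T}_c(n)$ time and $K\cdot\mathcal{S}_c(n)$ space, and $O(Kd)$ space to store the centers. Your extra details (the dictionary-based membership test, the multiplicity-pair output, and the monotonicity remark for counters started late) only make explicit bookkeeping that the paper's proof leaves implicit.
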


\begin{proof}
Algorithm \ref{alg:coreset-main} runs $\bicrialg$ across the stream, taking $\mathcal{T}_b(n)$ time and $\mathcal{S}_b(n)$ space. It maintains a set $S$ of at most $K$ points in $\mathbb{R}^d$, requiring $O(Kd)$ space. At each of the $n$ time steps, it computes the distance from $x_t$ to all points currently in $S$, taking $O(Kd)$ time per step. It then feeds an indicator bit to each of the $|Y_{\leq t}| \le K$ active instances of $\countingalg$. Running $K$ counters takes $O(K \cdot \mathcal{T}_c(n))$ time and $O(K \cdot \mathcal{S}_c(n))$ space over the stream. Summing these bounds yields the stated efficiency.
\end{proof}

\subsection{Semi-Coreset Guarantee}

Next, we prove that the output of \Cref{alg:coreset-main} provides an output that is a semi-coreset of the input stream. We state the lemma with generic parameters below.

\begin{lemma}[Semi-Coreset Guarantee] \label{lem:coreset-guarantee}
Suppose that with high probability, $\bicrialg$ provides an $(\alpha_1, \beta_1)$-prefix-approximation for all time steps $t \in [n]$, and $\countingalg$ is $\beta_2$-accurate for all instances and all time steps. Then, with high probability, the output multiset stream $\tX = (\tx_1, \dots, \tx_n)$ forms a $(\kappa, \eta_1, \eta_2)$-semi-coreset of the input $X$ where $\kappa = O(1 + \alpha_1)$, $\eta_1 = O(\alpha_1)$, and $\eta_2 = O(\beta_1 + K \beta_2)$, with $K = |S|$ representing the total number of centers generated up to time $t$.
\end{lemma}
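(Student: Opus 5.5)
We will reduce to \Cref{lem:transport_coreset}: fix $t \in [n]$, condition on the high-probability event that $\bicrialg$ is an $(\alpha_1,\beta_1)$-prefix-approximation at every time step and that every counter instance is $\beta_2$-accurate at every step, and exhibit a mapping $\Psi$ from $X_{\le t}$ to $U$ whose generalized transport cost to $\tX_{\le t}$ is at most $\alpha_1 \cdot \OPT^{k,p}_{X_{\le t}} + O(\beta_1 + K\beta_2)$. \Cref{lem:transport_coreset} then gives the stated $\kappa = O(1+\alpha_1)$, $\eta_1 = O(\alpha_1)$, $\eta_2 = O(\beta_1 + K\beta_2)$ at once, simultaneously for all $t$.

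The natural mapping is the snapping performed by \Cref{alg:coreset-main}: $\Psi(x_i) := f_i$, the closest point to $x_i$ in $Y_{\le i}$ at the time $x_i$ arrived (not the closest in $Y_{\le t}$). With this choice the distance term of \eqref{eq:gen-transport} is exactly $\sum_{i \le t} \dist(x_i, Y_{\le i})^p$. By the prefix-approximation guarantee applied to the prefix $(x_1,\dots,x_t)$, this is at most $\alpha_1 \OPT^{k,p}_{X_{\le t}} + \beta_1$. This is precisely why prefix-approximation, rather than end-of-stream bicriteria, was needed: later-born centers cannot be used. Points $x_i = \perp$ contribute nothing.

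For the weight-mismatch term, note that for each $u \in Y_{\le t}$ the bits fed to $\countingalg^u$ up to time $t$ are exactly $\bone[f_i = u]$ for $i \le t$. This holds because before $u$ is born no $f_i$ equals $u$, and we define the counter to have implicitly received zeros then. Hence $w_X(\Psi^{-1}(u)) = \sum_{i\le t}\bone[f_i=u]$. The weight of $u$ in $\tX_{\le t}$ telescopes to $e^u_t - e^u_{\text{birth}-1} = e^u_t$, since $e^u_{\text{birth}-1}=0$. Accuracy then gives $|w_X(\Psi^{-1}(u)) - w_{\tX_{\le t}}(u)| \le \beta_2$. Points not in $Y_{\le t}$ have zero weight on both sides. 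Summing over at most $K$ centers bounds the second term by $K\beta_2$.

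The main subtlety is exactly this bookkeeping: checking that the counter for $u$, started mid-stream, still counts exactly the preimage of $u$, and that monotone integer estimates make the released copies well defined and nonnegative. One should also confirm that \Cref{lem:transport_coreset}'s use of weighted sets allows $\tX$ weights to differ from $X$'s total mass. Its proof handles this, since only the diameter bound $\dist(y,C)^p\le 1$ is used. Finally, a union bound over the two high-probability events gives the lemma.
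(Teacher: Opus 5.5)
Your proposal is correct and follows the paper's proof essentially step for step. You use the same snapping map $\Psi(x_i) = f_i$ into $Y_{\le i}$, bound the distance term by the prefix-approximation guarantee, bound the weight-mismatch term by $K\beta_2$ via counter accuracy, and conclude with \Cref{lem:transport_coreset}. Your extra bookkeeping fills in details the paper leaves implicit: a counter started when $u$ first appears counts exactly the preimage of $u$, since $f_i \ne u$ before that time, and the released copies telescope to $e^u_t$.
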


\begin{proof}
Fix a time step $t \in [n]$. 
Algorithm \ref{alg:coreset-main} constructs the multiset $\tX_{\le t}$, which consists of points from $Y_{\leq t}$ where the multiplicity of each center $u \in S$ is $e^u_t$, the estimate provided by the respective instance $\countingalg^u$. 
    
We define a natural mapping $\Psi: X_{\le t} \to Y_{\le t}$ that maps each point $x_i$ ($i \leq t$) directly to the closest available center $f_i \in Y_{\le i}$ at the time of its arrival: $\Psi(x_i) = f_i$.
    
Let $c^u_t = \sum_{i=1}^t \mathbf{1}[f_i = u]$ be the exact number of points in $X_{\le t}$ that were mapped to center $u$. The generalized transport cost from the original points $X_{\le t}$ to the estimated coreset $U_{\le t}$ under mapping $\Psi$ is exactly:
\begin{align*}
    mt(\Psi, X_{\le t}, \tX_{\le t}) = \sum_{i=1}^t \dist(x_i, f_i)^p + \sum_{u \in Y_{\le t}} |c^u_t - e^u_t|.
\end{align*}
We bound the two terms separately:
\begin{enumerate}
    \item \textbf{Distance Term:} Since $f_i$ is chosen as the closest point to $x_i$ from the active centers $Y_{\le i}$ output by $\bicrialg$, the distance $\dist(x_i, f_i)^p$ is exactly $d(x_i, Y_{\le i})^p$. By the prefix-approximation guarantee of $\bicrialg$, we sum this over time to get:
    \begin{align*}
        \sum_{i=1}^t \dist(x_i, f_i)^p \leq \alpha_1 \cdot \OPT^{k,p}_{X_{\le t}} + \beta_1.
    \end{align*}
    \item \textbf{Weight Difference Term:} The true frequency of assignment is $c^u_t$, while the constructed coreset uses the estimated frequency $e^u_t$. By the assumed accuracy guarantee of $\countingalg$, we have $|c^u_t - e^u_t| \leq \beta_2$ for all $u \in Y_{\le t}$. Summing this bounded error over all $K = |Y_{\le t}|$ active centers yields:
    \begin{align*}
        \sum_{u \in Y_{\le t}} |c^u_t - e^u_t| \leq K \beta_2.
    \end{align*}
\end{enumerate}
Combining these bounds, the total transport cost strictly satisfies:
\begin{align*}
    mt(\Psi, X_{\le t}, U_{\le t}) \leq \alpha_1 \OPT^{k,p}_{X_{\le t}} + (\beta_1 + K \beta_2).
\end{align*}
Because the metric space has diameter $\le 1$, we can directly apply Lemma \ref{lem:transport_coreset} to this result. Substituting $\alpha = \alpha_1$ and $\beta = \beta_1 + K \beta_2$, it follows immediately that $U_{\le t}$ is a valid $(\kappa, \eta_1, \eta_2)$-semi-coreset for $X_{\le t}$ with parameters $\kappa = O(1 + \alpha_1)$, $\eta_1 = O(\alpha_1)$, and $\eta_2 = O(\beta_1 + K \beta_2)$. 
\end{proof}

\subsection{Putting Things Together: Proof of \texorpdfstring{\Cref{thm:main-semi-coreset}}{Theorem~\ref{thm:main-semi-coreset}}}

\begin{proof}[Proof of \Cref{thm:main-semi-coreset}]
The algorithm is simply to run \Cref{alg:coreset-main} using $\bicrialg$ and $\countingalg$ from \Cref{thm:combined-bicriteria-high-prob} and \Cref{thm:cont-counting} respectively, where the privacy parameters are set to $\eps_b = \eps_c = \frac{\eps}{2}$ and $\delta_b = \delta$. The privacy guarantee immediately follows from \Cref{lem:coreset-privacy}.

Recall also that the size bound by \Cref{thm:combined-bicriteria-high-prob} is $K = O(k^2 \cdot \polylog{n/\gamma})$. Plugging this (and the complexity bounds from \Cref{thm:combined-bicriteria-high-prob,thm:cont-counting}) into \Cref{lem:coreset-efficiency} yields the claimed time and space complexity.

Finally, plugging $K = O(k^2 \cdot \polylog{n/\gamma})$ and the accuracy guarantees from \Cref{thm:combined-bicriteria-high-prob,thm:cont-counting} into \Cref{lem:coreset-guarantee} yields the semi-coreset guarantee. (Note that the error term is dominated by $\beta_1$ in both pure-DP and approximate-DP.) Finally, the number of distinct points in the output is at most $K$.
\end{proof}

\end{document}